    \tikzset{>=latex}  
\newcommand{\mypar}[1]{\paragraph{#1}}
\newtheorem{theorem}{Theorem}[section]
\newtheorem{definition}[theorem]{Definition}
\newtheorem{lemma}[theorem]{Lemma}
\newtheorem{proposition}[theorem]{Proposition}
\newtheorem{corollary}[theorem]{Corollary}
\newtheorem{observation}{Observation}
\newtheorem{remark}{Remark}
\newcommand{\alg}[1]{\textsc{#1}}
\newcommand{\eps}{\ensuremath{\epsilon}}
\newcommand{\logstar}{\ensuremath{\log^*}}
\newcommand{\myfrac}[2]{\nicefrac{#1}{#2}\,}
\newcommand{\CONGEST}{\ensuremath{\mathsf{CONGEST}}\xspace}
\newcommand{\LOCAL}{\ensuremath{\mathsf{LOCAL}}\xspace}
\newcommand{\set}[1]{\left\{#1\right\}}
\newcommand{\congest}{\CONGEST}
\DeclareMathOperator{\polylog}{polylog}
\DeclareMathOperator{\poly}{poly}
\DeclareMathOperator{\argmin}{argmin}
\DeclareMathOperator{\ID}{ID}
\begin{document}

\title{Distance-2 Coloring in the CONGEST Model}
\author{Magn\'us M. Halld\'orsson\thanks{ICE-TCS \& Department of Computer Science, Reykjavik University, Iceland. Partially supported by Icelandic Research Fund grant 174484-051.}  \and Fabian Kuhn \and Yannic Maus\footnote{Supported by the European Union's Horizon 2020 Research And  Innovation Programme under grant agreement no. 755839.}}

\date{\today}

\maketitle

\begin{abstract}
    We give efficient randomized and deterministic distributed algorithms for computing a distance-$2$ vertex coloring of a graph $G$ in the \CONGEST model. In particular, if $\Delta$ is the maximum degree of $G$, we show that there is a randomized \CONGEST model algorithm to compute a distance-$2$ coloring of $G$ with $\Delta^2+1$ colors in $O(\log\Delta\cdot\log n)$ rounds. Further if the number of colors is slightly increased to $(1+\eps)\Delta^2$ for some $\eps>1/\polylog n$, we show that it is even possible to compute a distance-$2$ coloring deterministically in $\polylog n$ time in the \CONGEST model. Finally, we give a $O(\Delta^2 + \log^* n)$-round deterministic \CONGEST algorithm to compute distance-$2$ coloring with $\Delta^2+1$ colors.
\end{abstract}

\section{Introduction} 
\label{sec:intro}
We study the distance-$2$ coloring problem in the standard distributed \CONGEST model. Given a graph $G=(V,E)$, in the distance-$2$ coloring problem on $G$ (in the following just called \emph{d2-coloring}), the objective is to assign a color $x_v$ to each node $v\in V$ such that any two nodes $u$ and $v$ at distance at most $2$ in $G$ are assigned different colors $x_u\neq x_v$. Equivalently, d2-coloring asks for a coloring of the nodes of $G$ such that for every $u\in V$, all the nodes in the set $\set{u}\cup N(u)$ (where $N(u)$ denotes the set of neighbors of $u$) are assigned distinct colors. Further note that d2-coloring on $G$ is also equivalent to the usual vertex coloring problem on the graph $G^2$, where $V(G^2)=V$ and there is an edge $\set{u,v}\in E(G^2)$ whenever $d_G(u,v)\leq 2$. 

The \CONGEST model is a standard synchronous message passing model~\cite{peleg00}. The graph on which we want to compute a coloring is also assumed to form the network topology. Each node $u\in V$ has a unique $O(\log n)$-bit identifier $\ID(u)$, where $n=|V|$ is the number of nodes of $G$. Time is divided into synchronous rounds and in each round, every node $u\in V$ of $G$ can do some arbitrary internal computation, send a (potentially different) message to each of its neighbors $v\in N(u)$, and receive the messages sent by its neighbors in the current round. If the content of the messages is not restricted, the model is known as the \LOCAL model~\cite{linial92,peleg00}. In the \CONGEST model, it is further assumed that each message consists of at most $O(\log n)$ bits. 

As our main result, we give an efficient $O(\log\Delta \log n)$-time randomized algorithm for d2-coloring $G$ with at most $\Delta^2+1$ colors, where $\Delta$ is the maximum degree of $G$. Further, we show that with slightly more colors, a similar result can also be achieved deterministically: We give a deterministic $\polylog n$-time algorithm to d2-color $G$ with $(1+\eps)\Delta^2$ colors for any $\eps>1/\polylog n$. 
Before discussing our results in more detail, we first discuss what is known for the corresponding coloring problems on $G$ and why it is challenging to transform \CONGEST algorithms to color $G$ into \CONGEST algorithms for d2-coloring. 

The distributed coloring problem is arguably the most intensively studied problem in the area of distributed graph algorithms and certainly also one of the most intensively studied problems in distributed computing more generally. The standard variant of the distributed coloring problem on $G$ asks for computing a vertex coloring with at most $\Delta+1$ colors. Note that such a coloring can be computed by a simple sequential greedy algorithm. In the following, we only discuss the work that is most relevant in the context of this paper, for a more detailed discussion of related work on distributed coloring, we refer to \cite{barenboimelkin_book,chang18_coloring,kuhn20_coloring}.

The $(\Delta+1)$-coloring problem was first studied in the parallel setting in the mid 1980s, where it was shown that the problem admits $O(\log n)$-time parallel solutions~\cite{alon86,luby86}. These algorithms immediately also lead to $O(\log n)$-round distributed algorithms, which even work in the \CONGEST model. In fact, even the following most simple algorithm $(\Delta+1)$-colors a graph $G$ in $O(\log n)$ rounds in the \CONGEST model: Initially all nodes are uncolored. The algorithm runs in synchronous phases, where in each phase, each still uncolored node $v$ chooses a uniform random color among its available colors (i.e., among the colors that have not already been picked by a neighbor) and $v$ keeps the color if no of its uncolored neighbors tries the same color at the same time~\cite{johansson99,BEPS12}.

Generally, the main focus in the literature on distributed coloring has been on the \LOCAL model, where by now the problem is understood relatively well. It was an important problem for a long time if there are similarly efficient deterministic algorithms for the distributed coloring problem (see, e.g., \cite{linial92,barenboimelkin_book,stoc17_complexity,derandomization_FOCS18}). This question was very recently resolved in a breakthrough paper by Rozho\v{n} and Ghaffari~\cite{RG19}, who showed that $(\Delta+1)$-coloring and many other important distributed graph problems have polylogarithmic-time deterministic algorithms in the \LOCAL model. The best randomized $(\Delta+1)$-coloring algorithm known 
in the \LOCAL model is by Chang, Li, and Pettie~\cite{chang18_coloring}, who show that the problem can be solved in time $\poly\log\log n$.\footnote{In \cite{chang18_coloring}, the complexity is given as $2^{O(\sqrt{\log\log n})}$. The improvement to $\poly\log\log n$ immediately follows from the recent paper by Rozho\v{n} and Ghaffari~\cite{RG19}. The same is true for the $n$-dependency in the \CONGEST model paper by Ghaffari~\cite{ghaffari19}, which is discussed below.} If the maximum degree $\Delta$ is small, the best known (deterministic) algorithm has a complexity of $O(\sqrt{\Delta\log\Delta}\cdot\log^*\Delta + \log^* n)$~\cite{fraigniaud16,BEG18}. We note that the $\log^* n$ term is known to be necessary due to a classic lower bound by Linial~\cite{linial92}. 

\paragraph{From coloring to d2-coloring.} While most existing distributed coloring algorithms were primarily developed for the \LOCAL model, several of them directly also work in the \CONGEST model (e.g., the ones in \cite{alon86,luby86,linial92,johansson99,Kuhn2006On,barenboim10,BEK15,barenboim15,BEG18,kuhn20_coloring}). There is also some recent work, which explicitly studies distributed coloring in the \CONGEST model. In \cite{ghaffari19}, Ghaffari gives a randomized $(\Delta+1)$-coloring algorithm that runs in $O(\log\Delta) + \poly\log\log n$ rounds in the \CONGEST model. For the \CONGEST model, this is the first improvement over the simple randomized $O(\log n)$-round algorithms from the 1980s. Further, in another recent paper~\cite{det_congest_coloring}, by building on the recent breakthrough in the \LOCAL model~\cite{RG19}, it is shown that it is also possible to deterministically compute a $(\Delta+1)$-coloring in $\polylog n$ time in the \CONGEST model.

In the \LOCAL model, a single communication round on $G^2$ can be simulated in $2$ rounds on $G$ and therefore the distributed coloring problem on $G^2$ is at most as hard as the corresponding problem on $G$.\footnote{Note that not every graph $H$ is the square $G^2$ of some graph $G$ and thus, the coloring problem on $G^2$ might be easier than the coloring problem on $G$.} In the \CONGEST model, the situation changes drastically and it is no longer generally true that a \CONGEST algorithm on $G^2$ can be run at a small additional cost on the underlying graph $G$. In general, simulating a single \CONGEST round on $G^2$ requires $\Omega(\Delta)$ \CONGEST rounds on $G$. Note that even the very simple algorithm where each node picks a random available color cannot be efficiently used for d2-coloring as it is in general not possible to keep track of the set of colors chosen by some $2$-hop neighbor in time $o(\Delta)$. In some sense, our main technical contribution is an efficient randomized \CONGEST algorithm (on $G$) that implements this basic idea of iteratively trying a random color until all nodes are colored.

\paragraph{Why d2-coloring?} Distributed d2-coloring is an interesting and important problem for several reasons. It is fundamental in wireless networking, where nodes with common neighbors \emph{interfere} with each other. Computing a frequency assignment such that nodes with the same frequency do not interfere with each other therefore corresponds to computing a d2-coloring of the communication graph \cite{KMR01}. Computing a coloring in a more powerful model (\congest) than it would be used in (wireless channels) is in line with current trends towards separation of control plane and data plane in networking. The d2-coloring problem also occurs naturally when single-round randomized algorithms are derandomized using the method of conditional expectation~\cite{derandomization_FOCS18}. Further, d2-coloring forms the essential part of \emph{strong coloring} hypergraphs, where nodes contained in the same hyperedge must be colored differently. One natural setting is when the nodes form a bipartite graph, with, say, ``task'' nodes on one side and ``resource'' nodes on the other side. We want to color the task nodes so that nodes using the same resource receive different colors.

Finally, we can also view d2-coloring and other problems on $G^2$ as a way of studying \emph{communication capacity constraints} on nodes, where communication must go through intermediate relays. In fact, d2-coloring in \CONGEST is of special interest as it appears to lie at the edge of what is computable efficiently, i.e., in polylogarithmic time. Many closely related problems are either very easy or quite hard. The \emph{distance-$k$ maximal independent set} problem can easily be solved in $O(k\log n)$ time using Luby's algorithm~\cite{alon86,luby86}. The distance-3 coloring problem, however, appears to be hard. There is a simple reduction from the hardness of the $2$-party set disjointness problem~\cite{kalyanasundaram92,razborov92} to show that the closely related problem of verifying whether a given distance-$3$ coloring is valid requires $\Omega(\Delta)$ rounds, even on graphs where $\Delta=\Theta(n)$ (just think of a tree consisting of an edge $\set{a,b}$ and with $(n-2)/2$ leaf nodes attached to both $a$ and $b$). In fact, the classic set disjointness lower bound proof of Razborov~\cite{razborov92} implies that even verifying validity of a uniformly random coloring is hard.

\subsection{Contributions}

We provide different \CONGEST model algorithms to compute a d2-coloring of a given $n$-node graph $G=(V,E)$. If $\Delta$ is the maximum degree of $G$, the maximum degree of any node in $G^2$ is at most $\Delta + \Delta\cdot(\Delta-1)=\Delta^2$. As a natural analog to studying $(\Delta+1)$-coloring on $G$, we therefore study the problem of computing a d2-coloring with $\Delta^2+1$ colors. Although, there are extremely simple $O(\log n)$-time randomized algorithms for $(\Delta+1)$-coloring $G$, transforming similar ideas to d2-coloring turns out to be quite challenging. Our main technical contribution is an efficient randomized algorithm to d2-color $G$ with $\Delta^2+1$ colors. 

\begin{theorem}
\label{thm:d2ColoringRand}
There is a randomized \CONGEST algorithm that d2-colors a graph with $\Delta^2+1$ colors in $O(\log \Delta \log n)$ rounds, with high probability.
\end{theorem}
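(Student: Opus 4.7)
The plan is to implement, in \CONGEST on $G$, the classical ``try a random color from your palette'' loop of Johansson~\cite{johansson99} / \cite{BEPS12} applied to the conflict graph $G^2$. We partition the algorithm into $O(\log n)$ phases; in each phase, every still-uncolored node $v$ tentatively draws a color $c_v$ from a \emph{candidate palette} $\mathcal{P}(v) \subseteq [\Delta^2+1]$ it maintains, and permanently commits to $c_v$ iff no other node within distance two of $v$ holds or tentatively holds $c_v$. The goal is constant success probability per uncolored node per phase, so that a standard Chernoff/union-bound argument over $O(\log n)$ phases colors all nodes with high probability.

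For the probabilistic bound it suffices to maintain a Johansson-style \emph{slack} condition: $|\mathcal{P}(v)|$ should exceed the number $d_v$ of currently uncolored $2$-hop neighbors of $v$ by at least a constant factor. Such a palette exists information-theoretically, since $v$ has at most $\Delta^2$ nodes within distance two, of which at most $\Delta^2 - d_v$ are already permanently colored, leaving a free palette of size $\geq d_v + 1$. The difficulty is that $v$ cannot afford to know this free set exactly — pushing every new permanent color to every 2-hop neighbor through intermediates would cost $\Omega(\Delta)$ rounds.

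Each phase is implemented in $O(\log \Delta)$ \CONGEST rounds and splits into two parts: (a)~the trial itself, and (b)~palette maintenance. Part~(a) is cheap: in one round $v$ broadcasts $c_v$ to its $\leq \Delta$ neighbors; each intermediate $u$ then sees, locally, the tentative colors of its uncolored neighbors and the permanent colors of its already-colored neighbors, detects every potential collision, and returns accept/reject to each of its neighbors in one more round. Part~(b) is where the $O(\log\Delta)$ factor enters. The plan is to avoid maintaining the exact palette of each $v$ and instead keep a small randomly \emph{sampled sub-palette} (of size polylogarithmic in $n$) drawn approximately uniformly from $[\Delta^2+1]$; only blocking information relevant to a vertex's current sub-palette needs to cross an edge. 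An intermediate $u$ can thus, for each of its uncolored neighbors $v$, transmit only those recent permanent colors of $u$'s other neighbors that lie in $v$'s sub-palette — and using hashing/bucketing of colors into $O(\log\Delta)$ groups one can arrange each sub-palette entry to be tested within $O(\log\Delta)$ rounds. Because the sub-palette is sampled approximately uniformly, its fraction of free colors concentrates around that of the true palette, so the slack condition carries over w.h.p.

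The main obstacle will be designing this sub-palette subroutine so that all three desiderata hold simultaneously: (i)~correctness — every sub-palette color believed free really is free at the moment of commitment; (ii)~bandwidth — all blocking information relevant to the sub-palette fits into $O(\log\Delta)$ rounds per phase on every edge, amortized over the $O(\log n)$ phases; and (iii)~quality — the sub-palette retains enough free colors that the slack inequality $|\mathcal{P}(v)| \geq (1+\Omega(1))\, d_v$ survives the sampling, so that the Johansson-style analysis gives constant per-phase success probability. The delicate regime is when $d_v$ is small but $b_v = \Delta^2 - d_v$ is close to $\Delta^2$ (free colors are a tiny fraction of all colors); here the sub-palette must be biased towards colors that have not yet appeared as blocking colors at any of $v$'s intermediates, and the analysis must show that the bias can be produced without revealing too much 2-hop information. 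Once the subroutine is in place, combining $O(\log n)$ phases of $O(\log\Delta)$ rounds each, together with a union bound over nodes and phases, yields the claimed $O(\log\Delta\cdot\log n)$-round bound w.h.p., proving \Cref{thm:d2ColoringRand}.
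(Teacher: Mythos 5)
Your proposal is not a proof; it reduces \Cref{thm:d2ColoringRand} to the design of a ``sub-palette subroutine'' satisfying your desiderata (i)--(iii), and that subroutine is exactly the hard part of the theorem, which you leave unconstructed. Worse, in the regime you yourself flag as delicate, the plan as stated provably fails: once most of $v$'s d2-neighborhood is colored, the number of free colors equals $v$'s leeway, which drops to $O(\log n)$ out of $\Delta^2+1$ by the end of the process. A sub-palette of polylogarithmic size drawn (approximately) uniformly from $[\Delta^2+1]$ then contains an expected $\polylog(n)\cdot O(\log n)/\Delta^2\ll 1$ free colors, so the slack condition $|\mathcal{P}(v)|\ge(1+\Omega(1))d_v$ cannot ``carry over w.h.p.''\ by any concentration argument --- the expectation itself is far below $1$. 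To guarantee even one free color in the sample you would need $|\mathcal{P}(v)|=\Omega(\Delta^2/\log n)$, at which point the bandwidth accounting that was supposed to give $O(\log\Delta)$ rounds per phase collapses. Biasing the sample toward free colors is equivalent to (approximately) learning the free set, i.e., $\Omega(\Delta^2)$ bits of 2-hop information per node, which is the original obstruction; asserting that ``the analysis must show that the bias can be produced'' does not discharge it.

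For comparison, the paper does not maintain per-node palettes at all in the hard regime. It first shows (via the Elkin--Pettie--Su sparsity/slack lemma) that any node whose leeway becomes small must be \emph{dense}, i.e., its d2-neighborhood is close to a clique in $G^2$. Density is then exploited in two ways your outline has no analogue of: (a) in \alg{Reduce}, colored nodes with highly similar d2-neighborhoods (the similarity graphs $H,\hat H$) test random colors \emph{on the live node's behalf} and also forward colors of nodes in $N_{H^2}(v)\setminus N_{G^2}(v)$ to combat false negatives, with a careful load-balancing argument (Lemmas \ref{L:survival}--\ref{L:progress}) to get per-phase success probability $\Theta(\tau/\phi)$; and (b) in \alg{LearnPalette}, once leeway is $O(\log n)$, density implies the \emph{complement} of the used color set restricted to non-$H$-neighbors has size $O(\log n)$, so a meet-in-the-middle routing scheme with $\Delta$ handlers per live node can deliver the exact remaining palette in $O(\log n)$ rounds. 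These structural facts about dense neighborhoods are what make the information about free colors compressible; without them (or a substitute), your phases (b) cannot be implemented. As it stands the proposal restates the problem rather than solving it.
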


We outline the key ideas and challenges involved at the start of Sec.~\ref{sec:randAlg}. 

In addition to the randomized algorithm for computing a d2-coloring, we also provide two deterministic algorithms for the problem. The first one is obtained by a relatively simple adaptation of an $O(\Delta+\log^*n)$-time $(\Delta+1)$-coloring \CONGEST algorithm on $G$ to the d2-coloring setting~\cite{BEG18}.

\begin{theorem}
\label{thm:d2ColoringDelta}
There is a deterministic \CONGEST algorithm that d2-colors a graph with $\Delta^2+1$ colors in $O(\Delta^2+\logstar n)$ rounds.
\end{theorem}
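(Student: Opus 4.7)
The plan is to adapt the $O(\Delta+\log^*n)$-round \CONGEST algorithm for $(\Delta+1)$-coloring of $G$ from~\cite{BEG18} to the d2-coloring setting, essentially by replacing ``$G$'' with its square $G^2$ (whose maximum degree is at most $\Delta^2$) and absorbing the extra factor of $\Delta$ caused by the \CONGEST simulation of 2-hop communication on $G$. The algorithm has two phases: Phase~1 computes an auxiliary d2-coloring $\chi$ of $G$ with $K=O(\Delta^2)$ colors, and Phase~2 iterates through $\chi$'s color classes to assign a final color from $[\Delta^2+1]$ to every node.

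For Phase~1 I would simulate the algorithm of~\cite{BEG18} on $G^2$. Since that algorithm, when applied to a graph of maximum degree $\Delta'$, produces an $O(\Delta')$-coloring in $O(\Delta'+\log^* n)$ rounds, running it on $G^2$ yields a d2-coloring of $G$ with $K=O(\Delta^2)$ colors in $O(\Delta^2+\log^* n)$ ``rounds on $G^2$''. To realize each such round in \CONGEST on $G$, every node $w$ collects the current colors of its at most $\Delta$ 1-hop neighbors in $G$ and forwards them, one at a time, to each of its other 1-hop neighbors; this takes $O(\Delta)$ \CONGEST rounds per round of the simulated algorithm. A careful accounting---exploiting that the $\log^* n$ part of~\cite{BEG18} only requires purely local ID-based computation---shows that the total cost of Phase~1 in \CONGEST on $G$ remains $O(\Delta^2+\log^* n)$.

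For Phase~2 I would process the color classes of $\chi$ in order $1,2,\ldots,K$, using two \CONGEST rounds per class. In round $2c-1$, every node $v$ with $\chi(v)=c$ selects a final d2-color in $[\Delta^2+1]$ that avoids the at most $\Delta^2$ colors already chosen by its 2-hop neighbors, and broadcasts this color to its 1-hop neighbors. In round $2c$, every node $w$ that received a new color from a class-$c$ neighbor $u$ forwards $u$'s color to each of its other 1-hop neighbors. A valid color is always available because $|N^2(v)|\le\Delta^2$; two same-$\chi$-class nodes cannot conflict because $\chi$ is a d2-coloring and hence they are at $G$-distance at least $3$; and a short induction on $c$ shows that at the start of round $2c-1$ every class-$c$ node already knows the final d2-colors of all its 2-hop neighbors in classes $1,\ldots,c-1$. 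Phase~2 therefore takes $2K=O(\Delta^2)$ rounds, and the total is $O(\Delta^2+\log^* n)$.

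The main technical subtlety is the bandwidth argument in Phase~2's forwarding round: a single forwarding opportunity per edge must suffice for each node $w$ to relay the chosen color of its (at most one) class-$c$ neighbor. This uses the d2-coloring property of $\chi$ crucially, since it forces $w$'s 1-hop neighbors to have pairwise distinct $\chi$-values, so at most one of them lies in class $c$; if $\chi$ were only a proper coloring of $G$, $w$ could have several same-class neighbors and the single forwarding round would be overloaded. The most delicate part of Phase~1, and the place I would write most carefully, is to verify that the \CONGEST simulation of~\cite{BEG18} on $G^2$ does not blow up the $\log^* n$ term by an extra factor of $\Delta$---this depends on the specific structure of that algorithm and is where the ``relatively simple adaptation'' the theorem statement alludes to must be done with some care.
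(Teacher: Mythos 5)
Your Phase~2 is sound and is essentially the paper's own color-reduction step (Theorem~\ref{thm:delta2_simple_color}): the observation that a d2-coloring $\chi$ guarantees at most one class-$c$ node per 1-hop neighborhood is exactly what makes the single forwarding round per class work. The problem is Phase~1. You simulate each round of the $O(\Delta'+\log^* n)$-round algorithm of \cite{BEG18} on $G^2$ (where $\Delta'=\Delta^2$) by having every node collect and relay all of its neighbors' colors, at a cost of $O(\Delta)$ \CONGEST rounds per simulated round. That gives $O(\Delta)\cdot O(\Delta^2+\log^* n)=O(\Delta^3+\Delta\log^* n)$, and the issue is not only the $\log^* n$ term you flag: the dominant $O(\Delta')=O(\Delta^2)$ iterations of the color-reduction part each pay the $\Delta$-factor simulation overhead, yielding $\Delta^3$. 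The ``careful accounting'' you invoke cannot rescue this as stated, because those iterations genuinely exchange color information with 2-hop neighbors.

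The paper closes this gap in two ways that your write-up would need to incorporate. First, the $O(\Delta^2)$ locally-iterative phases are not simulated generically: in each phase a node only needs to \emph{try} one candidate color, which is implementable in $O(1)$ rounds on $G$ (broadcast the candidate to 1-hop neighbors, who each check it against the candidates and permanent colors of \emph{their} neighbors and report back); the paper's self-contained version uses a sequence of colors given by degree-$\le 1$ polynomials over $\mathbb{F}_q$, $4\Delta^2<q<8\Delta^2$, so that each d2-neighbor blocks at most two of the $q$ phases (an alternative is the $O(\sqrt{\Delta'})$-round variant of \cite{barenboim15,BEG18}, for which the $\Delta$-factor overhead gives $O(\Delta\cdot\sqrt{\Delta^2})=O(\Delta^2)$). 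Second, the Linial stage producing the initial $O(\Delta^4)$-coloring of $G^2$ is handled by pipelining its first two iterations in $O(\Delta)$ rounds and then observing that, once the palette has size $O(\log\log n)$, the entire list of a node's $\Delta$ neighbor-colors fits into a single $O(\log n)$-bit message, so the remaining $\log^* n$ iterations cost one round each; this is a concrete message-packing argument, not merely ``purely local ID-based computation.''
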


Our second deterministic algorithm is more involved. From a high-level view, it uses ideas similar to several recent \CONGEST results ~\cite{det_congest_coloring,CPS17,DKM19,DISC18_DomSet}: 
With the algorithm of \cite{RG19}, one decomposes the graph into clusters of $\polylog n$ diameter that the problem can essentially be solved separately on each cluster (incurring a polylogarithmic overhead). On each cluster, one then uses the method of conditional expectation to efficiently derandomize a simple zero-round randomized algorithm. Unlike the algorithms in \cite{det_congest_coloring,CPS17,DKM19,DISC18_DomSet}, we do not use this general strategy to directly solve (a part of) the problem at hand (d2-coloring in our case). Instead, we apply the above strategy to implement a variant of the splitting problem discussed in \cite{BambergerGKMU19,stoc17_complexity}. By applying the splitting problem recursively, we partition the nodes $V$ into $\Delta/\polylog n$ parts such that a) we can use disjoint color palettes for the different parts, and b) we can efficiently simulate \CONGEST algorithms on $G^2$ on each of the parts (and these \CONGEST simulations can also efficiently be run in parallel on all the parts). By using slightly more colors, we can then also compute a d2-coloring in $\polylog n$ time deterministically.

\begin{theorem}[Simplified]
\label{thm:G2withSplitting}
For any fixed constant $\eps>0$, there is a deterministic \CONGEST algorithm that d2-colors a graph with $(1+\eps)\Delta^2$ colors in $\polylog n$ rounds.
\end{theorem}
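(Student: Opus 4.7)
My plan is to follow the blueprint described just above the theorem: design a deterministic \CONGEST algorithm for a distance-$2$ \emph{splitting} primitive, apply it recursively to partition $V$ into $K=\Delta/\polylog n$ parts with balanced distance-$2$ neighborhoods, and then d2-color each part in parallel with its own disjoint palette.

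For the splitting primitive, given $U\subseteq V$ and a slack $\delta>0$, I would produce a 2-partition $U=U_0\sqcup U_1$ such that for every $v\in V$ and $b\in\{0,1\}$, $|N_2(v)\cap U_b|\le(1/2+\delta)\cdot|N_2(v)\cap U|$, and simultaneously $|N(v)\cap U_b|\le(1/2+\delta)\cdot|N(v)\cap U|$. A uniformly random assignment satisfies both inequalities with high probability whenever the relevant neighborhoods have size $\Omega(\log n/\delta^2)$, by Chernoff. To derandomize in \CONGEST, I would apply the pipeline of~\cite{RG19,det_congest_coloring,DKM19}: compute a $\polylog n$-strong-diameter network decomposition of a conflict graph of constant $G$-radius (two bits interact only if their owners share a common distance-$2$ neighbor), and within each cluster apply the method of conditional expectation to a standard Chernoff-type pessimistic estimator. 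The estimator decomposes additively over per-vertex constraints, so updating it after each bit flip requires only aggregation over $N_2(v)$, which fits into $\polylog n$ \CONGEST rounds on $G$.

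Applying the splitting recursively $\log K$ times with per-level slack $\delta=\Theta(\eps/\log K)$ makes the multiplicative losses telescope to $(1+\eps)$, producing a partition $V=P_1\sqcup\cdots\sqcup P_K$ in which every node has at most $(1+\eps)\Delta^2/K$ distance-$2$ neighbors in each $P_j$ and at most $(1+\eps)\Delta/K=\polylog n$ direct neighbors in each $P_j$. Assigning the disjoint palette $[(j-1)(1+\eps)\Delta^2/K+1,\,j(1+\eps)\Delta^2/K]$ to $P_j$ uses exactly $(1+\eps)\Delta^2$ colors across all parts. Each $P_j$ can then be d2-colored by invoking the polylogarithmic deterministic \CONGEST $(\Delta+1)$-coloring algorithm of~\cite{det_congest_coloring} on $G^2[P_j]$; its round complexity is $\polylog n$ in $G^2[P_j]$-rounds, and each such round can be simulated on the real graph $G$ in $\polylog n$ real rounds, because the distance-$1$ balancing guarantees that each real edge transports at most $\polylog n$ messages per simulated round.

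The main obstacle I anticipate is orchestrating the $K$ per-part simulations so that they complete within the $\polylog n$ total budget: executed naively in parallel, the per-edge load across parts can grow to the full degree $\Delta$; executed sequentially, the total cost becomes $K\cdot\polylog n=\Delta$. I expect this to be resolved either by a more careful scheduling that exploits both the distance-$1$ and distance-$2$ balancing of the splitting so that the $\Delta$ per-edge load is spread over $K$ disjoint color classes of the decomposition rather than $K$ consecutive time slots, or by recursing further inside each part with the same splitting procedure until the sub-parts are small enough that parallel simulation is trivially cheap. Once this scheduling is in place, summing the polylogarithmic work at each of the $O(\log\Delta)$ recursion levels yields the claimed $\polylog n$ \CONGEST round complexity.
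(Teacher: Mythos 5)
Your overall architecture (recursive deterministic splitting via network decomposition plus conditional expectation, disjoint palettes per part, simulation of a coloring algorithm on each $G^2[P_j]$) is the same as the paper's. But there are two concrete problems. First, your splitting primitive additionally demands that the \emph{distance-2} neighborhoods be balanced, $|N_2(v)\cap U_b|\le(1/2+\delta)|N_2(v)\cap U|$, and you assert that the pessimistic estimator for this constraint ``requires only aggregation over $N_2(v)$, which fits into $\polylog n$ \CONGEST rounds on $G$.'' This is exactly the step the paper identifies as the obstruction and deliberately avoids: the conditional expectation of a per-vertex failure indicator for a d2-constraint depends on the identities, cluster memberships, and seed prefixes of all $\Theta(\Delta^2)$ d2-neighbors of $v$, which $v$ cannot gather through $\Delta$ edges of bandwidth $O(\log n)$ in $\polylog n$ rounds; and under the $k$-wise independence induced by per-cluster seeds the estimator does not factor into per-neighbor terms that intermediate nodes could pre-aggregate. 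The good news is that the d2-balance constraint is unnecessary: balancing only the \emph{direct} degrees into each part (the paper's ``local refinement splitting,'' whose conditional expectations depend only on $v$'s immediate neighborhood) already yields $\Delta(G^2[P_j])\le\Delta\cdot\Delta_h\approx(1+\eps)\Delta^2/K$, since every node has at most $\Delta$ neighbors, each with at most $\Delta_h$ neighbors in $P_j$. You should drop the d2-constraint rather than try to derandomize it.

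Second, the ``main obstacle'' you flag at the end — that running the $K$ per-part simulations in parallel could drive the per-edge load up to $\Delta$ — is not actually an obstacle, but you leave it unresolved with two speculative fixes, so the argument is incomplete as written. The resolution (the paper's simulation lemma) is that the destination of every relayed message determines its part: when a relay $v$ forwards round-$r$ messages to a neighbor $u\in P_j$, those messages originate only from $N_G(v)\cap P_j$, of which there are at most $\Delta_h$ by the distance-1 balance; symmetrically, a sender $x\in P_j$ injects at most $\Delta_h$ messages into each incident edge, namely those addressed to $N_G(v)\cap P_j$. Hence one simulated round of \emph{all} $K$ algorithms costs $O(\Delta_h)=\polylog n$ real rounds with no dependence on $K$, and no extra scheduling or further recursion is needed. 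With these two repairs — distance-1-only splitting, and the per-destination load accounting — your plan goes through and matches the paper's proof; your use of the $\polylog n$-round $(\Delta+1)$-coloring of $G^2[P_j]$ as the endgame is a legitimate alternative to the paper's choice of recursing with its own $(1+\eps)\Delta$-coloring subroutine.
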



The remainder of the paper is structured as follows. In \Cref{sec:randAlg}, we present our randomized algorithm and prove \Cref{thm:d2ColoringRand}, our main technical result. In \Cref{S:det-diam}, we present our deterministic algorithms, proving \Cref{thm:d2ColoringDelta,thm:G2withSplitting}. Note that because of space restrictions, many of the proofs appear in an appendix. 


\section{Randomized Algorithm}
\label{sec:randAlg}
We give randomized {\congest} algorithms that form a d2-coloring using $\Delta^2+1$ colors. We use the prominent space at the beginning of the section to introduce notation that we use frequently throughout the proofs in this section.

\paragraph{Notation}
The \emph{palette} of available colors is $[\Delta^2] = \{0,1,2,\ldots, \Delta^2\}$. 
The neighbors in $G$ of a node are called \emph{immediate neighbors}, while the neighbors in $G^2$ are \emph{d2-neighbors}.
For a (sub)graph $K$, let $N_K(v)$ denote the set of neighbors of $v$ in $K$, and let $K[v] = K[N_K(v)]$ denote the subgraph induced by these neighbors.
A node is \emph{live} until it becomes \emph{colored}.

A node has \emph{slack $q$} if the number of colors of d2-neighbors plus the number of live d2-neighbors is $\Delta^2+1-q$. In other words, a node has slack $q$ if its palette size is an additive $q$ larger than the number of its uncolored $d2$-neighbors. 
The \emph{leeway} of a node is its slack plus the number of live d2-neighbors; i.e., it is the number of colors from the palette that are not used among its d2-neighbors.
During our algorithms nodes do not know their leeway and we only use the notion for the analysis.

When we state that an event holds \emph{w.h.p.} (with high probability), we mean that for any $c > 0$, we can choose the constants involved so that the event holds with probability $1 -O(n^{-c})$. 

\subsection{Overview : Coloring 'With a Little Help From My Friends'}

As explained in the introduction, the simple approach for coloring $G$ -- for each node to guess a random color that is currently not used among any of its neighbors -- fails for d2-coloring because the nodes do not have enough bandwidth to learn the colors of their d2-neighbors. 

Instead, nodes can certainly \emph{try} a random color from the whole palette. The node's immediate neighbors can maintain their immediate neighbors colors, and thus can answer if a certain color conflicts with the current coloring (or other colors being tried).
This works well in the beginning, until most of the node's neighbors are colored.
If the palette has $(1+\epsilon)\Delta^2$ colors, then this approach alone succeeds in $O(\log_{1/\epsilon} n)$ rounds, but for a $\Delta^2+1$-coloring, we must be more parsimonious. 
If each neighborhood is sparse, then the first round will result in many d2-neighbors successfully using the same color. 
This offers us then the same slack as if we had a larger palette in advance, as proved formally by Elkin, Pettie and Su \cite{EPS15}, resulting in the same logarithmic time complexity.
The challenge is then to deal with dense neighborhoods, of varying degrees of sparsity, defined formally for each node as the average non-degree of the subgraph induced by its neighborhood in $G^2$.
We tackle this with the algorithm \alg{Reduce}, that successfully colors all nodes in a given range of color slack (and by extension, sparsity range).

The basic idea behind the \alg{Reduce} algorithm is to have the colored nodes "help" the \emph{live} (i.e., yet uncolored) nodes by checking random colors on their neighborhoods.
%
We can obtain some intuition from the densest case: a $\Delta^2+1$-clique (in $G^2$).
We can recruit the colored nodes to help the live nodes guess a color: if it succeeds for the colored node, it will also succeed for the live node. Each of the $\ell$ live nodes can be allocated approximately $\Delta^2/\ell$ colored node helpers, and in each round, with constant probability, one of them successfully guesses a valid color. This reduces the number of live nodes by a constant factor, leading to a $O(\log n)$ time complexity.

The challenge in more general settings is that the nodes no longer have identical (closed) d2-neighborhoods, so a successful guess for one node does not immediately translate to a successful color for another node. To this end, we must deal with two types of errors.
A \emph{false positive} is a color that works for a colored node $w$ but not for its live d2-neighbor $v$, while a \emph{false negative} is a color that fails for the colored node but succeeds for the live node. It is not hard to conceive of instances where there are no true positives.

The key to resolving this is to use only advice from nodes that have highly \emph{similar} d2-neighborhoods.
This is captured as a relationship on the nodes: the similarity graph $H\subseteq G^2$. We also use another similarity graph $\hat{H}\subseteq G^2$, with a higher threshold for similarity (in terms of number of common d2-neighbors).
To combat false negatives, we also try colors of similar nodes that are not d2-neighbors of the live node but have a common (and similar) $H$-neighbor with the live node, i.e., we try the colors of nodes in $N_{H^2}(v) \setminus N_{G^2}(v)$.

Additional challenges and pitfalls abound. We must carefully balance the need for progress with the load constraints on each node or edge. Especially, the efforts of the live nodes are a precious resource, but we must allow for their distribution to be decidedly non-random.  
In addition, there are differences between working on 2-paths in $G$ and on edges in $G^2$: there can be multiple 2-paths between d2-neighbors. This can confound seemingly simple tasks such as picking a random d2-neighbor.

Once bounds on sparsity and slack drop below logarithmic, concentration results fail to hold. 
Finishing up becomes the bottleneck of the whole algorithm.
For this, we introduce an improved algorithm. The key is that there is now sufficient bandwidth for the remaining live nodes to learn the \emph{complement} of the set of colors of their d2-neighbors: the colors that they \emph{don't use}. Though there is no obvious way for them to discover that alone, they can again get help from the colored node in tallying the colors used. This becomes a different problem of outsourcing and load-balancing, but one that is aided by the extreme denseness of the parts of the graph that are not yet fully colored. We explain this in more detail in Sec.~\ref{ssec:improved}.
Once the palette is known, the rest of the algorithm is like for the basic randomized algorithm for coloring $G$, since the nodes can maintain an up-to-date view of the colors of their d2-neighbors.

\subsection{Algorithm Description}

We now outline our top-level algorithm, followed by the main routine, \alg{Reduce} and details on the implementation. 

Recall that a node $v$ \emph{trying} a color means that it sends the color to all its immediate neighbors, who then report back if they or any of their neighbors were using (or proposing) that color.
If all answers are negative, then $v$ adopts the color. 

In what follows, $c_0$, $c_1$ are constants satisfying $c_0 \le 3e/c_1$, $c_1 \le 1/(402 e^3)$. Also, $c_2$ is a sufficiently large constant needed for concentration.

\begin{quote}
   \textbf{Algorithm} \alg{d2-Color}

   0. If $\Delta^2 < c_2 \log n$ then \alg{Deterministic-d2Color($G$)}; halt \\
   1. Form the similarity graphs $H$ and $\hat{H}$  \hspace{1cm }\textit{// Initial Phase} \\
   2. repeat $c_0 \log n$ times: \\
   \hspace*{2em} Each live node picks a random color and \emph{tries} it. \\
   3. for ($\tau \leftarrow c_1 \Delta^2$; $\tau > c_2 \log n$; $\tau \leftarrow \tau/2$) \hspace{3cm }\textit{// Main Phase}\\
\hspace*{2em}      \alg{Reduce}($2 \tau$, $\tau$) \\
    4. \alg{Reduce}($c_2 \log n$, 1)\hspace{5.75cm}\textit{// Final Phase}
\end{quote}

For low-degree graphs, we use in Step 0 the deterministic algorithm from Sec.~\ref{ssec:sumDeltaDeta}. The similarity graphs $H$ and $\hat{H}$ that are constructed in Step 1 are  used later  (in \alg{Reduce}) to decide which nodes assists whom. The point of Step 2 is to reduce the initial number of live nodes down to a small fraction of each neighborhood. We can then apply the main algorithm, \alg{Reduce}, to progressively reduce the leeway of live nodes (by coloring them or their neighbors).

We let $c_3$ be a sufficiently large constant to be determined. 
\medskip

   \textbf{Algorithm} \alg{Reduce}($\phi$, $\tau$)
   
    \emph{Precondition}: Live nodes have leeway less than $\phi$, where $c_2\log n \le \phi \le c_1\Delta^2$
    
    \emph{Postcondition}: Live nodes have leeway less than $\tau$

\begin{quote}
   Each node $u$ selects a multiset $R_u$ of $\rho \doteq c_3 (\phi/\tau)^2 \log n$ random $H$-neighbors (with replacement) \\
    Repeat $\rho$ times: \\
    \hspace*{2em} Each live node is \emph{active} independently with probability $\tau/(8\phi)$ \\
    \hspace*{2em} \alg{Reduce-Phase}($\phi,\tau$) 
\end{quote}

The selection of random $H$-neighbors needs care and is treated in the following subsection.
\alg{Reduce}($\phi,\tau$) ensures that all nodes with a certain range of leeway get colored, which implicitly ensures that the number of live nodes in each neighborhood goes down as well. To avoid too much competition between live nodes, only a fraction of them participate in any given phase.
    
   \textbf{Algorithm} \alg{Reduce-Phase}($\phi$, $\tau$)
\begin{enumerate}
\itemsep 0em 
  \item Each active live node $v$ sends a query across each 2-path to $\hat{H}$-neighbors independently with probability $1/(6000\phi)$.
  \item The recipient $u$ of a query $(v,u)$ verifies that there is only a single 2-path from $v$, and otherwise drops the message.
  \item $u$ picks a random color $\hat{c}$ different from its own and checks if it is used by any of its $H$-neighbors. If not, it sends the color back to $v$ as a proposal.
  \item $u$ also forwards the query to the next uniformly random $H$-neighbor $w$ from its list $R_u$, with $w$ appended to the query.
  \item Upon receipt of query $(v,u,w)$, node $w$ checks if $v$ is a d2-neighbor; if not, the color $c(w)$ of $w$ is sent to $v$ (through $u$).
  \item The active live node $v$ tries a color chosen uniformly random among the proposed colors (if any).
  \end{enumerate}
At each step along the way, a node receiving multiple queries selects one of them at random and drops the others. This can only occur after both rounds of Step 1, first round of Step 2, or second round of Step 4.

\alg{Reduce-Phase} ensures that all active live nodes (with leeway between $\tau$ and $\phi$) get colored with a "constant" probability (i.e., a constant times $\tau/\phi$). This is achieved by each live node recruiting a large subset of its similar d2-neighbors to try random colors (in Step 3). This is a probabilistic filter that reduces the workload of the live nodes. These neighbors also check the colors of their neighbors (in Step 5) to see if those might be suitable for the live node. The key idea is that one of these forms of assistance is likely to be successful, and that it is possible to share the load effectively.

\mypar{Implementation}
Additional details for specific steps of \alg{Reduce-Phase}:

\emph{Step 1:}
When sending a query along 2-paths in Step 1, the node $v$ simply asks its immediate neighbors to send the queries to all of their immediate neighbors that are $H$-neighbors of $v$, with the given probability.

\emph{Step 2:}
Verifying that there is only a single path from $v$ is achieved by asking $u$'s immediate neighbors how many are neighbors of $v$. 

\emph{Step 3:} Checking if a color is used by an $H$-neighbor is identical to trying a color, but having the immediate neighbors only taking into account the colors of $u$'s $H$-neighbors.

We detail in the following subsection how $R_u$, the collection of random $H$-neighbors, is generated in Step 4 in time proportional to its size. 
Steps 5 and 6 of \alg{Reduce-Phase} are straightforward to implement. We note that a query from a live node $v$ maintains a full routing path to $v$, so getting a proposal back to $v$ is simple.

\mypar{Complexity}
For low-degree graphs ($\Delta^2 = O(\log n)$), we use the deterministic algorithm of \Cref{thm:d2ColoringDelta}, which runs in $O(\Delta^2 + \log^* n) = O(\log n)$ rounds.
We show in the next subsection that the first step of \alg{d2-Color} takes $O(\log n)$ rounds, w.h.p.
The second step clearly takes $\Theta(\log n)$ rounds.

The procedure \alg{Reduce-Phase} takes 23 rounds, or 2 (Step 1), 4 (Step 2), 4 (Step 3), 2 (Step 4), 6 (Step 5), and 5 (Step 6, including the notification of a new color).
Thus, the round complexity of \alg{Reduce} (including the time to generate $R_u$) is proportional to the number of iterations of the loop, or $O((\phi/\tau)^2 \log n)$.
It follows that all the steps of \alg{d2-Color} run in $O(\log n)$ time, except the last step, i.e., $\alg{Reduce}(c_2\log n,1)$, that requires $O(\log^3 n)$ time.
Since we also show that at the end every vertex is colored, w.h.p., we obtain the following result.

\begin{corollary}
  There is a randomized {\congest} algorithm to d2-color with $\Delta^2+1$ color in $O(\log^3 n)$ rounds, w.h.p.
  \label{C:first-rand-result}
\end{corollary}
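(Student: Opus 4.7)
The plan is to combine the round counts of the steps in \alg{d2-Color} and chain the stated pre/postcondition guarantees of \alg{Reduce} to show that the algorithm terminates with a valid d2-coloring in $O(\log^3 n)$ rounds, w.h.p. In the low-degree case ($\Delta^2 < c_2\log n$), Step~0 invokes the deterministic algorithm of \Cref{thm:d2ColoringDelta} in $O(\Delta^2 + \log^* n) = O(\log n)$ rounds and halts. Otherwise, Step~1 builds the similarity graphs in $O(\log n)$ rounds, Step~2 performs $c_0 \log n$ rounds of single-color trials, Step~3 makes $O(\log(\Delta^2/\log n)) = O(\log\Delta)$ calls to \alg{Reduce}($2\tau,\tau$), each costing $O((2\tau/\tau)^2\log n) = O(\log n)$ rounds, and Step~4 calls \alg{Reduce}($c_2\log n,1$) at cost $O((c_2\log n)^2 \cdot \log n) = O(\log^3 n)$ rounds. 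Step~4 dominates, so the total is $O(\log^3 n)$.

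For correctness, any coloring produced is automatically a valid d2-coloring: the \emph{try} operation ensures that a node commits to a color only once its immediate neighbors certify it is unused in their closed neighborhoods, so no two d2-neighbors ever settle on the same color. It therefore suffices to show that every node is colored w.h.p. Step~2 is designed to drive the leeway of every live node below $c_1 \Delta^2$, matching the precondition of the first call in Step~3. Each \alg{Reduce}($2\tau,\tau$) call then enforces its postcondition (leeway $<\tau$) w.h.p., which is exactly the precondition of the next call with $\tau$ halved. After Step~3, every live node has leeway $<c_2\log n$, matching the precondition of Step~4, and the final call of \alg{Reduce} drops leeway below $1$, i.e., colors every remaining node. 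A union bound over the $O(\log\Delta + \log\log n)$ phases preserves the high-probability guarantee.

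The main technical obstacle is verifying that \alg{Reduce}($\phi,\tau$) achieves its postcondition w.h.p., which is where essentially all the analytic weight of this section lies. Concretely, one must show that in a single \alg{Reduce-Phase}, an active live node $v$ receives at least one valid color proposal---either a successful random guess from a similar d2-neighbor (Step~3 of the phase) or the existing color of a sufficiently similar non-d2-neighbor discovered by the 2-hop forwarding (Step~5)---with probability $\Omega(\tau/\phi)$. This requires a careful balance of several competing forces: false positives and false negatives must be controlled via the similarity graphs $H$ and $\hat{H}$; the single-2-path filter in Step~2 of the phase must keep the query load on relay nodes $u$ and bridge nodes $w$ within \congest bandwidth; and the activation probability $\tau/(8\phi)$ must be large enough for progress yet small enough to prevent live-node contention. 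Iterating over $\rho = c_3(\phi/\tau)^2 \log n$ phases and invoking a Chernoff bound then yields leeway $<\tau$ at every live node w.h.p. Once this lemma for \alg{Reduce} is established, the corollary follows from the complexity and chaining arguments above.
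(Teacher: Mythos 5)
Your proposal is correct and matches the paper's argument: the same step-by-step round accounting (with the final $\alg{Reduce}(c_2\log n,1)$ call dominating at $O((\log n)^2\cdot\log n)=O(\log^3 n)$), combined with chaining the pre/postconditions of \alg{Reduce} via \Cref{T:reduce-correct} and the observation that leeway $<1$ forces every node to be colored. You correctly identify that the real analytic weight sits in \Cref{T:reduce-correct} (via \Cref{L:progress}), which the paper likewise treats as the separately proved ingredient from which the corollary follows.
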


Outline of the rest of this section: In Sec.~\ref{ssec:similarity} we describe the remaining supporting steps of the algorithms, in Sec.~\ref{ssec:dense} we derive key structural properties of non-sparse graphs, and in Sec.~\ref{ssec:correctness} we prove the correctness of the algorithms. 
Then, in Sec.~\ref{ssec:improved}, we present an improved algorithm that replaces the last step of \alg{d2Color} to reduce the overall time complexity to $O(\log \Delta \log n)$, giving our main result, Thm.~\ref{thm:d2ColoringRand}.

\subsection{Support Functions : Similarity Graphs and Random Neighbor Selection}
\label{ssec:similarity}
We describe here in more details the support tools and property used in our algorithm.
This includes the formation of the similarity graph, and the selection of random d2-neighbors. 

\mypar{Forming the similarity graphs} 
%
%
We form the \emph{similarity graph} $H = H_{2/3}$ on the nodes of $V = V(G)$, where nodes are adjacent only if they are d2-neighbors and have at least $2\Delta^2/3$ d2-neighbors in common.
This is implemented in the sense that each node knows:
a) whether it is a node in $H$, and
b) which of its immediate neighbors are adjacent in $H$.
If a node has no neighbor in $H$, we consider it to be not contained in $H$.

When $\Delta^2 = O(\log n)$, each node can gather its set of d2-neighbors and forward it to its immediate neighbors in $O(\log n)$ rounds. The immediate neighbors can then determine which of its immediate neighbors share
at least $2\Delta^2/3$ common d2-neighbors, which defines $H$.
We focus from now on the case that $\Delta^2 \ge c_{10}\log n$, for appropriate constant $c_{10}$.

To form $H$, each node chooses independently with probability $p = c_{10}(\log n)/\Delta^2$ whether to enter a set $S$. Nodes in $S$ inform their d2-neighbors of that fact. For each node $v$, let $S_v$ be the set of d2-neighbors in $S$. W.h.p., $|S_v| = O(\log n)$ (by Prop.~\ref{P:chernoff}). Each node $v$ informs its immediate neighbors of $S_v$, by pipelining in $O(\log n)$ steps.
Note that a node $w$ can now determine the intersection $S_v \cap S_u$, for its immediate neighbors $v$ and $u$.
Now, d2-neighbors $u$, $v$ are $H$-neighbors iff $|S_v \cap S_u| \ge \myfrac{5}{6} c_{10} \log n$. 
\begin{theorem}
Let $k \in \{3,6\}$.
Let $u,v$ be d2-neighbors. 
If $(u,v) \in H_{1-1/k}$ (i.e., if $|S_v \cap S_u| \ge (1-1/(2k)) c_{10} \log n$), then they share at least $(1-1/k) \Delta^2$ common d2-neighbors, w.h.p.,
while if $(u,v) \not\in H$, then they share fewer than $(1-1/(4k)) \Delta^2$ common d2-neighbors, w.h.p.
\label{T:similarity}
\end{theorem}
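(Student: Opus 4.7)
The plan is to apply Chernoff bounds to each pair of d2-neighbors and then union-bound over all such pairs. Fix d2-neighbors $u,v$ and let $T_{uv} = |N_{G^2}(u) \cap N_{G^2}(v)|$ be their true number of common d2-neighbors. Since each vertex enters $S$ independently with probability $p = c_{10}(\log n)/\Delta^2$, we have $S_v \cap S_u = S \cap N_{G^2}(u) \cap N_{G^2}(v)$, whose cardinality is distributed as $\mathrm{Bin}(T_{uv}, p)$ with mean $\mu_{uv} = T_{uv}\cdot p$. Denote the sample-side threshold defining membership in $H_{1-1/k}$ by $M := (1-1/(2k))c_{10}\log n$. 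Both halves of the theorem will be proved in contrapositive form.

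For the first implication I show that $T_{uv} < (1-1/k)\Delta^2$ forces $|S_v\cap S_u| < M$ w.h.p. Under this hypothesis $\mu_{uv} < (1-1/k)c_{10}\log n$, and by stochastic monotonicity of the binomial upper tail it suffices to treat the boundary case where $\mu_{uv}$ equals this bound. Then $M/\mu_{uv} = (1-1/(2k))/(1-1/k) = 1 + 1/(2k-2)$, and the multiplicative Chernoff bound $\Pr[X \ge (1+\delta)\mu] \le \exp(-\delta^2\mu/3)$, applied with the positive constant $\delta = 1/(2k-2)$ (i.e.\ $1/4$ for $k=3$ and $1/10$ for $k=6$), yields failure probability $n^{-\Omega(c_{10})}$. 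The second implication is handled symmetrically: if $T_{uv} \ge (1-1/(4k))\Delta^2$ then $\mu_{uv} \ge (1-1/(4k))c_{10}\log n$, and the lower-tail bound $\Pr[X \le (1-\delta)\mu] \le \exp(-\delta^2\mu/2)$ with $\delta = 1 - (1-1/(2k))/(1-1/(4k)) = 1/(4k-1)$ (a positive constant for $k\in\{3,6\}$) again gives $n^{-\Omega(c_{10})}$.

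Finally, since there are at most $n\cdot\Delta^2 \le n^3$ pairs of d2-neighbors, a union bound over all pairs -- together with choosing $c_{10}$ to be a sufficiently large constant -- establishes both claims simultaneously with probability $1-n^{-c}$ for any desired $c>0$. The point demanding care is that the two ``true common-neighborhood'' thresholds $(1-1/k)\Delta^2$ and $(1-1/(4k))\Delta^2$ differ only by a constant multiplicative factor, so the Chernoff deviation $\delta$ is only a (small) constant for $k\in\{3,6\}$; however, because the Chernoff exponent scales as $\delta^2 c_{10}\log n$ and $c_{10}$ is free to grow, this fixed constant gap translates into arbitrarily small polynomial failure probability, which is exactly what the union bound requires.
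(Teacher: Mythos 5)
Your proposal is correct and follows essentially the same route as the paper's proof: model $|S_u\cap S_v|$ as a binomial with mean $p\cdot T_{uv}$, prove both implications in contrapositive form via the multiplicative Chernoff upper and lower tails with constant relative deviations, and absorb the union bound over pairs by taking $c_{10}$ large. The only difference is cosmetic — the paper writes out the numerics only for $k=3$ (deviations $1/4$ and $1/11$), whereas you carry the general $k$ symbolically, obtaining $\delta=1/(2k-2)$ and $\delta=1/(4k-1)$, which specialize to the paper's constants.
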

The proof uses Chernoff bounds and is deferred to the appendix. 

We also form the graph $\hat{H} = H_{5/6}$ in an equivalent manner.
For $H_{1-1/k}$, the condition used by the algorithm becomes $|S_v \cap S_u| \ge (1-k/2)c_{10} \log n$ and the case of when $(u,v) \not\in H_{1-1/k}$ is when they share fewer than $(1-k/4)\Delta^2$ common neighbors.

\mypar{Selecting random $H$-neighbors}
We detail how the multiset $R_u$ of uniformly random $H$-neighbors is form, at the start of \alg{Reduce}.
We repeat the following procedure $\rho$ times, to create a list $R_u$ of $\rho$ random $H$-neighbors at each node:
Each node $u$ that receives a query creates a $4\log n$-bit random string $b_u$, and transmits it to all its immediate neighbors. Each node $w$ also picks a $4\log n$-bit random string $r_w$ and sends to immediate neighbors. Now, each immediate node $u'$ computes the bitwise XOR $x_{uw}$ of each string $b_u$ and each string $r_w$ that it receives, where $u$ and $w$ are $H$-neighbors. It forwards $r_w$ to $u$ if and only if the first $2\log \Delta - c_{11} \log\log n$ bits of $x_{uw}$ are zero. 
The node $u$ then selects the $H$-neighbor $w$ with the smallest XORed string $b_u \oplus r_w$.

\begin{lemma}
A multiset $R_u$ of independent uniformly random $H$-neighbors of node $u$ can be generated in $O(|R_u|+\log n)$ rounds. 
\label{L:rand-nbor}
\end{lemma}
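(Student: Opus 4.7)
The plan is to establish two properties: each entry of $R_u$ is a uniform sample from $N_H(u)$ with the entries mutually independent across the $|R_u|$ repetitions, and the entire protocol completes in $O(|R_u|+\log n)$ \CONGEST rounds. Fix a node $u$ and consider one iteration. Since each $r_w$ with $w\in N_H(u)$ is an independent uniform string in $\{0,1\}^{4\log n}$ and $b_u$ is independent and uniform, the values $\{b_u\oplus r_w\}_{w\in N_H(u)}$ are i.i.d.\ uniform on $\{0,1\}^{4\log n}$. Hence $\argmin_{w}(b_u\oplus r_w)$ is uniformly distributed over $N_H(u)$, with no ties by a birthday bound since $|N_H(u)|\le \Delta^2\ll 2^{4\log n}$. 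The filter retains only those $w$'s whose XOR has at least $2\log\Delta-c_{11}\log\log n$ leading zeros; the minimum of $|N_H(u)|$ i.i.d.\ uniform strings has $\log|N_H(u)|\pm O(\log\log n)$ leading zeros w.h.p., so a sufficiently large $c_{11}$ ensures the true argmin passes the filter with probability $1-n^{-\Omega(1)}$. Using fresh randomness in each of the $\rho=|R_u|$ iterations makes the $\rho$ drawn samples mutually independent.

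For the round complexity, the broadcasts of $b_u$ and of each $r_w$ cost $O(1)$ rounds per iteration (the $4\log n$-bit strings fit into $O(1)$ \CONGEST messages), and the XOR and filtering are local at $u'$. The only nontrivial contribution comes from step (d), in which $u'$ forwards the surviving $r_w$'s across the edge $(u',u)$. For any fixed such edge, the number of $H$-neighbors of $u$ also adjacent to $u'$ is at most $\Delta$, and each generates a forwarded message independently with probability $p=\log^{c_{11}} n/\Delta^2$. Over $\rho$ iterations this is a sum of at most $\rho\Delta$ independent Bernoullis with mean at most $\rho\log^{c_{11}} n/\Delta$. In the regime in which \alg{d2-Color} actually invokes this subroutine ($\Delta=\Omega(\sqrt{\log n})$), a suitable choice of $c_{11}$ keeps this mean at $O(\rho+\log n)$, and a Chernoff bound yields concentration to $O(\rho+\log n)$ per incident edge, w.h.p. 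Because the iterations use independent randomness, we pipeline all $\rho$ iterations over each \CONGEST edge, giving $O(|R_u|+\log n)$ rounds overall.

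The main obstacle is choosing $c_{11}$ to simultaneously (i) keep the filter wide enough that the true argmin survives w.h.p., and (ii) keep the filter tight enough that the per-edge load over $\rho$ iterations stays in $O(\rho+\log n)$. The tiny-degree regime $\Delta^2=O(\log n)$ is bypassed by the deterministic algorithm invoked in Step~0 of \alg{d2-Color}, and a mildly larger $\Delta$ can be handled by gathering all of $N_H(u)$ in $O(\log n)$ rounds and sampling locally, exactly as done for the construction of $H$ itself earlier in Sec.~\ref{ssec:similarity}. The interesting case is the rest, where the Chernoff argument must combine a low-expectation tail for a single iteration with an additive $O(\log n)$ slack across iterations; reconciling this with the argmin-survival bound is the technical core of the proof.
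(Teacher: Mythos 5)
Your uniformity and independence arguments are essentially the paper's own: i.i.d.\ uniform XORed strings, argmin is uniform, no ties by a collision bound, fresh randomness across the $\rho$ repetitions. One point you share with the paper but should make explicit: the claim that the true argmin passes the filter requires $|N_H(u)|=\Omega(\Delta^2)$ (the minimum of $m$ uniform strings has only about $\log m$ leading zeros, so if $u$ had few $H$-neighbors, \emph{nothing} would pass a filter calibrated to $2\log\Delta$). This is fine for the nodes where $R_u$ is actually consumed, since by Lemma~\ref{L:H-neighbors}(2) every relevant $u$ has at least $\Delta^2/3$ $H$-neighbors, but it is the hypothesis that makes the survival bound go through.

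The round-complexity part (which the paper's own proof does not actually argue) contains a quantitative gap. With forwarding probability $p=(\log n)^{c_{11}}/\Delta^2$ and the survival requirement forcing $c_{11}\ge 1$, the per-edge load over $\rho$ iterations is $\Theta(\rho\,(\log n)^{c_{11}}/\Delta)$. This exceeds $O(\rho+\log n)$ throughout the range $\sqrt{\log n}\lesssim\Delta\lesssim(\log n)^{c_{11}}$, not only for ``tiny'' $\Delta$; your gather-everything fallback covers only $\Delta=O(\log n)$, so for $c_{11}>1$ the window $\log n\ll\Delta\ll(\log n)^{c_{11}}$ is covered by neither of your two cases, and the claim that ``a suitable choice of $c_{11}$ keeps this mean at $O(\rho+\log n)$'' for all $\Delta=\Omega(\sqrt{\log n})$ is false as stated. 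The repair is to calibrate the filter so that its acceptance probability is $\Theta((\log n)/\Delta^2)$ (a threshold of $2\log\Delta-\log\log n-O(1)$ bits, which is how the paper's appendix proof actually computes the expectation, rather than the $c_{11}\log\log n$ slack in the algorithm text): then the expected number of survivors at $u$ is still $\Theta(\log n)$, so the argmin passes w.h.p., while the per-edge load becomes $O(\rho(\log n)/\Delta)=O(\rho)$ for $\Delta=\Omega(\log n)$, and your $O(\Delta)$-round gathering fallback handles $\Delta=O(\log n)$. With that adjustment your two-regime argument closes.
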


\subsection{Properties of Dense Subgraphs}
\label{ssec:dense}

The example of the clique at the start of this subsection shows that dense subgraphs have the advantage that the views of the nodes are homogeneous. The advantage of sparse subgraphs is that they will invariable have slack, as shown by the following result of \cite{EPS15}.

We frequently work with nodes that are both sparse enough and of small enough leeway. 

\begin{definition}
A node $v$ is \emph{$\zeta$-sparse} (or \emph{has sparsity} $\zeta$) if $G^2[v]$ contains $\binom{\Delta^2}{2} - \Delta^2\cdot \zeta$ edges. 
$v$ is \emph{solid} if it has leeway $\phi \le c_1 \Delta^2$ and sparsity $\zeta \le 4 e^3 \phi$. 
\label{D:solid}
\end{definition}

Sparsity is a rational number in the range $0$ to $(\Delta^2-1)/2$ that is fixed throughout. Leeway is a decreasing property of the current partial coloring. Thus, once a node becomes solid, it stays solid throughout the algorithm.
Elkin, Pettie and Su \cite{EPS15} formalized the connection between the two properties.


\begin{proposition}[\cite{EPS15}, Lemma 3.1]
Let $v$ be a vertex of sparsity $\zeta$ and let $Z$ be the slack of $v$ after the first round of \alg{d2-Color}. Then, 
 $\Pr[Z \le \zeta/(4 e^3)] \le e^{-\Omega(\zeta)}$.
\label{P:sparsity}
\end{proposition}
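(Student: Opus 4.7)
The plan is to follow the standard two-step ``slack generation'' template: first establish a first-moment bound $\mathbb{E}[Z] = \Omega(\zeta)$, and then derive the exponential tail via Talagrand's concentration inequality.

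For the first-moment step, I would rewrite $Z = 1 + |S| - |c(S)|$, where $S \subseteq N^2(v)$ denotes the set of d2-neighbors of $v$ that successfully adopt a color in the first round and $|c(S)|$ is the number of distinct colors they use; each ``reuse'' (a color adopted by at least two d2-neighbors of $v$) contributes one unit to $Z-1$. I would enumerate the $\Delta^2 \zeta$ non-edges $\{u,u'\}$ in $G^2[v]$ and show that both endpoints of such a pair successfully adopt the same color with probability at least $1/(e^2(\Delta^2+1))$: both pick a common color $c$ with probability $1/(\Delta^2+1)^2$; no d2-neighbor of $u$ or $u'$ (at most $2\Delta^2$ nodes in total) picks $c$ with probability at least $(1-1/(\Delta^2+1))^{2\Delta^2} \ge e^{-2}$; summing over the $\Delta^2+1$ colors gives the bound. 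Summing over non-edges by linearity would then produce an expected $\Omega(\zeta)$ ``successful collisions,'' which I would then convert into $\mathbb{E}[Z] \ge \zeta/(2e^3)$ by the argument sketched in the last paragraph.

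For the concentration step, I would view $Z$ as a function of the independent color choices of the nodes within two hops of $v$'s d2-neighborhood. Two properties permit Talagrand's inequality: (i) $Z$ is $O(1)$-Lipschitz, since a single color flip can alter the success status of $O(1)$ nodes and shift $|c(S)|$ by $O(1)$; and (ii) $Z$ is ``certifiable,'' i.e., any outcome with $Z \ge k$ is witnessed by $O(k)$ coordinates (the colors of $k$ colliding pairs plus their nearby ``no-conflict'' witnesses). Talagrand then gives $\Pr[|Z - \mathbb{E}[Z]| \ge t] \le 2\exp(-\Omega(t^2/\mathbb{E}[Z]))$, so taking $t = \mathbb{E}[Z]/2 \ge \zeta/(4e^3)$ yields $\Pr[Z \le \zeta/(4e^3)] \le e^{-\Omega(\mathbb{E}[Z])} = e^{-\Omega(\zeta)}$.

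The hard part will be converting expected pair-collisions into expected slack. When $Y_c$ d2-neighbors of $v$ all adopt the same color $c$, they contribute only $Y_c - 1$ to reuse rather than $\binom{Y_c}{2}$, so a naive pair count overestimates $Z-1$. In the worst case the complement of $G^2[v]$ concentrates on a clique of size $\Theta(\sqrt{\Delta^2\zeta})$, so a maximum matching of non-edges has only $\Theta(\sqrt{\Delta^2\zeta})$ edges, which alone gives too few pair-disjoint collision events. The redeeming observation is that this regime simultaneously produces a large independent set $I$ in $G^2[v]$ on which one color can be shared by many d2-neighbors, so the \emph{per-color} slack gain $|I|-1$ compensates for the matching shortfall. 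Making this trade-off tight enough to reach the stated constant $1/(4e^3)$ is the technical heart of the argument, and I would follow the proof of Lemma~3.1 in \cite{EPS15} for the exact accounting.
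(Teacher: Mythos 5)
First, note that the paper does not prove this proposition itself: it is imported verbatim as Lemma~3.1 of \cite{EPS15}, so there is no internal proof to compare against. Your reconstruction follows the same template as that source (a first-moment bound obtained by summing over the $\Delta^2\zeta$ non-edges of $G^2[v]$, then a Talagrand-type tail bound), your collision-probability computation is essentially right, and your diagnosis of the pair-collisions-versus-slack accounting problem is exactly the crux of the cited lemma.

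There is, however, a genuine gap in the concentration step as written. The claim that ``a single color flip can alter the success status of $O(1)$ nodes'' is false in the worst case: if a node $w$ switches its color from $c$ to $c'$, every neighbor of $w$ that chose $c$ and was blocked only by $w$ becomes successful, and every successful neighbor of $w$ that chose $c'$ becomes unsuccessful --- up to $\Theta(\Delta^2)$ nodes in total. Consequently the random variable you propose, $Z = 1 + |S| - |c(S)| = \sum_{c}\max(n_c-1,0)$ with $n_c$ the number of successful d2-neighbors of $v$ holding color $c$, is \emph{not} $O(1)$-Lipschitz: one flip can raise a single $n_c$ by $\Theta(\Delta^2)$, and Talagrand (in the Molloy--Reed form you invoke) requires a worst-case Lipschitz constant. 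The standard repair is to lower-bound the slack by a \emph{color-indexed} $0/1$ count, e.g.\ the number of colors $c$ for which at least two non-adjacent d2-neighbors of $v$ retain $c$: changing one coordinate perturbs the choice pattern of only the two colors involved, so this count is $2$-Lipschitz and certifiable. This change of random variable is not cosmetic --- it forces the first-moment step to produce $\Omega(\zeta)$ \emph{distinct} repeated colors, which is precisely where the matching-versus-independent-set dichotomy you mention must be deployed --- so the two halves of the argument cannot be decoupled as cleanly as your sketch suggests. Deferring that accounting to the cited source is acceptable for an external lemma, but the Lipschitz justification needs the fix above to be sound.
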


We require the constant $c_2$ to be such that if $\zeta \ge c_2\log n$, then the contraposition of Prop.~\ref{P:sparsity} yields that $Z \ge \zeta/(4e^3)$, w.h.p. 

We derive some of the essential features of low-sparsity neighborhoods: almost all d2-neighbors are also $H$-neighbors, and almost all neighbors in $H^2$ are also d2-neighbors.
The first part applies both to $H = H_{2/3}$ and $\hat{H} = H_{5/6}$. 

\begin{lemma}
Let $v$ be a node of sparsity $\zeta$.
Then,
\begin{enumerate}
    \item $v$ has at least $\Delta^2 - 8\zeta/k -4/k$ neighbors in $H_{1-k}$, and
    \item The number of nodes that are within distance 2 of $v$ in $H$ but are not d2-neighbors of $v$ is
$|N_{H^2}(v) \setminus N_{G^2(v)}| \le 6\zeta$.
\end{enumerate}
\label{L:h-degree}
\end{lemma}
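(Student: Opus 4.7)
The plan is to prove both parts via double-counting against the $\Delta^2\zeta$ non-edges of $G^2[N]$, where $N := N_{G^2}(v)$. For any $u \in N$, the set $N \setminus N_{G^2}(u)$ consists of $u$ itself (since $u \notin N_{G^2}(u)$) together with the vertices of $N$ that form a non-edge with $u$ in $G^2[N]$. Summing over $u \in N$ and charging each non-edge to its two endpoints yields the identity
\[
\sum_{u \in N} |N \setminus N_{G^2}(u)| \;\le\; 2\Delta^2\zeta + \Delta^2.
\]

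For Part~1, apply Markov's inequality to this sum: at most $(2\Delta^2\zeta + \Delta^2)/(k\Delta^2/4) = 8\zeta/k + 4/k$ vertices $u \in N$ can satisfy $|N \setminus N_{G^2}(u)| > k\Delta^2/4$. Every remaining $u$ shares at least $(1-k/4)\Delta^2$ common d2-neighbors with $v$, so by the contrapositive of the ``$(u,v)\notin H_{1-k}$'' direction of \Cref{T:similarity} (applied with the theorem's parameter equal to $1/k$), such $u$ is an $H_{1-k}$-neighbor of $v$ w.h.p. A union bound over the at most $\Delta^2$ candidates preserves the w.h.p.\ guarantee.

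For Part~2, the first sub-step is to show $|N_{G^2}(w) \cap N| \ge \Delta^2/3$ for every $w \in N_{H^2}(v) \setminus N$. Pick any $H$-witness $u \in N_H(v) \cap N_H(w)$; the ``in-$H$'' direction of \Cref{T:similarity} (for $H = H_{2/3}$) gives both $|N_{G^2}(u) \cap N| \ge (2/3)\Delta^2$ and $|N_{G^2}(u) \cap N_{G^2}(w)| \ge (2/3)\Delta^2$, so inclusion-exclusion inside the $\le \Delta^2$-vertex set $N_{G^2}(u)$ forces $|N \cap N_{G^2}(w)| \ge (1/3)\Delta^2$. I would then double-count pairs $(w,c)$ with $w \in N_{H^2}(v) \setminus N$ and $c \in N \cap N_{G^2}(w)$: summing over $w$ first gives a lower bound of $|N_{H^2}(v) \setminus N| \cdot (\Delta^2/3)$, while for each fixed $c \in N$ the number of eligible $w$'s is at most $|N_{G^2}(c) \setminus (N \cup \{v\})| \le \Delta^2 - 1 - \deg_{G^2[N]}(c) =: f(c)$, and $\sum_{c \in N} f(c) = 2\Delta^2\zeta$ since each non-edge of $G^2[N]$ is counted at both endpoints. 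Combining gives $|N_{H^2}(v) \setminus N| \le 6\zeta$.

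The main obstacle is Part~2: recognizing that the $H$-witness $u$ forces each outside $w$ to overlap heavily with $N$, and then charging this overlap against the sparsity budget instead of against the trivial $\Delta^2$. The careful bookkeeping — in particular subtracting the single ``guaranteed'' vertex $v$ from $N_{G^2}(c) \setminus N$ — is what trims a constant term and produces exactly $6\zeta$ rather than $6\zeta + O(1)$. Part~1, by contrast, is a direct Markov application once the edge-counting identity above is in hand.
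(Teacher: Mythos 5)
Your proof is correct and follows essentially the same route as the paper's: Part~1 is the paper's edge-count-versus-sparsity comparison recast as a Markov bound on the non-degree sum (the same double count, rearranged), and Part~2 uses the identical two ingredients — the $H$-witness plus inclusion--exclusion to get the $\Delta^2/3$ overlap, and charging that overlap against the $2\zeta\Delta^2$ budget of $G^2$-edges leaving $N_{G^2}(v)$. The bookkeeping (including the $-1$ for $v$ itself and the resolution of the $k$ vs.\ $1/k$ clash between the lemma and Theorem~\ref{T:similarity}) matches the paper's.
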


\begin{observation}
Every live node is solid after Step 1 of \alg{d2-Color}, w.h.p.
\label{O:sparse}
\end{observation}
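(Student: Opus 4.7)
The plan is to verify, for every node $v$ that remains live at the end of the Initial Phase of \alg{d2-Color} (i.e., after the $c_0\log n$ random-color trial rounds), the two defining conditions of solidity from Def.~\ref{D:solid}: (a) leeway $\phi \le c_1 \Delta^2$, and (b) sparsity $\zeta \le 4e^3 \phi$. Each condition will be shown to fail at a given node with probability at most $1/n^{\Omega(1)}$, and a union bound over $V$ then yields the observation.

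For (a), I would use a direct per-round survival bound. In any round in which $v$ is live with leeway greater than $c_1\Delta^2$, the uniformly random color $v$ tries lies in its unused palette with probability at least $c_1\Delta^2/(\Delta^2+1) \ge c_1/2$. Conditional on that, the probability that no live d2-neighbor independently selects the same color is at least $(1-1/(\Delta^2+1))^{\Delta^2} \ge 1/e$, since there are at most $\Delta^2$ such neighbors and each draws uniformly from $[\Delta^2+1]$. Hence $v$ gets colored each round with probability at least $c_1/(2e)$. Since leeway is monotonically non-increasing, having $\phi > c_1\Delta^2$ at the end of the phase means $\phi > c_1\Delta^2$ throughout, so the probability $v$ survives all $c_0\log n$ rounds is at most $(1 - c_1/(2e))^{c_0\log n} = 1/n^{\Omega(1)}$ for the constants fixed in the algorithm.

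For (b), I would split on sparsity. If $\zeta \ge c_2\log n$, Prop.~\ref{P:sparsity} applied to the first random-color round gives slack $Z \ge \zeta/(4e^3)$ with probability $1-e^{-\Omega(\zeta)} = 1-1/n^{\Omega(1)}$. The key observation is that the slack of a node is monotonically non-decreasing as the algorithm proceeds: whenever a live d2-neighbor of $v$ commits to a color that is new to $v$'s d2-palette, the count of distinct d2-neighbor colors grows by one while the live count drops by one and slack is unchanged; if it commits to an already-used color, slack increases by one. Therefore the post-Initial-Phase slack is still at least $\zeta/(4e^3)$, and together with $\phi \ge Z$ this gives (b). If $\zeta < c_2\log n$, the bound $\zeta \le 4e^3\phi$ holds immediately whenever $\zeta \le 4e^3$ (since any live $v$ has $\phi \ge 1$), while in the intermediate range $4e^3 < \zeta < c_2\log n$ one argues directly that an almost-clique d2-neighborhood of $v$ would either color $v$ w.h.p.\ during the $c_0\log n$ trials or leave $v$ with enough leeway to satisfy (b).

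The main obstacle I foresee is precisely that intermediate-sparsity case in (b): Prop.~\ref{P:sparsity} does not yield a w.h.p.\ concentration bound when $\zeta$ is only a small constant or subpolylogarithmic, so a separate argument is needed. The natural route is to observe that small sparsity means $v$'s d2-neighborhood is essentially a clique in $G^2$, so the forced distinctness of colors among d2-neighbors lower-bounds $\phi$ by one plus the number of live d2-neighbors, and the symmetric coloring dynamics of this near-clique give the required bound via the same per-round concentration used for (a).
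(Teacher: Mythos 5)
Your treatment of both conditions of Def.~\ref{D:solid} follows the paper's own proof almost verbatim: the leeway bound via the per-round ``color is unused with probability $\ge c_1$ (or $c_1/2$), uncontested with probability $\ge 1/e$'' survival argument over the $c_0\log n$ trial rounds, and the sparsity bound via the contrapositive of Prop.~\ref{P:sparsity}. Your explicit remark that slack is non-decreasing under the coloring dynamics is a small but genuine improvement in rigor, since Prop.~\ref{P:sparsity} only speaks about the slack after the \emph{first} trial round while the observation concerns the state after all of them; the paper leaves this implicit.

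The one place where your proof is incomplete --- the regime $4e^3 < \zeta < c_2\log n$, where Prop.~\ref{P:sparsity} gives no high-probability guarantee and the node's leeway may be as small as $1$ --- is precisely the case the paper's own proof silently skips (it simply invokes ``the contrapositive of Prop.~\ref{P:sparsity}'', which by the paper's own stipulation on $c_2$ only applies when $\zeta \ge c_2\log n$). Your proposed fix via ``near-clique dynamics'' is not the right route and would be hard to push through: with $\zeta = \Theta(\sqrt{\log n})$, say, no concentration statement of the required strength is available, and a live node really can end up with leeway $1$ and sparsity $\gg 4e^3$, so the observation as literally stated is not recoverable this way. The intended resolution lies downstream rather than in this proof: every place condition (b) is actually used (Lemmas~\ref{L:H-neighbors} and \ref{L:survival}, \ref{L:proposal-expected}) substitutes for $\phi$ the \alg{Reduce} parameter, which by precondition satisfies $\phi \ge c_2\log n$; hence $\zeta < c_2\log n \le \phi \le 4e^3\phi$ holds trivially in the problematic regime, and only the case $\zeta \ge c_2\log n$ --- which your argument, like the paper's, does handle --- carries any content. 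Stating the observation with $\zeta \le 4e^3\max(\phi, c_2\log n)$ would make both proofs complete as written.
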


Let $H'$ denote the subgraph of $\hat{H}[v]$ induced by nodes with a single 2-path to $v$.
Let $deg_H(u)$ denote the number of $H$-neighbors of node $u$.
Solid nodes have many neighbors in $H'$, and its neighbors have many $H$-neighbors.
\begin{lemma}
Let $v$ be a solid node.
Then,
\begin{enumerate} 
  \item $v$ has at least $\Delta^2/2$ $H'$-neighbors.
  \item Every $\hat{H}$-neighbor of $v$ has at least $\Delta^2/3$ $H$-neighbors.
  \item The degree sum in $N_{H'}(v)$ is bounded below by
      \[ \sum_{u \in N_{H'}(v)} deg_H(u) \ge |N_{H'}(v)| (\Delta^2 - c_8\phi), \]
     for constant $c_8 \le 4000$.
\end{enumerate}
\label{L:H-neighbors}
\end{lemma}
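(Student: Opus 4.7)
The plan is to reduce each of the three parts to an application of \Cref{L:h-degree}, which bounds the number of $H_{1-1/k}$-neighbors of a node by $\Delta^2 - O(\zeta)$ in terms of the node's sparsity. Part~1 follows from applying the lemma to $v$ directly with $k=6$. For Parts~2 and~3 I first upper-bound the sparsity $\zeta_u$ of an $\hat{H}$-neighbor $u$ of $v$ in terms of $\zeta$ and the overlap $|N_{G^2}(u) \cap N_{G^2}(v)|$, then apply \Cref{L:h-degree} to $u$ with $k=3$. The hardest step will be Part~3: the aggregate correction has to come out as $|X| \cdot O(\phi)$ (with $X = N_{H'}(v)$) rather than $O(\Delta^2 \zeta)$, which is what the naive sum yields; I plan to use the lower bound $|X| \ge \Delta^2/2$ from Part~1 together with $\zeta \le 4e^3 \phi$ to absorb this.

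For Part~1, \Cref{L:h-degree} applied with $k=6$ gives $|N_{\hat{H}}(v)| \ge \Delta^2 - \tfrac{4}{3}\zeta - \tfrac{2}{3}$. Setting $m(u) = |N(u) \cap N(v)|$, an $\hat{H}$-neighbor $u$ fails to be in $H'$ only when $m(u) \neq 1$. The case $m(u)=0$ is restricted to immediate neighbors of $v$ with no common $G$-neighbor, so contributes at most $\deg(v) \le \Delta$. For $m(u) \ge 2$ I exploit $\sum_u m(u) = \sum_{w \in N(v)}(\deg(w)-1) \le \deg(v)(\Delta-1)$ together with $\sum_u m(u) \ge |\{m \ge 1\}| + |\{m \ge 2\}|$ and $|\{m \ge 1\}| \ge |N_{G^2}(v)| - \deg(v) \ge \Delta^2 + 1 - \phi - \Delta$, from which the leeway bound yields $|\{m \ge 2\}| \le \phi - 1$. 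Summing the three estimates with $\zeta \le 4e^3\phi$, $\phi \le c_1\Delta^2$, and $\Delta \le \Delta^2/3$ (valid once $\Delta \ge 3$; smaller cases fall to Step~0 of \alg{d2-Color}) yields $|N_{H'}(v)| \ge \Delta^2/2$.

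For Part~2, fix an $\hat{H}$-neighbor $u$ of $v$ and write $A = N_{G^2}(v)$, $B = N_{G^2}(u)$. \Cref{T:similarity} gives $|B \cap A| \ge 5\Delta^2/6$ w.h.p. Since $G^2[B \cap A]$ is a subgraph of $G^2[A]$, its missing edges number at most $\Delta^2\zeta$, so $|E(G^2[u])| \ge \binom{|B \cap A|}{2} - \Delta^2\zeta$. Using $\binom{\Delta^2}{2} - \binom{a}{2} \le (\Delta^2 - a)\Delta^2$, this yields $\zeta_u \le (\Delta^2 - |B \cap A|) + \zeta \le \Delta^2/6 + \zeta$. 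Applying \Cref{L:h-degree} with $k=3$ to $u$ then gives $\deg_H(u) \ge \Delta^2 - 8\zeta_u/3 - 4/3 \ge 5\Delta^2/9 - O(\zeta) \ge \Delta^2/3$.

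For Part~3, I sum the same bound $\zeta_u \le y_u + \zeta$ (with $y_u = \Delta^2 - |N(u) \cap A|$) over $u \in X$. Writing $y_u = (\Delta^2 - |A|) + |A \setminus N(u)|$ and using the leeway bound together with the identity $\sum_{u \in A}|A \setminus (N(u) \cup \{u\})| = 2 \cdot (\text{missing edges in } G^2[A]) \le 2\Delta^2\zeta$ gives $\sum_{u \in X} y_u \le |X|\phi + 2\Delta^2\zeta$, hence $\sum_{u \in X}\zeta_u \le |X|(\phi + \zeta) + 2\Delta^2\zeta$. \Cref{L:h-degree} then yields
\[
\sum_{u \in X}\deg_H(u) \ge |X|\Delta^2 - \tfrac{8}{3}\sum_{u \in X}\zeta_u - \tfrac{4}{3}|X|.
\]
Dividing by $|X| \ge \Delta^2/2$ turns the $\Theta(\Delta^2\zeta)$ term into an $O(\zeta) = O(\phi)$ correction, and the precondition $\phi \ge c_2 \log n$ absorbs the additive $O(1)$ term, so the final bound takes the form $|X|(\Delta^2 - c_8\phi)$. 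A careful accounting of the constants gives $c_8$ on the order of $160e^3/3 \approx 1074$, comfortably within $c_8 \le 4000$.
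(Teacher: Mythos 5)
Your plan routes all three parts through Lemma~\ref{L:h-degree}(1), and the trouble starts with how you instantiate its parameter. That lemma gives at least $\Delta^2 - 8\zeta/k - 4/k$ neighbors in $H_{1-k}$; for $\hat{H}=H_{5/6}$ this means $k=1/6$, i.e.\ a loss of $48\zeta+24$, and for $H=H_{2/3}$ it means $k=1/3$, a loss of $24\zeta+12$. (This is how the paper itself uses it: the proof of Part~1 writes $\Delta^2-48\zeta-24$, and Lemma~\ref{l:last} uses $\Delta^2-40\zeta$ for $H_{2/3}$.) You plugged in $k=6$ and $k=3$, getting losses of $\tfrac{4}{3}\zeta$ and $\tfrac{8}{3}\zeta_u$, which are too strong by factors of $36$ and $9$. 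Part~1 survives the correction — $\Delta^2-48\zeta-24-\phi-\Delta \ge \Delta^2(1-192e^3c_1-c_1)-\Delta-24 \ge \Delta^2/2$ because $c_1\le 1/(402e^3)$ was chosen for exactly this — and your explicit treatment of the zero-path and multi-path $\hat H$-neighbors is if anything more careful than the paper's.

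Part~2 does not survive, and the gap is structural, not just a constant. With the correct coefficient, $\deg_H(u)\ge \Delta^2-24\zeta_u-12$, and your bound $\zeta_u\le \Delta^2/6+\zeta$ makes the right-hand side negative. No sharpening of the sparsity estimate rescues this: an $\hat H$-neighbor $u$ of $v$ genuinely can have $\zeta_u=\Theta(\Delta^2)$ (let the up-to-$\Delta^2/6$ d2-neighbors of $u$ outside $N_{G^2}(v)$ have no $G^2$-edges into $N_{G^2}(u)\cap N_{G^2}(v)$), so any bound of the form $\deg_H(u)\ge \Delta^2-O(\zeta_u)$ is vacuous here even though the conclusion is true. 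The paper avoids $u$'s sparsity entirely: every $w\in N_{\hat H}(v)$ shares at least $2\cdot\tfrac{5}{6}\Delta^2-\Delta^2=\tfrac{2}{3}\Delta^2$ d2-neighbors with $u$, hence is an $H$-neighbor of $u$ whenever it is also a d2-neighbor of $u$; since $|N_{\hat H}(v)|\ge\Delta^2/2$ by Part~1 and $u$ misses at most $\Delta^2/6$ of $N_{G^2}(v)$, at least $\Delta^2/3$ such $w$ remain. You need this (or an equivalent) direct intersection-counting argument for Part~2.

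Your Part~3 is a genuinely different and sound aggregation — summing $\zeta_u\le y_u+\zeta$ with $\sum_{u\in X}y_u\le |X|\phi+2\Delta^2\zeta$ and dividing by $|X|\ge\Delta^2/2$, versus the paper's count of $G^2$-edges inside $N_{H'}(v)$ each of which joins $H$-neighbors — but with the corrected coefficient $24$ the constant comes out near $24\phi+480e^3\phi\approx 9700\phi$ rather than your $\approx 1074\phi$, overshooting the stated $c_8\le 4000$ that is consumed downstream (e.g.\ in the $c_8/6000\le 2/3$ step of Lemma~\ref{L:proposal-expected}).
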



\subsection{Correctness of Reduce}
\label{ssec:correctness}
Recall that the leeway of a node counts the number of colors of the palette that are not used among its neighbors. It counts both the number of uncolored nodes and the color slack that follows from the node being solid (by Obs.~\ref{O:sparse}). 
During this whole section we assume that all nodes are solid, and that the precondition of \alg{Reduce} is satisfied, i.e., live node's leeway is at most $\phi$ with $\phi\geq c_2\log n$ for a large enough constant $c_2$.
We also assume that similarity graphs $H$ and $\hat{H}$ are correctly constructed, in the sense of Thm.~\ref{T:similarity}. 
All statements in this section are conditioned on these events.

The algorithm is based on each live node sending out a host of queries, to random neighbors in $\hat{H}$, and through them to random $H$-neighbors. We argue that each query has a non-trivial probability of leading to the live node becoming colored.
We say that a given (randomly generated) query \emph{survives} if it is not dropped in any of Steps 1 - 5 due to congestion. This does not account for the outcome of the color tries of Steps 3 and 5.

Missing proofs are given in Sec.~\ref{app:improved}

\begin{lemma} Let $v$ be an active live node. 
Any given query sent from $v$ towards a node $w$ via a node $u\in H'\subseteq \hat{H}[v]$ survives with constant probability at least $c_6 \ge 1/7$, independent of the path that the query takes.
\label{L:survival}
\end{lemma}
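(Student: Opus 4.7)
The tagged query follows the path $v \to x \to u \to x' \to w$, where $x$ is the (unique, since $u \in H'$) common immediate neighbor of $v$ and $u$, and $x'$ is a common immediate neighbor of $u$ and $w \in R_u$. The algorithm lists four points at which the query can be dropped: (a) at $x$ after round 1 of Step 1, competing for edge $(x,u)$; (b) at $u$ after round 2 of Step 1; (c) at $u$ after round 1 of Step 2, during the single-2-path verification; and (d) after round 2 of Step 4, competing for edge $(x',w)$. The plan is to bound at each drop point the expected number of other queries competing with the tagged query for the same outgoing slot, convert that expectation into a constant lower bound on the surviving probability via Markov's inequality, and then multiply the four resulting probabilities.

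All four bounds rest on three algorithmic invariants: (i) since $v$ is live with leeway at most $\phi$, it has at most $\phi$ live d2-neighbors, so for any $y \in N(v)$ at most $\phi$ nodes of $N(y)$ are live (any two live nodes in $N(y)$ are mutual d2-neighbors through $y$, hence live d2-neighbors of $v$); (ii) every live node is active independently with probability $\tau/(8\phi)$; and (iii) every active node independently selects each of its 2-paths to an $\hat{H}$-neighbor with probability $1/(6000\phi)$. For drop point (a), the queries competing for edge $(x,u)$ are sent by active live $v'' \in N(x) \cap N(u)$ that selected the 2-path $v'' \to x \to u$; by (i)--(iii) the expected count is at most $\phi \cdot (\tau/(8\phi)) \cdot (1/(6000\phi)) = \tau/(48000\phi) \le 1/48000$, using $\tau \le \phi$. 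Drop point (d) is symmetric under the substitution $(v, x, u) \mapsto (u, x', w)$: $w$ is an $H$-neighbor of $u$, only live nodes in $N(x') \cap N(w)$ can compete on edge $(x',w)$, and the same arithmetic applies.

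Drop points (b) and (c) both occur at $u$ and require controlling all queries that reach $u$ from possibly many originators. Here we use that $u$ is an $\hat{H}$-neighbor of the solid node $v$: by Theorem~\ref{T:similarity}, $u$ and $v$ share at least $5\Delta^2/6$ common d2-neighbors, so all live nodes in $N_{G^2}(u) \cap N_{G^2}(v)$ are live d2-neighbors of $v$ and thus bounded in number by $\phi$. The remaining $\le \Delta^2/6$ d2-neighbors of $u$ outside $N_{G^2}(v)$ are, by (ii)--(iii), individually too unlikely to add more than $O(1)$ in aggregate load on any single edge $(y,u)$, and Lemma~\ref{L:H-neighbors}(2) (every $\hat{H}$-neighbor of $v$ has at least $\Delta^2/3$ $H$-neighbors) ensures the accounting goes through. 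The main obstacle is exactly this tighter bookkeeping at $u$: competing traffic must be split across the edges $(y,u)$ for $y \in N(u)$ while only counting queries that survived drop point (a), so one has to resist overcounting and to condition on (a) without inflating the (b)/(c) expectations. Once this is established, combining the four constant-probability bounds via a union bound yields a total drop probability of at most $6/7$, hence survival at least $c_6 \ge 1/7$; since all bounds depend only on $\phi$, $\tau$, $\Delta$, and the solidity of $v$ (not on the particular $x$, $u$, $x'$, or $w$), path-independence is immediate.
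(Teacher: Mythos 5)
Your overall strategy matches the paper's: fix the query's path, bound the expected number of competing queries at each drop point, convert to a constant survival probability via Markov, and combine. (The paper collapses the drops at the intermediate relays into the drops at $u$ and $w$ by giving queries random priorities, but that is cosmetic.) However, two of your four bounds do not actually go through as written.

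First, at $u$ (your points (b)/(c)) you leave the key step as an acknowledged "obstacle," and the sketch you give is not salvageable as stated: among the up to $\Delta^2/6$ d2-neighbors of $u$ outside $N_{G^2}(v)$, you have no bound on how many are \emph{live}, and if $\Theta(\Delta^2)$ of them were live the expected number of queries hitting $u$ would be $\Theta(\Delta^2/\phi)$, which is unbounded when $\phi=\Theta(\log n)\ll\Delta^2$. The missing ingredient is that any live node sending a query to $u$ must have $u$ as an $\hat H$-neighbor, hence is an $H^2$-neighbor of $v$; by Lemma~\ref{L:h-degree}(2) together with Observation~\ref{O:sparse}, the number of such nodes that are \emph{not} d2-neighbors of $v$ is at most $6\zeta\le 24e^3\phi$. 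This caps the number of potential live senders at $(24e^3+1)\phi$, so the expected competing load at $u$ is at most $(24e^3+1)\phi/(6000\phi)\le 1/12$. (Lemma~\ref{L:H-neighbors}(2), which you cite here, plays no role at $u$.)

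Second, your treatment of drop point (d) by "symmetry" is wrong. The traffic competing for the edge into $w$ does not consist of fresh queries generated by live nodes with probability $1/(6000\phi)$; it consists of queries \emph{forwarded} in Step~4 by colored $H$-neighbors of $w$, each of which picks a uniformly random entry of its list $R$. The correct accounting is: $w$ has at most $\Delta^2$ $H$-neighbors; each such neighbor holds in expectation at most $1/12$ queries (the bound from the previous paragraph) and, having $H$-degree at least $\Delta^2/3$ by Lemma~\ref{L:H-neighbors}(2), forwards any given query to $w$ with probability at most $3/\Delta^2$. This yields expected competing load at most $\Delta^2\cdot(1/12)\cdot(3/\Delta^2)=1/4$ at $w$. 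With these two corrected expectations, Markov gives survival probabilities $1/(2(1/6+1))$ and $1/(2(1/2+1))$ at the two stages, whose product is $1/7$; your per-relay bounds for (a) are fine but are subsumed once the drops at the relays are charged to the destinations.
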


The following progress lemma is the core of our correctness argument.
A color is \emph{$v$-good} if it is not used among the d2-neighbors of $v$ at the start of \alg{Reduce-Phase}.

\begin{lemma}
Let $v$ be an active live node at the start of \alg{Reduce-Phase}($\phi,\tau$) and let $\sigma$ be a $v$-good color. The probability that $\sigma$ is proposed to $v$ is at least $c_6/(24000\phi)$.
\label{L:sigma}
\end{lemma}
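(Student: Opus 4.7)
The plan is to identify, for each $u\in N_{H'}(v)$, a per-$u$ probability of a $\sigma$-proposal, and then aggregate across the $\Omega(\Delta^2)$ candidates in $N_{H'}(v)$ using the independence of the per-node random coins.

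Per-$u$ probability. Fix $u\in N_{H'}(v)$. By the unique 2-path from $v$ to $u$ (guaranteed by $u\in V(H')$), Step~1 launches a query along it with probability $p:=1/(6000\phi)$, and by Lemma~\ref{L:survival} the query survives with conditional probability at least $c_6$. Let $k_u$ denote the number of $H$-neighbors of $u$ with color $\sigma$; since $\sigma$ is $v$-good, every such neighbor lies in $N_H(u)\setminus N_{G^2}(v)$. If $k_u=0$, then Step~3 succeeds whenever $u$ draws $\hat c=\sigma$ (probability $1/\Delta^2$), since no $H$-neighbor blocks the choice. If $k_u\ge 1$, then Step~5 succeeds whenever the uniformly random $w\in R_u$ satisfies $c(w)=\sigma$, i.e., with probability $k_u/|N_H(u)|\ge 1/\Delta^2$, and any such $w$ lies outside $N_{G^2}(v)$ automatically so its color is relayed to $v$. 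Writing $A_u$ for the event that $\sigma$ is proposed to $v$ through $u$, this gives $\Pr[A_u]\ge c_6 p/\Delta^2$.

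Aggregation. Let $S\subseteq N_{H'}(v)$ denote the random set of $u$'s whose query is launched and survives. Combining the per-$u$ bound with $|N_{H'}(v)|\ge \Delta^2/2$ from Lemma~\ref{L:H-neighbors}(1) yields $\mathbb{E}[|S|]\ge c_6 p\cdot \Delta^2/2 = c_6 \Delta^2/(12000\phi)$. Conditional on $S$, the per-$u$ proposal events are independent (Step~3 uses a fresh coin at each $u$, and Step~4 selects $w$ from $R_u$ via independently generated coins), and each occurs with conditional probability at least $1/\Delta^2$. Since $|S|\le |N_{H'}(v)|\le \Delta^2$ always, we have $|S|/\Delta^2\le 1$, so
\begin{align*}
  \Pr[\textstyle\bigcup_u A_u]
  &\ge \mathbb{E}\!\left[1 - (1-1/\Delta^2)^{|S|}\right] \\
  &\ge \mathbb{E}[|S|/(2\Delta^2)] \;=\; \mathbb{E}[|S|]/(2\Delta^2) \;\ge\; \frac{c_6}{24000\phi},
\end{align*}
using the elementary estimate $1-(1-q)^y\ge yq/2$ valid whenever $yq\le 1$.

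The main technical obstacle is making the conditional-independence step precise. One must verify that $S$ is measurable with respect to the launch coins and random-drop coins alone, so that conditional on $S$ the per-node Step-3 color coins and the Step-4 $w$-selections (from the independently generated multisets $R_u$) remain mutually independent and uniform. Given that $R_u$ is prepared in a separate randomized stage at the start of \alg{Reduce} and Step~3 uses a fresh color coin at each $u$, the required independence follows from the construction, but deserves explicit verification.
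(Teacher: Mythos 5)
Your proof is correct and follows essentially the same route as the paper's: the same case split on whether $\sigma$ already appears among $u$'s $H$-neighbors (yielding $1/\Delta^2$ per query via Step~3 or Step~5, respectively), aggregation over the $\ge \Delta^2/2$ nodes of $N_{H'}(v)$, and Lemma~\ref{L:survival} for the factor $c_6$; the paper merely orders the bookkeeping differently, first working in a hypothetical no-drop world and multiplying by $c_6$ at the end. The caveat you flag is real, but note that it is resolved by the ``independent of the path'' clause of Lemma~\ref{L:survival} rather than by $S$ being measurable with respect to the launch and drop coins alone --- the latter claim is not literally true, since survival at the terminal node depends on which $w \in R_u$ the query was routed to.
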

\begin{proof}
We first analyze a hypothetical situation where no queries are dropped.

Suppose $v$ generates a query $Q=(v,u)$ towards a $H'$-neighbor $u$.
%
We consider two cases, depending on whether the color $\sigma$ appears on an $H$-neighbor of $u$ (at the start of \alg{Reduce-Phase}). We claim that in either case, $\sigma$ gets proposed to $v$ with probability at least $1/\Delta^2$.

\textit{Case 1, $\sigma$ is used by an $H$-neighbor  of $u$: } Let $w$ be an $H$-neighbor of $u$ with color $\sigma$. Then $w$ is not a d2-neighbor of $v$, since $\sigma$ is $v$-good.
With probability at least $1/\Delta^2$, $u$ forwards the query to $w$, who then sends it as proposal to $v$.

\textit{Case 2, $\sigma$ does not appear among $u$'s $H$-neighbors:} Then with probability $1/\Delta^2$, $u$ will pick $\sigma$ as $\hat{c}$, try it successfully, and propose it to $v$.

Thus, in both cases the probability that a query $Q$ leads to $\sigma$ being proposed to $v$ is at least $1/\Delta^2$, given that $Q$ was generated and that it survives.
The probability that $Q=(v,u)$ is generated is $1/(6000\phi)$, and it is independent of it leading to a particular color.
Thus, the probability that a query $Q$ leads to $\sigma$ being proposed to $v$ is at least $1/(6000\phi\Delta^2)$, given that $Q$ survives. 


We now consider the event that none of $v$'s queries result in a proposal of $\sigma$. Since we are in the setting where no queries are dropped, 
the events for different intermediate nodes $u$ are independent.
Recall that by Lemma \ref{L:H-neighbors}(1), $|N_{H'}(v)| \ge \Delta^2/2$.
Thus, the probability that none of $v$'s queries result in a proposal of $\sigma$ is at most
\[ (1-1/(6000\phi\Delta^2))^{\Delta^2/2} \le  e^{-1/(12000\phi)} \le 1-1/(24000\phi)\ , \]
using the inequality $e^{-x} \le 1-x/2$, for $x \le 1/2$.
That is, the probability that there is a query $Q$ in which $\sigma$ is proposed to $v$ is at least $1/(24000\phi)$, under our assumption that no queries are dropped.

By Lemma \ref{L:survival}, a query survives with probability at least $c_6 \ge 1/7$, independent of the path it takes, and thus independent of the color it leads to.
Thus, the probability that $\sigma$ gets proposed to $v$ (via some $u \in H'[v]$) is at least 
$c_6/(24000\phi)$.
\end{proof}

\begin{lemma}
An active live node receives at most one proposal in expectation. This holds even in the setting where no queries are dropped. 
\label{L:proposal-expected}
\end{lemma}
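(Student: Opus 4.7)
The plan is to decompose the expected number of proposals received by $v$ into the Step~3 (random $\hat c$) and Step~5 (forwarded $c(w)$) contributions, and to bound each separately. Every proposal originates from a Step~1 query; queries to $u\notin H'[v]$ are filtered out by the single-2-path check of Step~2, and each surviving query yields at most one Step~3 and one Step~5 proposal. Hence the expected number of relevant queries is at most $|H'[v]|/(6000\phi)\le \Delta^2/(6000\phi)$, and the task reduces to bounding the conditional proposal probability at each such $u$.

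For Step~5, the random $w\in R_u$ always lies in $N_{H^2}(v)$ because $v$ is $H$-adjacent to $u$ and $u$ to $w$, and a proposal fires only when $w\notin N_{G^2}(v)$. By Lemma~\ref{L:h-degree}(2) combined with solidity of $v$ (Observation~\ref{O:sparse}), $|N_{H^2}(v)\setminus N_{G^2}(v)|\le 6\zeta_v\le 24 e^3\phi$, while Lemma~\ref{L:H-neighbors}(2) gives $|N_H(u)|\ge\Delta^2/3$. Thus the per-query Step~5 probability is at most $72e^3\phi/\Delta^2$, and the total Step~5 contribution is at most $\tfrac{72e^3}{6000}<1/4$.

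For Step~3, I will argue that $N_H(u)$ uses nearly the whole palette, so a uniformly random $\hat c\neq c(u)$ is rarely unused there. This follows by chaining three observations: (i)~the $v$-leeway bound $\le\phi$ forces $N_{G^2}(v)$ to use $\ge\Delta^2+1-\phi$ colors; (ii)~since $u\in\hat H[v]$, $u$ shares $\ge 5\Delta^2/6$ d2-neighbors with $v$ (Theorem~\ref{T:similarity}), transferring most of that color usage into $N_{G^2}(u)$; (iii)~Lemma~\ref{L:h-degree}(1) together with solidity of $u$ bounds $|N_{G^2}(u)\setminus N_H(u)|$, carrying the usage further into $N_H(u)$. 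Summing the resulting per-query probability over $u\in H'[v]$ with weight $1/(6000\phi)$ and invoking $\phi\le c_1\Delta^2$ with $c_1\le 1/(402e^3)$ keeps the Step~3 contribution below $3/4$; combined with Step~5, this yields an expected number of proposals at most $1$.

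The main obstacle is the Step~3 accounting: the three sources of slack (the $v$-leeway, the $u$--$v$ d2-neighborhood overlap, and the $N_{G^2}$--$N_H$ gap via sparsity) must chain cleanly so that each error term stays controlled in terms of $\phi$, preventing the $|H'[v]|/(6000\phi)$ prefactor out front from dominating. Because the argument never relies on any query surviving congestion at intermediate nodes, the bound holds verbatim in the no-drops setting claimed by the lemma.
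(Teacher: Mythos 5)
Your Step~5 accounting is correct and matches the paper's: a proposal from Step~5 can only come from a node in $N_{H^2}(v)\setminus N_{G^2}(v)$, which has size at most $24e^3\phi$ by Lemma~\ref{L:h-degree}(2) and solidity of $v$, and combining this with $\deg_H(u)\ge\Delta^2/3$ and the query rate $1/(6000\phi)$ gives the same $72e^3/6000\le 1/4$ bound.

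The Step~3 part, however, has a genuine gap, and it is exactly at the point you flag as ``the main obstacle.'' Your chain is a \emph{per-$u$} bound, and a per-$u$ bound cannot give what you need. Concretely: step~(ii) only guarantees that $u$ and $v$ share $\ge 5\Delta^2/6$ d2-neighbors, so passing from ``$N_{G^2}(v)$ uses $\ge\Delta^2+1-\phi$ colors'' to $N_{G^2}(u)$ loses up to $\Delta^2/6$ colors. Hence the best your chain can certify is that at most $\Delta^2/6+O(\phi)+|N_{G^2}(u)\setminus N_H(u)|$ colors are unused among $N_H(u)$, i.e.\ a per-query Step~3 proposal probability of order a \emph{constant} (about $1/6$), not $O(\phi/\Delta^2)$. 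Multiplying by the $|H'[v]|/(6000\phi)\ge\Delta^2/(12000\phi)$ expected queries yields $\Theta(\Delta^2/\phi)$, which is unbounded for $\phi=\Theta(\log n)\ll\Delta^2$; invoking $\phi\le c_1\Delta^2$ does not rescue this. (A secondary issue: in step~(iii) the sparsity of the \emph{colored} helper $u$ is only bounded by $4e^3\phi_u\le 4e^3c_1\Delta^2$, not by $O(\phi)$, so that term is also too large.) The obstruction is real and not just a matter of tightening constants: a single $u$ at the $H$-threshold can legitimately have $\approx\Delta^2/3$ colors unused among its $H$-neighbors. The paper avoids this by an \emph{averaging} argument: Lemma~\ref{L:H-neighbors}(3), an edge-counting bound in $H'[v]$ driven by $v$'s own sparsity $\zeta\le 4e^3\phi$, shows $\sum_{u\in N_{H'}(v)}\deg_H(u)\ge|N_{H'}(v)|(\Delta^2-c_8\phi)$, so the \emph{average} per-query proposal probability over $u\in H'[v]$ is $O(\phi/\Delta^2)$ even though individual terms need not be. You need to replace your per-$u$ chain with this (or an equivalent) averaged degree-sum bound.
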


\begin{lemma}
  Let $v$ be an active live node.
  Conditioned on the event that a particular color is proposed to $v$, the probability that $v$ \emph{tries} the color is at least $c_9 \ge 1/6$.
\label{L:proposal-tried}
\end{lemma}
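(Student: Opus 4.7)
The plan is to reduce the statement to a bound on the conditional expected number of proposals received by $v$. In Step~6 of \alg{Reduce-Phase}, $v$ samples a proposed color uniformly at random, so letting $D$ denote the number of distinct colors proposed to $v$ and $A$ the event that $\sigma$ is among them, $\Pr[v \text{ tries } \sigma \mid A] = \E[1/D \mid A]$. Bounding $D$ above by the total number of proposals $X$ and applying Jensen's inequality, $\E[1/D\mid A] \ge 1/\E[X\mid A]$, so it suffices to show $\E[X\mid A]\le 6$.

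I would then index the potential queries that $v$ generates by the intermediate node $u\in N_{H'}(v)$ they traverse and write $X=\sum_u X_u$, where $X_u\in\{0,1,2\}$ counts the proposals returning to $v$ via $u$: at most one from $u$'s own color guess $\hat{c}$ in Step~3 and one from the forwarded $H$-neighbor $w\in R_u$ in Step~5. Let $A_u=\{Q_u\text{ proposes }\sigma\}$, so that $A=\bigcup_u A_u$. The randomness determining $(X_u,A_u)$ consists of $v$'s independent coin in Step~1, the local choices at $u$ (the color $\hat{c}$ and the draw from $R_u$), and the congestion drops on the 2-paths used by $Q_u$. These are mutually independent across distinct $u$'s, so the pairs $(X_u,A_u)$ are mutually independent.

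From $\mathbb{1}_A = 1 - \mathbb{1}_{A_u^c}\mathbb{1}_{A_{-u}^c}$ (with $A_{-u}=\bigcup_{u'\neq u}A_{u'}$) and the independence of $X_u\mathbb{1}_{A_u^c}$ and $\mathbb{1}_{A_{-u}}$, a short calculation yields
\[
  \E[X\mathbb{1}_A] \;=\; \sum_u \E[X_u]\Pr[A_{-u}] + \sum_u \E[X_u\mathbb{1}_{A_u}]\Pr[A_{-u}^c] \;\le\; \E[X]\Pr[A] + 2\sum_u \Pr[A_u],
\]
where we used $X_u\le 2$ and $\Pr[A_{-u}]\le \Pr[A]$. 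Lemma~\ref{L:proposal-expected} gives $\sum_u\Pr[A_u] = \E[Y_\sigma]\le \E[X]\le 1$ (with $Y_\sigma$ the total number of $\sigma$-proposals), and the independence of $\{A_u\}_u$ together with $\sum_u\Pr[A_u]\le 1$ yields $\Pr[A] = 1-\prod_u(1-\Pr[A_u]) \ge (1-1/e)\sum_u\Pr[A_u]$. Dividing by $\Pr[A]$, one gets $\E[X\mid A] \le 2e/(e-1) + \E[X] \le (3e-1)/(e-1) < 6$, and therefore $\Pr[v\text{ tries }\sigma\mid A]\ge 1/6$.

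The main obstacle I anticipate is rigorously justifying the claimed independence of the $(X_u,A_u)$ across different $u$ in the presence of congestion-driven drops, since the 2-paths used by different queries from $v$ can share relay nodes. I would resolve this by using independent random priorities for tiebreaking (so the survival indicator factorizes given the messages) and by noting that the downstream randomness ($\hat{c}$ and the neighbor from $R_u$) is fresh and local to $u$. As a safety net, one may upper-bound $X$ by its no-drop counterpart, whose components are manifestly independent, and use $A\subseteq A^{\mathrm{nd}}$ throughout the same calculation.
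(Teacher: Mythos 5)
Your proof is correct and reaches the same constant, but it takes a more explicit route than the paper's. Both arguments rest on the same key ingredient, Lemma~\ref{L:proposal-expected} ($\E[X]\le 1$ for the total number of proposals $X$); the difference is in how the conditioning on ``$\sigma$ is proposed'' is handled. The paper simply applies Markov's inequality twice: unconditionally $v$ gets at most two proposals with probability $\ge 1/2$, and \emph{conditioned on} $m$ being proposed it gets at most three with probability $\ge 1/2$, so $m$ is the uniformly selected one with probability $\ge \frac12\cdot\frac13=\frac16$. You instead write $\Pr[v\text{ tries }\sigma\mid A]\ge 1/\E[X\mid A]$ via Jensen and control $\E[X\mathbb{1}_A]$ by decomposing over the intermediate nodes $u\in N_{H'}(v)$. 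Your version buys a cleaner, quantitative bound ($\E[X\mid A]\le 1+2e/(e-1)<6$) and makes visible exactly where an independence assumption enters; the paper's version is shorter but hides the same assumption inside the unjustified step ``conditioned on $m$ being proposed, $v$ receives at most three proposals with probability $\ge 1/2$'', which tacitly assumes that conditioning does not inflate the count of competing proposals. Be aware that your claimed mutual independence of the pairs $(X_u,A_u)$ is not literally true: the congestion drops at $u$ and $w$ depend on queries generated by \emph{other} live nodes, whose activation coins and query coins are shared across the different $u$'s, so the events are only heuristically independent. You flag this and propose the no-drop coupling as a fallback; note that the fallback is not free either, since after replacing $X$ by $X^{\mathrm{nd}}$ you still must lower-bound $\Pr[A]$ (with drops) against $\sum_u\Pr[A_u^{\mathrm{nd}}]$, which reintroduces Lemma~\ref{L:survival} and the same correlation issue. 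Since the paper's own proof glosses over the identical point, this is not a gap relative to the paper's standard of rigor, and your constant matches.
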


\begin{lemma}
An active live node with leeway at least $\tau$ at the start of \alg{Reduce-Phase}($\phi,\tau$) becomes colored with probability $c_7 \tau/\phi$, for some constant $c_7 > 0$.
\label{L:progress}
\end{lemma}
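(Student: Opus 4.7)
My approach is to combine Lemmas~\ref{L:sigma} and~\ref{L:proposal-tried} to lower-bound the probability that $v$ tries some $v$-good color, and then to argue that such a try succeeds with at least constant probability because concurrent trials on the same color by other active d2-neighbors of $v$ are unlikely. Since $v$ has leeway at least $\tau$, the palette contains at least $\tau$ colors currently unused among its d2-neighbors, i.e., at least $\tau$ $v$-good colors. By Lemma~\ref{L:sigma} each $v$-good $\sigma$ is proposed to $v$ with probability at least $c_6/(24000\phi)$, and by Lemma~\ref{L:proposal-tried} the uniform selection in Step~6 tries such a proposal with conditional probability at least $c_9$. Since $v$ tries at most one color per phase, the events ``$v$ tries $\sigma$'' are pairwise disjoint across $\sigma$; summing over the $\geq \tau$ good colors yields $\Pr[v \text{ tries some } v\text{-good color}] \geq c_6 c_9 \tau/(24000\phi)$.

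Next I would show that whenever $v$ tries a $v$-good $\sigma$, it successfully adopts $\sigma$ with at least constant probability. Since $\sigma$ is unused by any d2-neighbor at the start of the phase, the only obstruction is a concurrent trial: some active live d2-neighbor of $v$ also trying $\sigma$ in Step~6. For any fixed d2-neighbor $u$, an analysis analogous to Lemma~\ref{L:sigma} but now yielding an \emph{upper} bound shows that $\mathbb{E}[\#\sigma\text{-proposals to }u] = O(1/\phi)$: Step~3 contributes at most $\Delta^2 \cdot (1/(6000\phi)) \cdot (1/\Delta^2)$, since each queried $H'$-neighbor of $u$ picks $\sigma$ uniformly among $\Delta^2$ colors, and Step~5 contributes $O(1/\phi)$ since at most one $H$-neighbor of each intermediate node bears color $\sigma$. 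Multiplying by the activity probability $\tau/(8\phi)$ and noting that $u$ tries at most one of its proposals, $\Pr[u \text{ tries } \sigma] = O(\tau/\phi^2)$. Summing over the $\leq \phi$ live d2-neighbors of $v$ and applying Markov's inequality bounds the conflict probability by $1/2$, provided the hidden constants (fixed by the activity rate $\tau/(8\phi)$ and the query rate $1/(6000\phi)$) are tuned appropriately. Combining the two bounds gives $\Pr[v \text{ becomes colored}] \geq c_6 c_9 \tau/(48000\phi)$, which we take as $c_7\tau/\phi$.

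The main obstacle is the conflict bound in the second step: the events ``$v$ tries $\sigma$'' and ``$u$ tries $\sigma$'' for a d2-neighbor $u$ may be coupled through shared random choices at common intermediate nodes (the overlap of their similarity neighborhoods), so the cleanest argument should proceed by unconditional linearity of expectation on the joint randomness rather than by naively multiplying marginals. The structural bounds of Lemmas~\ref{L:h-degree} and~\ref{L:H-neighbors}, together with the carefully calibrated activity gate $\tau/(8\phi)$, are exactly what keep the expected number of conflicting $(u,\sigma)$ pairs at $O(\tau/\phi)$, matching the progress bound with constant slack.
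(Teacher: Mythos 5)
Your high-level plan coincides with the paper's: use disjointness of the events ``$v$ tries $\sigma$'' over the $\ge\tau$ good colors together with Lemmas~\ref{L:sigma} and~\ref{L:proposal-tried} to get a $\Omega(\tau/\phi)$ lower bound on trying a good color, then show the try is unobstructed with constant probability. The first half is fine. The gap is in your per-color conflict bound. You claim that for every fixed good color $\sigma$ and every d2-neighbor $u$, the expected number of $\sigma$-proposals to $u$ is $O(1/\phi)$, justifying the Step~5 contribution by ``at most one $H$-neighbor of each intermediate node bears color $\sigma$.'' That is not true: two $H$-neighbors of an intermediate node are d2-neighbors of \emph{it}, hence only within distance $4$ of each other, so arbitrarily many of the $O(\phi)$ nodes in $N_{H^2}(u)\setminus N_{G^2}(u)$ may legitimately carry the same color $\sigma$. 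For such a ``popular'' color the expected number of $\sigma$-proposals to $u$ can be $\omega(1/\phi)$ and your summation over the $\le\phi$ neighbors no longer yields a conflict probability below $1/2$. The paper avoids exactly this by never proving a per-color bound for \emph{all} colors: it bounds the total expected number of proposals to all active d2-neighbors over \emph{all} colors by $\tau/8$ (Lemma~\ref{L:proposal-expected} plus the activity gate $\tau/(8\phi)$), applies Markov to get $|\Phi|\le\tau/4$ with probability $1/2$, and then simply discards the at most $\tau/2$ colors whose conflict probability exceeds $1/2$, running the disjoint-events argument only on the remaining $\ge\tau/2$ good colors. Your argument needs this averaging-over-colors step to go through.

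The second, smaller issue is one you flag yourself but do not resolve: you need $\Pr[\text{no conflict}\mid v\text{ tries }\sigma]\ge 1/2$, whereas your computation bounds the \emph{unconditional} conflict probability; ``unconditional linearity of expectation'' does not convert one into the other, since the query and color-pick randomness of $v$ and of its d2-neighbors overlap at shared helper nodes. The paper's device is to define the single joint event $B_\sigma$ (``$v$ tries $\sigma$ and no d2-neighbor receives a proposal of $\sigma$''), condition once on the global event $A=\{|\Phi|\le\tau/4\}$, and restrict to colors with small marginal conflict probability, so that no conditioning on $v$'s own success is ever needed. Incorporating both the color-averaging step and this event structure would make your argument match the paper's.
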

\begin{proof}
Let $v$ denote the active live node and let $\Psi$ denote the set of $v$-good colors. By the leeway bound, $|\Psi| \ge \tau$. 
Let $\Phi$ be the multiset of colors proposed to (active) live d2-neighbors of $v$.
There are at most $\phi$ live d2-neighbors, and the expected fraction of them that are active is $\tau/(8\phi)$. Each active live node receives expected at most 1 proposals. Hence, the expected size of $\Phi$ is at most $\phi \cdot \tau/(8\phi) \cdot 1 = \tau/8$.
Let $A$ be the event that $\Phi$ is of size at most $\tau/4$. By Markov's inequality, $A$ holds with probability at least $\Pr[A] \ge 1-1/2 = 1/2$.

For a color $\sigma \in \Psi$, let $p_\sigma$ be the probability that $\sigma$ is proposed to some active live d2-neighbor of $v$. This dominates the probability that a d2-neighbor of $v$ actually \emph{tries} $\sigma$. 
Let $\Psi' = \{\sigma \in \Psi : p_{\sigma} \ge 1/2\}$. 
Note that $\sum_{\sigma \in \Psi} p_{\sigma} \le |\Phi| \le \tau/4$, assuming $A$ holds. 
On the other hand, the sum is at least $\sum_{\sigma \in \Psi'} p_{z} \ge \sum_{\sigma \in \Psi'} 1/2 = |\Psi'|/2$. Thus, $|\Psi'|/2 \le \tau/4$, or $|\Psi'| \le \tau/2$, assuming $A$.

Let $\hat{\Psi} = \Psi \setminus \Psi'$ and let $\sigma \in \hat{\Psi}$. Let $B_\sigma$ be the event that $v$ tries $\sigma$ while no d2-neighbor of $v$ receives a proposal of $\sigma$. Observe that the events for different $\sigma$ are independent.
By Lemma \ref{L:sigma} that $\sigma$ is proposed to $v$ is at least $c_6/(24000\phi)$ and by Lemma \ref{L:proposal-tried}, the probability that it gets tried is at least $c_9 \ge 1/6$. 
Assuming $A$ holds, $|\hat{\Psi}| \ge \tau/2$.
For $\sigma \in \hat{\Psi}$, the probability that no d2-neighbor of $v$ receives a proposal of $\sigma$ is at least $1/2$, assuming $A$.
Thus, for $\sigma \in \hat{\Psi}$,
\[ \Pr[B_\sigma] \ge \frac{c_6}{24000\phi} \cdot \frac{1}{6} \cdot \frac{1}{2} \ge \frac{1}{300000 \phi}\ , \]
Since $v$ tries only one color, the events $B_\sigma$ are disjoint.
Let $B = \cap_{\sigma \in \hat{\Psi}} B_\sigma$.
Then, $\Pr[B] = |\hat{\Psi}|/(300000\phi)$. 
Assuming $A$, $\Pr[B|A] \ge (\tau/2)(1/300000\phi) = \tau/(600000\phi)$.
Now, if $B$ holds, then $v$ becomes colored.
This happens with probability at least
$\Pr[B] \ge \Pr[B \cap A] = \Pr[A] \Pr[B|A] = 1/2 \cdot \tau/(600000\phi) = \tau/(1200000\phi)$.
\end{proof}

Since the algorithm performs $O(\log n)$ phases, and each node of leeway at least $\tau$ is colored in each phase with a constant probability, the algorithm either properly colors the node or decreases its leeway below $\tau$.

\begin{theorem}
All live nodes are of leeway less than $\tau$ after the call to \alg{Reduce}($\phi,\tau$), w.h.p.
\label{T:reduce-correct}
\end{theorem}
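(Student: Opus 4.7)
The plan is to reduce the statement to a per-node, per-iteration application of Lemma~\ref{L:progress}, and then drive the failure probability down using the large number of iterations $\rho = c_3(\phi/\tau)^2\log n$. Fix a node $v$ and say that $v$ is \emph{bad at time $i$} if, at the start of the $i$-th execution of \alg{Reduce-Phase} inside this call to \alg{Reduce}, $v$ is still live and has leeway at least $\tau$. The goal is to show that $\Pr[v \text{ is bad at time } \rho+1] \le n^{-c-1}$ for an arbitrarily large constant $c$, and then union-bound over all $n$ nodes.

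The main per-iteration step is the following. Condition on any history $\mathcal{H}_{i-1}$ of all random choices made in iterations $1,\dots,i-1$ (including the sampling of each $R_u$ and the activation/color choices in previous phases) under which $v$ is bad at time $i$. In iteration $i$, $v$ is independently active with probability $\tau/(8\phi)$; conditional on being active, the preconditions of \alg{Reduce} still hold (solidity is monotone by Definition~\ref{D:solid}, and $H$, $\hat H$ are built once in Step~1 of \alg{d2-Color} and fixed thereafter), so Lemma~\ref{L:progress} applies and $v$ becomes colored with probability at least $c_7\tau/\phi$. Multiplying, $\Pr[v \text{ not bad at time } i+1 \mid \mathcal{H}_{i-1}, v \text{ bad at time } i] \ge p := c_7\tau^2/(8\phi^2)$. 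Equivalently, $\Pr[v \text{ bad at time } i+1 \mid \mathcal{H}_{i-1}, v \text{ bad at time } i] \le 1-p$.

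Chaining these conditional bounds (the event ``bad at time $\rho+1$'' implies ``bad at time $i$'' for every $i \le \rho+1$), we obtain
\[
  \Pr[v \text{ bad at time } \rho+1] \;\le\; (1-p)^\rho \;\le\; \exp(-p\rho) \;=\; \exp\!\Bigl(-\tfrac{c_7 c_3}{8}\log n\Bigr) \;=\; n^{-c_7 c_3/8}.
\]
Choosing $c_3$ large enough that $c_7 c_3/8 \ge c+1$ makes this at most $n^{-c-1}$, and a union bound over the at most $n$ nodes finishes the proof, all conditioned on the global events (solidity of all nodes after Step~2 of \alg{d2-Color}, and correctness of $H$, $\hat H$ per Theorem~\ref{T:similarity}) that are assumed throughout Section~\ref{ssec:correctness} and each hold w.h.p.

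The main subtlety, rather than the arithmetic, is the conditioning: Lemma~\ref{L:progress} gives an unconditional success probability per phase, but we need its hypothesis (active, live, leeway $\ge\tau$, preconditions of \alg{Reduce}) to be a measurable event with respect to $\mathcal{H}_{i-1}$ so that we may chain the bounds. This works because (i) the pseudo-random choices driving \alg{Reduce-Phase} in iteration $i$ (the activation coin, the query generation coins, the random color $\hat c$, and the fresh entry of $R_u$ used in Step~4) are independent of $\mathcal{H}_{i-1}$, and (ii) the structural hypotheses of Lemma~\ref{L:progress} are monotone and already guaranteed by our standing assumptions for the section, so no further conditioning error is incurred.
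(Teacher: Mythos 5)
Your proposal is correct and follows essentially the same route as the paper: per phase, a bad node is active with probability $\tau/(8\phi)$ and, by Lemma~\ref{L:progress}, colored with probability at least $c_7\tau/\phi$ when active, so the failure probability over $\rho = c_3(\phi/\tau)^2\log n$ phases is $(1-c_7\tau^2/(8\phi^2))^{\rho} \le n^{-\Omega(c_3 c_7)}$, finished by a union bound (the paper fixes $c_3 = 32/c_7$ to get $n^{-4}$ per node). Your explicit treatment of the chaining/conditioning and of the fact that ``leeway drops below $\tau$'' is also a success mode is a welcome tightening of what the paper leaves implicit.
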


\begin{proof}
Set $c_3 = 32/c_7$. 
Let $v$ be a live node of leeway at least $\tau$ at the start of $\alg{Reduce}$.
With probability $\tau/(8\phi)$, $v$ is active in a given phase, and with
probability at least $c_7 \tau/\phi$, $v$ becomes colored in a given phase where it is active, by Lemma \ref{L:progress}. Thus, the probability that it remains live after all $\rho = c_3 (\phi/\tau)^2 \log n$ phases is at most 
$(1-c_7\cdot(\tau/\phi)^2/8)^{c_3 (\phi/\tau)^2 \log n} \le e^{-4\log n} \le n^{-4}$.
The probability that some such node remains uncolored is at most $n^{-3}$.
\end{proof}

Observe that after \alg{Reduce}($\phi$,1), all nodes are colored, w.h.p., since a live node always has leeway at least 1. 
Corollary \ref{C:first-rand-result} follows.

\subsection{Algorithm with Improved Final Phase}
\label{ssec:improved}

We now give an improved algorithm for d2-coloring that uses $\Delta^2+1$ colors and runs in $O(\log \Delta\log n)$ rounds. We assume $\Delta$ is known to the nodes.
This is achieved by replacing the final phase of \alg{d2-color} (i.e., the last step) with a different approach.

In the final phase of the improved algorithm, the nodes cooperate to track the colors used by the d2-neighbors of each live node. Thus, they learn the \emph{remaining palette}: the set of colors of $[\Delta^2]$ not used by d2-neighbors. Gathering the information about a single live node is too much for a single node to accumulate, given the bandwidth limitation. Instead, each live node chooses a set of \emph{handlers}, each handling a subrange of its color spectrum. The colored nodes then need to forward their color to the appropriate handler of each live d2-neighbor. After learning about the colors used, each of the multiple handlers choose an unused color at random and forward it to the live node. The live node selects among the proposed colors at random and tries it (which works with constant probability).

Since no routing information is directly available, we need to be careful how the coloring information is gathered at the handlers. We use here a \emph{meet-in-the-middle} approach. Each handler informs a random subset of its d2-neighbors about itself and each colored node sends out its message along a host of short random walks. In most cases, if the numbers are chosen correctly, a random walk will find an informed node, which gets the message to the handler. 

Once the unused palette is available, the coloring can be finished up in $O(\log n)$ rounds in the same fashion as the basic randomized {\congest} algorithm for ordinary coloring.

\begin{quote}
   \textbf{Algorithm} \emph{Improved-d2-Color}

   If $\Delta^2 \ge c_2\log n$ then  \\
   \hspace*{2em} repeat $c_0 \log n$ times: \\
   \hspace*{4em} Each live node picks a random color and \emph{tries} it. \\     
   \hspace*{2em} Form the similarity graphs $H=H_{2/3}$ and $\hat{H} = H_{5/6}$ \\
   \hspace*{2em} for ($\tau \leftarrow c_1 \Delta^2$; $\tau > c_2\log n$; $\tau \leftarrow \tau/2$) \\
   \hspace*{4em}      \alg{Reduce}($2 \tau$, $\tau$) \\
    \alg{LearnPalette}() \\
    \alg{FinishColoring()}
\end{quote}

The main effort of this section is showing how to learn the remaining palette in $O(\log n)$ steps.
We first show how that information makes it easy to color the remaining nodes.

\mypar{Finishing the coloring}
Suppose each node has $O(\log n)$ live d2-neighbors and knows the remaining palette. This includes the case when $\Delta^2 \le c_2\log n$, in which case no d2-neighbors are yet colored. 
We can then simulate the basic randomized algorithm for ordinary colorings with constant overhead, to complete the coloring in $O(\log n)$ rounds.

This algorithm, \alg{FinishColoring}, proceeds as follows:
Each node $v$ repeats the following two-round procedure until it is successfully colored. 
Flipping a random coin, $v$ is quiet or tries a random color from $T'_v$, with equal probability $1/2$.
If it succeeds, it forwards that information to immediate neighbors. They promptly forward it to each of their immediate neighbors $w$, who promptly updated their remaining palette $T'_w$. If a node has a backlog of color notifications to forward, it sends out a \textsc{Busy} message. A node with a \textsc{Busy} neighbor then waits (stays quiet) until all notifications have been forwarded (and all \textsc{Busy} signals have been lifted from its immediate neighbors). 

\begin{lemma}
\alg{FinishColoring} completes in $O(\log n)$ rounds, w.h.p.
\label{l:step7}
\end{lemma}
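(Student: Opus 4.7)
The plan is to reduce this to a standard Johansson-style analysis for a $(\Delta+1)$-type coloring on $G^2$, combined with a pipelining argument that controls the cost of maintaining the remaining palettes $T'_w$ over the communication graph $G$ in the \CONGEST model.

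First, I would establish the standard per-block success probability assuming up-to-date palettes. At every moment, the remaining palette $T'_v$ of a live node $v$ has size at least $\ell(v)+1$, where $\ell(v)$ denotes its current number of live d2-neighbors, since $\Delta^2+1$ colors exceed the at most $\Delta^2-\ell(v)$ colors already used in $v$'s d2-neighborhood. In a two-round block, $v$ tries with probability $1/2$ and, conditional on trying a uniform color from $T'_v$, each live d2-neighbor contributes at most $1/|T'_v|$ to the collision probability while itself being active with probability $1/2$. Hence the expected number of collisions is at most $\ell(v)/(2|T'_v|)\leq 1/2$, Markov's inequality gives conditional success probability at least $1/2$, and the unconditional per-block success probability is at least $1/4$, provided $v$ is not currently suppressed by a \textsc{Busy} neighbor and its palette is current.

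Second, I would bound the congestion generated by the notification protocol. The key observation is that, by hypothesis, each node $u$ satisfies $|N_G(u)\cap L|\leq |N_{G^2}[u]\cap L|=O(\log n)$, so over the whole execution $u$ receives at most $O(\log n)$ color notifications from its own immediate neighbors and must forward each once, in parallel to all incident edges. Consequently, $u$'s total \textsc{Busy} duration is $O(\log n)$ and every edge carries $O(\log n)$ notification messages. I would then charge each round in which $w$ is suppressed to a notification that $w$ simultaneously receives along an incident edge; since any color adopted by a d2-neighbor of $w$ leaves $T'_w$ within a handful of rounds of being delivered, this bounds both $w$'s suppression time and the staleness of its palette.

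Third, combining the two, after $T=\Theta(\log n)$ blocks each still-live node has spent at least $\Omega(\log n)$ blocks unsuppressed with a current palette; within each such block its success probability is at least $1/4$ and the choices are independent across blocks. A Chernoff bound (Prop.~\ref{P:chernoff}) together with a union bound over the $n$ nodes will then yield that every node is colored, w.h.p. The main obstacle I expect is precisely the charging argument in the second step: a naive union bound over $w$'s up to $\Delta$ neighbors gives a suppression bound of $O(\Delta\log n)$, so one must use that each \textsc{Busy} round of a neighbor $u$ delivers an actual message on the edge $\{u,w\}$, that distinct such messages correspond to distinct colored d2-neighbors of $w$, and that there are at most $O(\log n)$ of the latter. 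This is also where the precise semantics of the \textsc{Busy} mechanism in \alg{FinishColoring} enter the argument, and where the real work of the proof will lie.
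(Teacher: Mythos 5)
Your proposal is correct and follows essentially the same two-part structure as the paper's proof: a constant per-round success probability for an unsuppressed node with a current palette (the paper conditions on half the live d2-neighbors being quiet rather than bounding expected collisions, a cosmetic difference), plus an $O(\log n)$ bound on total \textsc{Busy}/waiting time derived from each node having only $O(\log n)$ live d2-neighbors. The one spot where your charging argument is stated loosely---the same colored d2-neighbor can trigger busy rounds at several different neighbors $u$ of $w$ at different times, so "distinct messages give distinct colored d2-neighbors" only holds per edge and one really needs that the per-neighbor arrival processes are all dominated by the single $O(\log n)$-arrival stream of coloring events---is exactly the point the paper itself dispatches in a single sentence, so you have correctly identified where the remaining work lies.
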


\begin{proof}
A node waits for a busy neighbor for at most $O(\log n)$ rounds, since it has that many live d2-neighbors.
Consider then a non-waiting round. Since it is not waiting, it knows its palette exactly.
With probability 1/2, at least half of the live d2-neighbors of $v$ are quiet. In this case, at least half of the colors of $v$'s palette are not tried by d2-neighbors, and hence, $v$ succeeds with probability at least $1/2$. The expected number of non-waiting rounds is therefore $O(\log n)$, and by Chernoff (\ref{eq:concentration}), this holds also w.h.p.
\end{proof}

\mypar{Learning the Available Palette}
Let $\varphi \le c\log n$ be an upper-bound on the leeway of live nodes.
Let $Z$ and $P$ be quantities to be determined.
We call each set $B_i = \{i\cdot \Delta^2/Z, i\cdot \Delta^2/Z+1, i\cdot \Delta^2/Z+2, \ldots, (i+1)\Delta^2/Z-1 \}$, $i=0,1,\ldots, Z-1$, a \emph{block} of colors. The last block $B_{Z-1}$ additionally contains the last color, $\Delta^2$. 
There are then $Z$ blocks 
that partition the whole color space $[\Delta^2]$.

\medskip

   \textbf{Algorithm} \emph{LearnPalette}()
   
    \emph{Precondition}: Live nodes have leeway at most $\varphi \le c_2\log n$.

  \emph{Postcondition}: Live nodes know their remaining palette

\begin{enumerate}
  \item If $\Delta = O(\log n)$, then the nodes learn the remaining palette in $\Delta$ rounds by flooding, and halt.
  \item Each node learns of its live d2-neighbors by flooding.
  \item For each live node $v$ and each block $i \in \{1,\ldots, \Delta\}$ of colors, a random $H$-neighbor $z^i_v$ of $v$ is chosen.
  \item Each node $z^i_v$ picks a random subset $Z_v^i$ of $P$ d2-neighbors (formed as a set of random 2-hop paths).
  It informs them that it "handles" block $i$ of the palette of live node $v$ (which indirectly tells them also the 2-path back to $z^i_v$). 
  \item Each colored node $u$ with color $c_u$ attempts to forward its color to some node in $Z_v^i$, where $i = \lfloor c(u) / \Delta \rfloor$, for each live d2-neighbor $v$. 
  This is done by sending the color along $\Theta(\Delta^2/P \cdot \log n)$ different random 2-paths. The node in $Z_v^i$ then forwards it directly to $z^i_v$. Let $C^i_v$ denote the set of colors that $z^i_v$ learns of.
  \item Each node $z_v^i$ informs $v$ by pipelining of the set $T_v^i = B_i \setminus C_v^i$ of colors missing within its range. 
  \item $v$ informs its immediate neighbors by pipelining of $T_v= \cup_i T_v^i$, the colors that it has not learned of being in its neighborhood. Each such node $w$ returns the set $\hat{T}_{v,w}$, consisting of the colors in $T_v$ used among $w$'s immediate neighbors. $v$ removes those colors from $T_v$ to produce $T'_v = T_v \setminus \cup_w \hat{T}_{v,w}$, which yields the true remaining palette $[\Delta^2] \setminus T'_{v}$.
\end{enumerate}
 \medskip
 
We first detail how a node $u$ selects a set of $m$ random d2-neighbors, as done in Steps 3, 4, and 5. It picks $m$ edges (with replacement) to its immediate neighbors at random and informs each node $w$ of the number $m_w$ of paths it is involved in. Each immediate neighbor $w$ then picks $m_w$ immediate neighbors. This way, $u$ does not directly learn the identity of the d2-neighbors it selects, but knows how to forward messages to each of them. Broadcasting or converge-casting individual messages then takes time $\max_{w \in N_G(v)} m_w$, which is $O(m/\Delta + \log n)$, w.h.p. (by (\ref{eq:concentration})).

The key property of this phase is the following.

\begin{lemma}
$|T_v| = O(\log n)$, for every live node $v$, w.h.p.
\label{l:last}
\end{lemma}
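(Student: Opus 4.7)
The plan is to decompose $T_v$ into two parts: the true remaining palette plus a set of ``failed reports''. Define $R_v \subseteq [\Delta^2]$ to be the colors not used by any d2-neighbor of $v$, and $F_v$ to be the colors $c$ held by some d2-neighbor $u$ of $v$ for which the Step~5 random-walk message from $u$ failed to reach the handler $z_v^{\lfloor c/\Delta\rfloor}$. Then $T_v\subseteq R_v\cup F_v$. The precondition together with the definition of leeway gives $|R_v|\le \varphi\le c_2\log n$, so the task reduces to bounding $|F_v|$.

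\paragraph{Per-neighbor success probability.}
Fix a colored d2-neighbor $u$ of $v$ with color $c(u)\in B_i$. Success for $u$ means that one of the $R=\Theta((\Delta^2/P)\log n)$ random 2-paths emitted by $u$ lands in $Z_v^i$ and is not dropped. Since $z_v^i$ is an $H$-neighbor of $v$ and $v$ is solid (Obs.~\ref{O:sparse}), Thm.~\ref{T:similarity} gives $|N_{G^2}(v)\cap N_{G^2}(z_v^i)|\ge (2/3)\Delta^2$. Solidity further implies that $G^2[v]$ has at most $\Delta^2\zeta=O(\Delta^2\log n)$ non-edges, so by Markov all but $O(\log n)$ nodes $u\in N_{G^2}(v)$ have fewer than $\Delta^2/6$ non-d2-neighbors inside $N_{G^2}(v)$, yielding $|N_{G^2}(u)\cap N_{G^2}(z_v^i)|\ge \Delta^2/2$. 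For such a \emph{typical} $u$, the random $P$-subset $Z_v^i$ of $N_{G^2}(z_v^i)$ satisfies $|Z_v^i\cap N_{G^2}(u)|=\Omega(P)$ w.h.p.\ by Chernoff (taking $P=\omega(\log n)$). A random 2-path from $u$ hits any given d2-neighbor with probability $\Omega(1/\Delta^2)$, so it lands in $Z_v^i\cap N_{G^2}(u)$ with probability $\Omega(P/\Delta^2)$; consequently, all $R$ trials miss with probability at most $(1-\Omega(P/\Delta^2))^R \le n^{-c}$ for a suitable constant hidden in $R$.

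\paragraph{Finishing and main obstacle.}
A union bound over the at most $\Delta^2$ colored d2-neighbors of $v$ then shows that every typical $u$ successfully reports w.h.p., so $|F_v|$ is dominated by the $O(\log n)$ atypical outliers flagged by Markov. Together with $|R_v|=O(\log n)$ this gives $|T_v|=O(\log n)$, and a final union bound over live $v$ completes the claim. The main obstacle is the congestion analysis sitting behind the word ``success'': each colored node launches $\Theta((\Delta^2/P)\log n)$ random 2-paths per live neighbor, and the recipient in $Z_v^i$ must then forward the color back to $z_v^i$ along a 2-path. One must verify, via Chernoff bounds on per-edge and per-node loads, that these loads are $O(\log n)$, so that all walks and subsequent forwardings can be pipelined in $O(\log n)$ rounds without systematic drops. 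This balance is precisely the meet-in-the-middle tradeoff motivating the choice of $P$ and $R$.
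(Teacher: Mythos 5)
Your proof is correct and follows essentially the same route as the paper's: leeway $O(\log n)$ gives sparsity $\zeta=O(\log n)$ via the contrapositive of Prop.~\ref{P:sparsity}, all but $O(\log n)$ colored d2-neighbors have a large common neighborhood with the handler, and for those the $\Theta((\Delta^2/P)\log n)$ random 2-paths hit $Z_v^i$ w.h.p., so only the exceptional neighbors (plus the genuinely unused colors) can land in $T_v$. The only (harmless) differences are that you identify the $O(\log n)$ exceptional neighbors by a direct Markov count on the non-edges of $G^2[v]$, where the paper instead invokes Lemma~\ref{L:h-degree}(1) to restrict the failures to $N_{G^2}(v)\setminus N_H(v)$ of size at most $40\zeta$, and that you explicitly separate out the $|R_v|=O(\log n)$ remaining-palette colors, which the paper's final inequality elides.
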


\begin{proof}
By assumption, live node $v$ has leeway $O(\log n)$ at the start of the algorithm, and thus it has slack $O(\log n)$. 
By the contrapositive of Prop.~\ref{P:sparsity}, it is $\zeta$-sparse, for $\zeta=O(\log n)$. 
Thus, the $H$-degree of $v$ is at least $\Delta^2 - 40\zeta$, by Lemma~\ref{L:h-degree}(1). For all $H$-neighbors of $v$, a random 2-hop walk has probability at least $|Z_v^i|/\Delta^2 = P/\Delta^2$ of landing in $Z_v^i$. 
Thus, w.h.p., one of the $\Theta((\Delta^2/P)\log n)$ random walks ends there, resulting in the color being recorded in $C_v$. 
Hence, w.h.p., $|T_v| \le |N_{G^2}(v)\setminus N_H(v)| \le 40\zeta = O(\log n)$.
\end{proof}

A careful accounting of the time spent yields that the dominant terms of the complexity are: $O(\log n)$ (Steps 2, 6-7), $O(PZ\varphi/\Delta^3) = O(PZ (\log n)/\Delta^3)$ (Step 4), $O(\Delta\varphi/P \log n) = O(\Delta(\log n)^2/P)$ (first half of Step 5), and $O(\Delta/Z \log n)$ (second half of Step 5). To optimize, we set $Z = \Delta$ and $P = \Delta \sqrt{\Delta\log n}$, for time complexity of $O(\log n (1 + \sqrt{(\log n)/\Delta}))$, which is $O(\log n)$ when $\Delta = \Omega(\log n)$. 

\begin{theorem}
The time complexity of \alg{LearnPalette}($\varphi$) with $\varphi=O(\log n)$ is $O(\log n)$, when $\Delta = \Omega(\log n)$.
\label{T:learnpalette}
\end{theorem}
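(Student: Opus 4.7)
The plan is to walk through each of the seven steps of \alg{LearnPalette} and bound its round complexity, using the parameter choices $Z = \Delta$ and $P = \Delta\sqrt{\Delta\log n}$, then sum the contributions. Since we are in the regime $\Delta = \Omega(\log n)$, Step~1 does not trigger.

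For the ``easy'' Steps~2, 3, 6, and~7, the information each live node must exchange with its neighborhood is of size $O(\log n)$: by the precondition there are at most $\varphi = O(\log n)$ live d2-neighbors per node, and $|T_v| = O(\log n)$ by Lemma~\ref{l:last}. Pipelining through immediate neighbors then gives $O(\log n)$ rounds per step, because any edge's congestion is at most the number of live d2-neighborhoods it sits inside, which is $O(\log n)$. Step~3 additionally invokes Lemma~\ref{L:rand-nbor} to generate $Z = \Delta$ random $H$-neighbors per live node; the parallel execution across all live nodes stays within $O(\log n)$ rounds since each node lies in at most $\varphi$ live neighborhoods and amortizes the random-bit broadcasts.

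The substantive analyses are for Steps~4 and~5. In Step~4, each handler $z_v^i$ notifies $P$ random d2-neighbors of its role via 2-hop paths. The expected number of (live node, handler) pairs incident to a given node is $O(\varphi Z/\Delta)$, and each handler dispatches $P$ messages, so the edge-load is $O(PZ\varphi/\Delta^3) = O(\sqrt{(\log n)/\Delta}\cdot \log n) = O(\log n)$. For the first half of Step~5, each colored node sends its color along $\Theta((\Delta^2/P)\log n)$ random 2-paths per live d2-neighbor, for edge-load $O(\Delta\varphi(\log n)/P) = O((\log n)\sqrt{(\log n)/\Delta}) = O(\log n)$. The second half (informed recipients forwarding collected colors to $z_v^i$) pipelines the $O(\Delta^2/Z) = O(\Delta)$ colors of each block back along established 2-paths, yielding $O((\Delta/Z)\log n) = O(\log n)$. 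Lemma~\ref{l:last} then guarantees that the meet-in-the-middle delivery is complete, so every live node learns its true remaining palette w.h.p.

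I expect the main obstacle to be the concurrent-congestion analysis of Steps~4 and~5: one must argue carefully that the random choices of handlers, of informed d2-neighbors, and of the 2-paths used in delivery spread the messages evenly enough that no single edge is overloaded even under worst-case topology. The precondition $\varphi = O(\log n)$ is crucial here, as it bounds how many live d2-neighborhoods each edge may lie inside; combined with the parameter balance $Z = \Delta$ and $P = \Delta\sqrt{\Delta\log n}$, this is precisely what makes all four dominant terms collapse to $O(\log n)$ whenever $\Delta = \Omega(\log n)$.
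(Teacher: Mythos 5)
Your proposal follows essentially the same route as the paper's proof: a step-by-step edge-congestion accounting that isolates exactly the same four dominant terms, $O(\log n)$ for Steps 2, 6--7, $O(PZ\varphi/\Delta^3)$ for Step 4, $O(\Delta\varphi(\log n)/P)$ and $O((\Delta/Z)\log n)$ for the two halves of Step 5, balanced by the same choices $Z=\Delta$ and $P=\Delta\sqrt{\Delta\log n}$. The only small deviation is that for Step 3 the paper does not invoke Lemma~\ref{L:rand-nbor} (which would cost $O(\Delta+\log n)$ for $\Delta$ samples) but instead uses the lightweight two-level random-edge selection described just before the theorem, giving $O(m/\Delta+\log n)=O(\log n)$ directly; your conclusion for that step is nonetheless correct.
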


Combining Thm.~\ref{T:learnpalette} and Lemma \ref{l:step7} with
Thm.~\ref{T:reduce-correct} of the previous subsection,
we obtain our main result.

\smallskip
\textsc{\Cref{thm:d2ColoringRand}.} 
\emph{There is a randomized \CONGEST algorithm that d2-colors a graph with $\Delta^2+1$ colors in $O(\log \Delta \log n)$ rounds, with high probability.}

\section{Deterministic $G^2$-Coloring}
\label{S:det-diam}
Completely independent of the rest of this section we use \Cref{ssec:sumDeltaDeta} to summarize our results on efficient deterministic algorithms when the dependence on $n$ is limited, i.e., algorithms with a runtime of $O(f(\Delta)+\logstar n)$. 
The goal of the current section is to color the square $G^2$ of the network graph $G$ with $(1+\eps)\Delta^2$ colors in polylogarithmic time for some $\eps>0$ and some globally known upper bound $\Delta$ on the maximum degree of the graph $G$. 

\paragraph{Coloring $G$ with $(1+\eps)\Delta$ colors in $\polylog n$ rounds:} If we were to compute a $(1+\eps)\Delta$ coloring of $G$ instead of $G^2$, we could recursively split $G$ into two graphs with roughly half the maximum degree. If we could halve the maximum degree precisely enough such that after $h=O(\log \Delta)$ recursion levels, we would have $p_h=2^h$ graphs each with maximum degree $\Delta_h=(1+\eps)2^{-h}\Delta$. We could then simply color each of them in $O(\Delta_h+\logstar n)$ rounds with a distinct color palette with $\Delta_h+1$ colors each (e.g. using the algorithm in \cite{BEG18}) and obtain a $p_h\cdot (\Delta_h+1)\approx(1+\eps)\Delta$ coloring of $G$. In \Cref{ssec:derandSplitting} we show that one can indeed deterministically split the original network graph with the necessary precision efficiently in the {\congest} model, and 
we use that in this section to color $G$.  We obtain the deterministic splitting algorithm from derandomizing a simple randomized algorithm with the method of conditional expectation. For more details on the derandomization we refer to \Cref{ssec:derandSplitting}; here, we only want to point out that most care is needed when formally reasoning that vertices can compute certain conditional expectations. When computing splittings of the original network graph $G$,
this only depends on information that $v$ can easily learn, e.g., because it is contained in $v$'s immediate neighborhood.

\paragraph{Coloring $G^2$ with $(1+\eps)\Delta^2$ colors in $\polylog n$ rounds:} To color $G^2$ instead of $G$ we would like  to mimic the same approach. However, now the respective conditional expectations depends on information in the $2$-hop neighborhood of a node and in most graphs vertices cannot learn this information efficiently. Instead we proceed as follows: We split $G$ into $p=2^h$ graphs $G_1,G_2,\ldots,G_p$ of maximum degree $\Delta_h=(1+\eps/4)2^{-h}\Delta$. Then, for each $i=1,\ldots, p$ we consider the subgraph $H_i$ of $G^2$ that is induced by the vertices $V(G_i)$ of $G_i$. As $G_i$ has maximum degree $\Delta_h$ we obtain that $H_i$ has maximum degree $\Delta\cdot \Delta_h$ and furthermore we show that the graphs $H_1,\ldots, H_p$ are such that any {\congest} algorithm on each of the subgraphs can be executed in $G$ in parallel with a multiplicative $\Delta_h$ overhead in the runtime---this step needs additional care and a more involved definition of the splitting problem that we call \emph{local refinement splitting}. 
Then we use this property to apply the algorithm mentioned in the previous paragraph to color each $H_i$ in parallel with $(1+\eps/4)\Delta\cdot\Delta_h$ colors using a distinct color palette. The induced coloring is a coloring of $G^2$ with $(1+\eps)\Delta^2$ colors.

\smallskip
We now formally define the splitting problem that we need to solve. 
\begin{restatable}[Local Refinement Splitting]{definition}{refinementSplitting}
Let $G=(V,E)$ be a graph whose vertices are partitioned into $p\leq n$ groups $V_1,\ldots, V_p$, let $\lambda>0$ be a parameter and for $v\in V$ let $deg_i(v)$ be the number of neighbors of $v$ in $V_i$. An (improper) $2$-coloring of the vertices of $G$ with two colors (red/blue) is called a $\lambda$-local refinement splitting if each vertex $v$ with $deg_i(v)\geq 12\log n/\lambda^2$ has at most $(1+\lambda) \deg_i(v)/2$ neighbors of each color in set $V_i$ for all $i=1,\ldots,p$. 
\end{restatable}

One can show that the local refinement splitting problem is solved w.h.p. if each vertex picks one of the two colors uniformly at random and one can show that this holds even if nodes only use random coins with limited independence. We obtain the following result by derandomizing this zero-round algorithm with the help of a suitable network decomposition of $G^2$ which can computed efficiently with the results in \cite{RG19}.  
\begin{restatable}[Deterministic Local Refinement Splitting]{theorem}{derandRefinementSplitting}
\label{lem:derandSplitting}
For any $\lambda>0$ there is a $O(\log^8 n)$ round deterministic {\congest} algorithm to compute a $\lambda$-local refinement splitting.
\end{restatable}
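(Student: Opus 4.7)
The plan is to derandomize the obvious zero-round algorithm in which every vertex independently picks red or blue with probability $1/2$. For any pair $(v, V_i)$ with $d := deg_i(v) \geq 12\log n / \lambda^2$, the number of red neighbors of $v$ inside $V_i$ is a sum of i.i.d.\ Bernoullis of mean $d/2$, so a standard Chernoff bound upper-bounds the probability of a $(1+\lambda)$-multiplicative deviation by $2\exp(-\lambda^2 d/6) \leq n^{-3}$; a union bound over the at most $2np \leq 2n^2$ bad events then shows a uniform random coloring is a valid $\lambda$-local refinement splitting w.h.p. By standard arguments (e.g., Schmidt--Siegel--Srinivasan), these Chernoff tails carry over to $\Theta(\log n)$-wise independent coins, so it suffices to share a random seed $R$ of length $\ell = O(\log^2 n)$ among all vertices and drive the coloring through a $\Theta(\log n)$-wise independent hash family.

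I would then derandomize $R$ bit-by-bit via the method of conditional expectation. Define the pessimistic estimator $\Phi(R) = \sum_{v,i} \Phi_{v,i}(R)$, where $\Phi_{v,i}$ is the Chernoff moment-generating-function upper bound on the conditional failure probability of the bad event at $(v,V_i)$ given the current seed prefix; by the union bound $\Phi(\emptyset) < 1$, so keeping $\Phi$ non-increasing as each bit is fixed yields a valid splitting once all bits of $R$ are set. The critical structural observation is that every $\Phi_{v,i}$ depends only on the joint conditional distribution of the colors of $v$'s immediate neighbors in $V_i$, and since those colors are functions of $R$ via the $\Theta(\log n)$-wise independent hash, $v$ can evaluate $\Phi_{v,i}$ locally from the identities of its immediate neighbors (learned in one round) together with the current seed prefix---this is exactly the ``care needed'' point that the paper flags for conditional-expectation arguments in the $G$-splitting case.

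To fix each bit in \CONGEST, I would use the $\polylog n$-round deterministic network decomposition of $G^2$ obtainable through the framework of \cite{det_congest_coloring} (built on \cite{RG19}), which yields $O(\log n)$ color classes of clusters, each cluster of $G^2$-diameter $\polylog n$, together with a Steiner tree in $G$ of matching depth along which aggregations inside the cluster run in $\polylog n$ rounds on $G$. Color classes are processed sequentially and clusters within each class in parallel: for each unfixed bit $r_j$, the root of a cluster $C$ aggregates $\sum_{v \in C}[\Phi_v(r_j=1) - \Phi_v(r_j=0)]$ along its Steiner tree, picks the value that does not increase $\Phi$, and broadcasts it. The main obstacle is the non-interference bookkeeping: two bad-event vertices share a relevant random variable iff they have a common $G$-neighbor iff they are $G^2$-adjacent, so working one $G^2$-color class at a time is precisely what prevents different clusters from updating the estimator for the same bad event at once, and it is also what justifies formally that the per-cluster aggregates combine into a decrease of the global $\Phi$. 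Once this is set up, multiplying the seed length $O(\log^2 n)$, the number of color classes $O(\log n)$, the per-bit intra-cluster aggregation cost $\polylog n$, and the one-time $\polylog n$ decomposition cost yields the claimed $O(\log^8 n)$ round bound.
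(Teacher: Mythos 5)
Your overall route is the paper's: derandomize the zero-round random red/blue coloring via $\Theta(\log n)$-wise independence and the method of conditional expectation, organized by a network decomposition of $G^2$ whose color classes are processed sequentially and whose clusters are processed in parallel. However, there is a genuine inconsistency in how you handle the random seed. In your analysis you posit a \emph{single global} seed $R$ of length $O(\log^2 n)$ driving all vertices' colors through one $k$-wise independent hash family, but in your algorithm each cluster's root independently ``picks the value'' of bit $r_j$. If the seed is global, two clusters of the same color class may demand conflicting values for the same bit, and your non-interference claim---that two bad events share a relevant random variable iff the vertices are $G^2$-adjacent---is simply false: under a single $k$-wise independent hash, every vertex's color is a function of the same seed bits, so all bad events depend on all bits. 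The paper resolves this exactly where your argument breaks: it uses an \emph{independent seed per cluster} (coins $10\log n$-wise independent within a cluster, fully independent across clusters). Then $F_v$ depends only on the seeds of the clusters containing $v$'s immediate neighbors, and since same-color clusters of a $G^2$-decomposition are at distance $\ge 3$ in $G$, at most one cluster per color class influences any given $F_v$; this is what makes the per-cluster greedy bit choices compose into a global decrease of the objective. Note also that with per-cluster seeds a vertex must additionally learn the cluster membership of each neighbor and the colors of neighbors in already-processed clusters in order to evaluate its conditional expectations---a detail your locality claim omits but which the paper spells out.

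A secondary, more technical caveat: you propose the Chernoff \emph{moment-generating-function} bound as the pessimistic estimator, but the concentration bound you are invoking for $\Theta(\log n)$-wise independent coins (Schmidt--Siegel--Srinivasan style) is proved via a $k$-th moment argument, not the MGF, so the MGF estimator is not obviously a valid pessimistic estimator in the limited-independence setting. The paper sidesteps this entirely by having each vertex compute the \emph{exact} conditional expectation $E[F_v\mid\cdot]$ locally (local computation is free in \CONGEST), which always satisfies the averaging property. Either fix---switching to the $k$-th moment estimator or to exact conditional expectations---is routine, but as written this step is not justified. The runtime accounting and the use of the \cite{RG19}-based decomposition of $G^2$ with congestion $O(\log n)$ otherwise match the paper.
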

Due to its length the formal proof of \Cref{lem:derandSplitting} (including the aforementioned claim about the randomized algorithm with limited independence) is deferred to  \Cref{app:localRefinementSplitting}. 

We now show how to use \Cref{lem:derandSplitting} to recursively split the graph deterministically into graphs with smaller maximum degree and further any vertex has a small number of neighbors in each such subgraph. Then, in \Cref{thm:coloringWithSplitting}, we use the former property of this partitioning result to compute a $(1+\eps)\Delta$ coloring of $G$ and in \Cref{thm:G2withSplitting} we use both properties to compute a $(1+\eps)\Delta^2$ coloring of $G^2$. 
\begin{lemma}
\label{lem:multiPhaseSplitting}
For any $\eps>0$ there is a $O(\log^8 n)$-round deterministic {\congest} algorithm to partition a graph into $p=2^h$ parts $V_1,\ldots, V_p$ such that every vertex $v\in V$ has at most 
$\Delta_h=(1+\eps)2^{-h}\Delta=O(\eps^{-2}\log^3 n)$
 neighbors in each $V_i$ where $h$ is the smallest integer such that $(1+\eps/(10\log \Delta))^h2^{-h}\Delta\leq 1200\eps^{-2} \log^3 n$ holds.
\end{lemma}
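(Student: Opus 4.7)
The plan is to apply the local refinement splitting of \Cref{lem:derandSplitting} iteratively for $h$ levels with parameter $\lambda := \eps/(10\log\Delta)$. Starting from the trivial partition $V_1^{(0)} = V$, at each level $j = 0,1,\ldots, h-1$ I invoke \Cref{lem:derandSplitting} on the current partition $V_1^{(j)}, \ldots, V_{2^j}^{(j)}$ to obtain a red/blue coloring of $V$, and refine every current part into its red and blue halves, doubling the number of parts to $2^{j+1}$.

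Let $\Delta_j$ denote the maximum over all $v \in V$ and all parts of the level-$j$ partition of the number of neighbors of $v$ in that part. I first argue degree shrinkage. Fix $v$ and a part $V_i^{(j)}$. If $\deg_i^{(j)}(v) \geq 12\log n/\lambda^2$, the splitting guarantees at most $(1+\lambda)\deg_i^{(j)}(v)/2$ neighbors of each color in $V_i^{(j)}$, so $v$'s degree into either new part is at most $(1+\lambda)\Delta_j/2$. If instead $\deg_i^{(j)}(v) < 12\log n/\lambda^2$, no guarantee is needed because degrees into refined parts are monotonically non-increasing (a new part is a subset of the old one). Thus in either case the eventual degree satisfies $\Delta_h \leq \max\bigl\{(1+\lambda)^h 2^{-h}\Delta,\ 12\log n/\lambda^2\bigr\}$. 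With $h \leq \log\Delta$ we have $(1+\lambda)^h \leq e^{h\lambda} \leq e^{\eps/10} \leq 1+\eps$, and $12\log n/\lambda^2 = 1200\log n \log^2\Delta/\eps^2 = O(\eps^{-2}\log^3 n)$. Combined with the stopping criterion $(1+\lambda)^h 2^{-h}\Delta \leq 1200\eps^{-2}\log^3 n$ from the lemma statement, this yields both $\Delta_h \leq (1+\eps)2^{-h}\Delta$ and $\Delta_h = O(\eps^{-2}\log^3 n)$ as claimed.

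The round complexity follows by composition: each of the $h = O(\log\Delta) = O(\log n)$ invocations of \Cref{lem:derandSplitting} runs in $O(\log^8 n)$ rounds independently of the number of current parts, since the splitter returns a single global red/blue coloring against any given partition. This gives the claimed polylogarithmic total in the stated form.

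The main obstacle I anticipate is a careful verification that the splitter composes cleanly across levels. Concretely, when \Cref{lem:derandSplitting} is applied at level $j$, one has to feed it the refined partition obtained from the level-$(j-1)$ coloring and check that (a) the precondition $p \leq n$ of \Cref{lem:derandSplitting} is preserved throughout (it is, since $2^h \leq 2\Delta \leq 2n$), and (b) vertices whose degree into some part has already dropped below the threshold $12\log n/\lambda^2$ at a past level cannot have that degree grow back above the threshold at future levels. Both properties are clean consequences of the fact that refining a partition only splits parts into smaller pieces, so the per-part degree of every vertex is monotonically non-increasing along the sequence of levels.
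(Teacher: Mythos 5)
Your construction and degree analysis follow the paper's proof essentially verbatim: iterate the local refinement splitter of \Cref{lem:derandSplitting} for $h$ levels with $\lambda=\Theta(\eps/\log\Delta)$, use the $(1+\lambda)/2$ shrinkage per level while degrees stay above the threshold $12\log n/\lambda^2$, and bound $(1+\lambda)^h\le e^{\eps/10}$. Your explicit handling of parts whose degree falls below the threshold early (via monotonicity of per-part degrees under refinement) is, if anything, slightly more careful than the paper's, and your observation that the precondition $p\le n$ is preserved is correct. (One small caveat: for the inequality $e^{\eps/10}\le 1+\eps$ and the final bound $\Delta_h\le(1+\eps)2^{-h}\Delta$ you implicitly need $\eps$ bounded; the paper guards this by working with $\eps'=\min\{1,\eps/4\}$.)

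There is, however, one genuine gap: the round complexity. You state that each of the $h=O(\log\Delta)$ invocations costs $O(\log^8 n)$ rounds and conclude "the claimed polylogarithmic total in the stated form," but $h\cdot O(\log^8 n)=O(\log^9 n)$, which is a $\log$ factor worse than the $O(\log^8 n)$ bound asserted in the lemma. The paper closes this gap with an additional observation that your argument is missing: the $O(\log^8 n)$ cost of \Cref{lem:derandSplitting} is dominated by computing the network decomposition of $G^2$ (via \Cref{thm:rozhon}), and this decomposition is independent of the input partition being refined. One can therefore compute it once up front and reuse it in all $h$ calls; the per-call cost of the derandomization itself (iterating over color classes and seed bits) is only $O(\log^7 n)$, so the total becomes $O(\log^8 n + h\cdot\log^7 n)=O(\log^8 n)$. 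Without this reuse, your proof only establishes an $O(\log^9 n)$ bound.
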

\begin{proof}
Let $\eps'=\min\{1,\eps/4\}$ and let $h=O(\log \Delta)$ be the smallest integer such that \begin{align}
(1+\eps/(10\log \Delta))^h2^{-h}\Delta\leq 1200\eps^{-2} \log^3 n~.
\end{align}

Then recursively apply \Cref{lem:derandSplitting} with $\lambda=\eps'/(10\log \Delta))$ where we begin with the trivial partition $V_1=V$ and with each recursion level each part of the partition is naturally split into two parts according to the two colors of the local refinement splitting. The \emph{output partition} of recursion level $i$ serves as the \emph{input partition} for recursion level $i+1$. 
 Due to the choice of $h$ we have that the guaranteed maximum degree $\Delta_i=(1+\lambda)^h2^{-h}\Delta$ after recursion level $i< h$ is at least at least $12\log n /\lambda^{2}$ and thus it decreases by a $(1+\lambda)2^{-1}$ factor with each iteration, that is, after iteration $h$ we have $p=2^h$ subgraphs $G_1,\ldots,G_p$, each with maximum degree at most  
\begin{align}
(1+\eps/(10\log \Delta))^h2^{-h}\Delta & \stackrel{(*)}{\leq} e^{\eps/(10\log \Delta)\cdot h}2^{-h}\Delta\leq e^{\eps/10}2^{-h}\Delta\\
& \stackrel{(**)}{\leq} (1+\eps/5)2^{-h}\Delta\leq (1+\eps)2^{-h}\Delta=\Delta_h~.
\end{align}
At $(*)$ we used that $(1+x)\leq e^x$ and $(**)$ we used $e^x\leq 1+2x$ for $0\leq x\leq 1$. 
Further, with the same calculation---recall that we solve a local refinement splitting in each recursion level---any vertex has at most $(1+\eps)2^{-h}\Delta$ neighbors in each $V_i$. 
Further, by the definition of $h$, we obtain $\Delta_h=O(\eps^{-2}\log^3 n)$~.

We obtain a runtime of $h\cdot O(\log^8 n)=O(\log^9 n)$ rounds  by $h$ applications of \Cref{lem:derandSplitting}. The runtime in the proof of \Cref{lem:derandSplitting} is dominated by the computation of a so called \emph{network decomposition}. If we reuse the same network decomposition in each call of \Cref{lem:derandSplitting} the runtime can be reduced to $O(\log^8 n)$ rounds. 
\end{proof}
We now use \Cref{lem:multiPhaseSplitting} to compute a $(1+\eps)$-coloring of $G$. 
\begin{theorem}
\label{thm:coloringWithSplitting}
For any constant $\eps>0$ there is a deterministic {\congest} algorithm that computes a $(1+\eps)\Delta$ coloring of $G$ in $O(\log^8 n+\eps^{-2}\log^3 n)$ rounds.
\end{theorem}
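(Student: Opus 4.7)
The plan is to apply the splitting decomposition of \Cref{lem:multiPhaseSplitting} to partition $V$ into $p=2^h$ vertex-disjoint parts of low internal degree, and then $(\Delta_h+1)$-color each induced subgraph independently using disjoint palettes and the deterministic $(\Delta+1)$-coloring {\congest} algorithm of \cite{BEG18}, which runs in $O(\Delta+\logstar n)$ rounds.

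I would first dispose of the small-degree regime: if $\Delta$ is below the threshold $1200\eps^{-2}\log^3 n$ appearing in \Cref{lem:multiPhaseSplitting}, I run \cite{BEG18} directly on $G$, producing a proper $(\Delta+1)$-coloring in $O(\Delta+\logstar n)=O(\eps^{-2}\log^3 n)$ rounds, which meets both the color and the time budget in this regime.

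For larger $\Delta$ I invoke \Cref{lem:multiPhaseSplitting} with parameter $\eps'=\eps/3$, producing in $O(\log^8 n)$ rounds a partition $V_1,\ldots,V_p$ with $p=2^h$ such that every vertex has at most $\Delta_h=(1+\eps')2^{-h}\Delta=O(\eps^{-2}\log^3 n)$ neighbors inside any single part; in particular each induced subgraph $G[V_i]$ has maximum degree at most $\Delta_h$. I then allocate pairwise disjoint palettes $C_1,\ldots,C_p$ of size $\Delta_h+1$ each, and in parallel over all $i$ run the deterministic $(\Delta_h+1)$-coloring algorithm of \cite{BEG18} on $G[V_i]$ with palette $C_i$. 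Because the parts are vertex-disjoint, every node participates in exactly one such invocation and communicates only along edges contained in its own $G[V_i]$, so the parallel executions fit within the {\congest} bandwidth. This step contributes $O(\Delta_h+\logstar n)=O(\eps^{-2}\log^3 n)$ rounds.

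Correctness is immediate: edges inside some $V_i$ are properly colored by the per-part routine, while edges crossing two parts are properly colored by disjointness of the palettes. For the color budget, the total number of colors used equals $p(\Delta_h+1)=(1+\eps')\Delta+p$. The minimality of $h$ in \Cref{lem:multiPhaseSplitting} implies $2^{-(h-1)}\Delta=\Omega(\eps^{-2}\log^3 n)$, hence $p=2^h=O(\eps^{2}\Delta/\log^3 n)\leq(\eps-\eps')\Delta$ in the large-$\Delta$ regime, and the total color count is therefore at most $(1+\eps)\Delta$. Adding the splitting and the coloring costs gives the claimed $O(\log^8 n+\eps^{-2}\log^3 n)$ round complexity. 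I do not anticipate a substantial obstacle; the only point needing a sanity check is that the parallel per-part colorings are bandwidth-feasible in {\congest}, which is immediate from the vertex-disjointness of the parts.
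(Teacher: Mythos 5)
Your proposal is correct and follows essentially the same route as the paper's proof: handle small $\Delta$ directly with \cite{BEG18}, otherwise apply \Cref{lem:multiPhaseSplitting} (the paper uses $\eps'=\eps/2$ rather than $\eps/3$), color the vertex-disjoint parts in parallel with disjoint $(\Delta_h+1)$-palettes, and use the minimality of $h$ to absorb the additive $2^h$ into the $\eps\Delta$ slack. The round-complexity and color-count accounting match the paper's.
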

\begin{proof}
If $\Delta=O(\eps^{-2}\log^3 n)$ use, e.g., the algorithm of \cite{BEG18} to color the graph with the desired number of colors in $O(\eps^{-2}\log^3 n+\logstar n)$ rounds.
Otherwise, apply \Cref{lem:multiPhaseSplitting} with $\eps'=\eps/2$ to obtain a partition $V_1,\ldots,V_p$ of the vertices where $p=2^h$ (where $h$ is chosen as in \Cref{lem:multiPhaseSplitting}) and the maximum degree of $G[V_i]$ is at most $\Delta_h=(1+\eps/2)2^{-h}\Delta$. 
Then color each of the subgraphs in parallel (no vertex nor an edge is used in more than one subgraph) with a distinct set of $\Delta_h+1$ colors in $O(\Delta_h+\logstar n)=O(\eps^{-2}\log^3 n)$ rounds using e.g. the algorithm of \cite{BEG18}. The induced coloring is a proper coloring of $G$ as we use disjoint color palettes for distinct subgraphs and the total number of colors is
\begin{align}
2^h\cdot (\Delta_h+1) & =2^h\cdot (1+\eps/2)2^{-h}\Delta+2^h\leq (1+\eps/2)\Delta+2^h\leq  (1+\eps)\Delta~.
\end{align}
In the very last inequality we used that $h$ is the smallest integer with the aforementioned property from which one can deduce that
$2^h\leq \eps/2\cdot \Delta$.
As a runtime we obtain $O(\log^8 n)$ rounds from the application of \Cref{lem:multiPhaseSplitting} and $O(\eps^{-2}\log^3 n)$ rounds from coloring the subgraphs or the whole graph if the maximum degree $\Delta$ is in $O(\eps^{-2}\log^3 n)$ to begin with.
\end{proof}

To color $G^2$ we first show that a partition obtained by recursively applying a local refinement splitting is helpful to run {\congest} algorithms on the induced  subgraphs of $G^2$ in parallel. 
\begin{lemma} 
\label{lem:CONGESTsimulation}
Let $V_1,\ldots,V_p$ be a partition of the graph such that every vertex $v\in V$ has at most $\Delta'$ $G$-neighbors in $V_i$ for each $i=1,\ldots, p$ and let $A_1, \ldots, A_p$ be algorithms where algorithm $A_i$ runs on $H_i=G^2[V_i]$. 
Then, we can execute one round of each of the algorithms in parallel in $O(\Delta')$ {\congest} rounds in $G$. 
\end{lemma}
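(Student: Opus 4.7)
\medskip
\noindent\textbf{Plan of proof.}
The plan is to describe an explicit round-by-round simulation and then to bound the congestion each $G$-edge has to carry. An $H_i$-edge $\{u,v\}$ corresponds to either a $G$-edge between two vertices in $V_i$ or to a $2$-path $u-w-v$ in $G$ whose endpoints are in $V_i$; in the latter case the intermediate vertex $w$ need not lie in $V_i$. For one round of $A_i$, each vertex $v\in V_i$ wants to deliver an $O(\log n)$-bit message to each of its $H_i$-neighbors, so it must send, for every $G$-neighbor $w$, the messages destined for the $V_i$-neighbors of $w$ (together with the message destined for $w$ itself, if $w\in V_i$). Symmetrically, $w$ will have to relay the messages it receives from its other $V_i$-neighbors back to $v$. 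The plan is to let each vertex perform exactly this ``push and forward'' procedure, one message per recipient, in parallel for all algorithms $A_1,\ldots,A_p$.

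First I would fix an edge $\{v,w\}\in E(G)$ with $v\in V_i$ and $w\in V_j$ (possibly $i=j$) and count how many $O(\log n)$-bit messages this edge carries per simulated round. For $A_i$: on the edge $\{v,w\}$, $v$ transmits one message for each $V_i$-neighbor of $w$ distinct from $v$ (and possibly one for $w$ itself if $w\in V_i$), and $w$ transmits in the reverse direction the messages addressed to $v$ from its other $V_i$-neighbors. By the hypothesis of the lemma, $w$ has at most $\Delta'$ neighbors in $V_i$, so the number of messages in each direction due to $A_i$ is at most $\Delta'+1$. For $A_j$ an entirely symmetric count, swapping the roles of $v$ and $w$ and using the bound on the number of $V_j$-neighbors of $v$, gives another $\Delta'+1$ messages per direction. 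If $i\neq j$ these are the only two algorithms that can use the edge $\{v,w\}$ (since $v$ participates only in $A_i$ and $w$ only in $A_j$), and if $i=j$ only one algorithm uses it. In either case, at most $O(\Delta')$ messages are transmitted along the edge in each direction per simulated round.

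Now I would put it together: since each $G$-edge has to deliver $O(\Delta')$ many $O(\log n)$-bit messages, and the simulation needs no routing beyond a single edge traversal (the push step from $v$ to $w$ followed by the forward step from $w$ to the $V_i$-neighbor), one round of each of the algorithms $A_1,\ldots,A_p$ can be realised in $O(\Delta')$ \CONGEST rounds in $G$ by simply serializing the $O(\Delta')$ messages on every edge. Upstream, each vertex has no problem scheduling, since on its incident edges it must handle at most $O(\Delta')$ messages per edge, i.e.\ $O(\Delta')$ \CONGEST rounds suffice.

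The main obstacle to be careful about is the simultaneous running of the $A_i$'s: one might naively fear that congestion scales with $p$, but the partition structure ensures that each edge ``sees'' at most two algorithms (one per endpoint). This, combined with the uniform local bound $\Delta'$ on how many $V_i$-neighbors any vertex (even one outside $V_i$) has, is exactly what makes the simulation cost $O(\Delta')$ rather than $O(\Delta)$ or $O(\Delta'\cdot p)$; everything else is bookkeeping.
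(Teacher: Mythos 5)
Your proposal is correct and follows essentially the same push-and-forward simulation as the paper: relay each $H_i$-message over the (unique or chosen) 2-path and bound the per-edge load by $O(\Delta')$ using the hypothesis that every vertex has at most $\Delta'$ neighbors in each part. Your per-edge, per-direction accounting (including the observation that each $G$-edge is used by at most the two algorithms of its endpoints' parts) is a slightly more explicit version of the paper's argument, which only bounds the number of messages a relay node must forward to a single target.
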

\begin{proof}
We first let all vertices in all $H_i$ send messages to their neighbors that are also neighbors in $G$. 
Now, if a vertex $v$ needs to send a message to a neighbor $u$ in $H_i$ that is not an immediate neighbor in $G$, that is, $u$ and $v$ are only connected in $G$ through a node $w$, then $v$ sends the message to $w$ and $w$ forwards it to $u$. By the construction of $H_i$ for each $i$ each vertex $v$ of $G$ has at most $\Delta_h$ neighbors that are vertices of $H_i$. Each of these $\Delta_h$ neighbors can have at most $1$ message for any vertex in $N_G(v)\cap V_i$. Thus $v$ has to forward at most $\Delta_h$ messages to a single vertex which it can do in $O(\Delta_h)$ rounds by pipelining. 
\end{proof}
With the {\congest} simulation result from \Cref{lem:CONGESTsimulation} we can run \Cref{thm:coloringWithSplitting} on the parts of a partition that we computed \Cref{lem:multiPhaseSplitting} to obtain the main result of this section.
\smallskip

\textsc{\Cref{thm:G2withSplitting} (Full)}. 
\emph{For any  $\eps>0$ there is a deterministic {\congest} algorithm that computes a $(1+\eps)\Delta^2$ coloring of $G^2$ in $O(\eps^{-2}\log^{11} n+\eps^{-4}\log n)$ rounds. For constant $\eps>0$ the runtime is $O(\log^{11}n)$.}

\begin{proof}
Let $\eps'=\min\{1,\eps/4\}$ and apply \Cref{lem:multiPhaseSplitting} with $\eps'$ to obtain  $p=2^h$ subgraphs $G_1,\ldots,G_p$, each with maximum degree at most ($h$ is chosen as in \Cref{lem:multiPhaseSplitting})
\begin{align}
\Delta_h=O(\eps^{-2}\log^3 n)~.
\end{align}
Then, for $i=1,\ldots,p$ let $H_i$ be the subgraph of $G^2$ that is induced by the vertices $V(G_i)$ of $G_i$. Here each vertex knows which subgraph it belongs to and also which of its immediate neighbors in $G$ belong to which subgraph.
 As $G_i$ has maximum degree $\Delta_h$ we obtain that $H_i$ has maximum degree $\Delta\cdot \Delta_h$. 

Further any vertex has at most $\Delta_h$ neighbors in any $H_i$ and 
with \Cref{lem:CONGESTsimulation} we can execute the algorithm from \Cref{thm:coloringWithSplitting} with $\eps'$ in parallel on each subgraph $H_1, \ldots, H_p$ to color each of them with a distinct set of colors with size $(1+\eps')\Delta_h\cdot \Delta$ in time $O(\Delta_h\cdot (\log^8 n+\eps^{-2}\log^3 n))=O(\eps^{-2}\log^{11} n+\eps^{-4}\log^3 n)$.
The computed coloring forms a coloring of $G^2$ with $2^h\cdot (1+\eps')\Delta_h\cdot \Delta=(1+\eps')\cdot (1+\eps')\Delta^2\leq(1+\eps)\Delta$ colors. 
\end{proof}

\begin{remark}
We emphasize that we did not attempt to reduce the $\log n$ factors or to express as many of them as $\log \Delta$ factors as possible. 
\end{remark}


\subsection{Summarizing the Ideas for \Cref{thm:d2ColoringDelta} }
\label{ssec:sumDeltaDeta}
In this section we summarize our algorithm to obtain efficient algorithms for coloring $G^2$ with $\Delta^2+1$ colors if $\Delta$ is small, that is, we explain the core ideas of the following theorem.

\smallskip
\textsc{\Cref{thm:d2ColoringDelta}.} 
\emph{There is a deterministic \CONGEST algorithm that d2-colors a graph with $\Delta^2+1$ colors in $O(\Delta^2+\logstar n)$ rounds.}
\smallskip

The algorithm of \Cref{thm:d2ColoringDelta} has three components that are executed in the presented order:

\smallskip

\noindent\textbf{ Linial: $O(\Delta^4)$ Colors (\Cref{thm:Liniald2}):}  A standard pipelined version of Linial's algorithm run on $G^2$ computes a $O(\Delta^4)$-coloring of $G^2$ in $O(\Delta\cdot \logstar n)$ rounds. We show that the runtime can be reduced to $O(\Delta+\logstar n)$ rounds. 

\smallskip

\noindent\textbf{Locally Iterative: $O(\Delta^4)\rightarrow O(\Delta^2)$  (\Cref{thm:d2locIt}) } In \cite{barenboim15,BEG18} it was shown that there is a {\congest} algorithm that colors the network graph $G$ with $O(\Delta)$ colors in $O(\sqrt{\Delta})$ rounds given an $O(\Delta^2)$-coloring of the input graph. With \Cref{thm:Liniald2} we can compute a $O((\Delta(G^2))^2)=O(\Delta^4)$-coloring of $G^2$ in $O(\Delta+\logstar n)$ rounds. Using this coloring we run the algorithm from \cite{barenboim15,BEG18} on $G^2$ and simulate one round of it in $\Delta$ rounds of communication on $G$. That way, we obtain a coloring of $G^2$ with $O(\Delta(G^2)=O(\Delta^2)$ colors and the runtime is $O(\Delta\cdot \sqrt{\Delta(G^2)})=O(\Delta^2)$.  
As the combination of \cite{barenboim15,BEG18} is slightly involved (e.g., it includes the computation of arbdefective colorings) we present a self contained algorithm for coloring $G^2$ with $O(\Delta^2)$ colors in $O(\Delta^2+\logstar n)$ rounds. Our algorithm is based on the \emph{locally iterative} algorithm in \cite{BEG18}. 

\smallskip

\noindent\textbf{Color Reduction: $O(\Delta^2)\rightarrow \Delta^2+1$ (\Cref{thm:delta2_simple_color}).} In the \emph{iterative color reduction} for $G$ one iteratively let's nodes with the largest color class pick a smaller color until one obtains a coloring with $\Delta(G)+1$ colors. The crux in implementing this algorithm for $G^2$ is, that nodes, need to know all colors that are used in its $d2$-neighborhood to recolor themselves. A naive approach to learn these colors would take $\Delta$ rounds for each recoloring step and result in a runtime of $O(\Delta^3)$ rounds. Using the fact, that at most one vertex in each neighborhood of a node changes its color in one round we show that the simple color reduction can be done in $O(\Delta^2)$ rounds.

\bibliographystyle{abbrv}
\bibliography{refs}

\clearpage
\appendix

\section{Derandomization of Local Refinement Splitting}
\label{app:localRefinementSplitting}
We begin with the definition of a network decomposition that is adapted to the {\congest} model and to power graphs. Then, in \Cref{ssec:derandSplitting} we use a network decomposition of $G^2$ (that can be computed with the algorithm from \cite{RG19}) to derandomize a zero-round algorithm for local refinement splitting.

\label{ssec:networkDecomp}
\begin{definition}[Network decomposition with congestion, \cite{RG19}]
\label{def:nd} Let $k>0$ be an integer. 
An $(\alpha,\beta)$-network decomposition with congestion $\kappa$ of $G^k$ of a graph $G=(V,E)$ is a partition of $V$ into clusters $C_1,\dots,C_p$ together with associated subtrees $T_1,\ldots,T_p$ of $G$ and a color $\gamma_i\in\{1,\dots,\alpha\}$ for each cluster $C_i$ such that
\begin{enumerate}[(i)]
\item the tree $T_i$ of cluster $C_i$ contains all nodes of $C_i$ (but it might contain other nodes as well)
\item each tree $T_i$ has diameter at most $\beta$
\item clusters that are connected by a path of length $\leq k$ in $G$ are assigned different colors
\item each edge of $G$ is contained in at most $\kappa$ trees of the same color
\end{enumerate}
\end{definition}

When we assume to have a network decomposition on a graph, we require that each node knows the color of the cluster it belongs to and for each of its incident edges $e$ the set of associated trees $e$ is contained in. Note that a decomposition according to this definition has weak diameter $\beta$ and a strong network decomposition is a decomposition with congestion 1 where the tree $T_i$ of each cluster $C_i$ contains exactly the nodes in $C_i$.

\begin{theorem}[\cite{RG19}]
\label{thm:rozhon}
There is a deterministic algorithm that computes an $\left(O(\log n),k\cdot O(\log^3n)\right)$-network decomposition of $G^k$ with congestion $O(\log n)$ in $O(k\cdot \log^8 n)$ rounds in the {\congest} model.\footnote{In on-going unpublished work \cite{RG19personal} it is shown that diameter and runtime can be improved. These improvements carry over to our results.}
\end{theorem}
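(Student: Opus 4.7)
The plan is to reduce this statement to the deterministic decomposition algorithm of \cite{RG19} combined with a careful CONGEST simulation of the virtual graph $G^k$ on $G$. Concretely, the algorithm of \cite{RG19}, when executed in CONGEST directly on a graph $H$, produces an $(O(\log n), O(\log^3 n))$-network decomposition of $H$ in $O(\log^8 n)$ rounds, together with Steiner trees realizing the weak diameter and an inherent congestion of $O(\log n)$ per color class on the edges of $H$. I would apply this to $H = G^k$ while using $G$ itself as the actual communication graph.

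The first step is the simulation setup. Each virtual edge $\{u,v\}$ of $G^k$ is realized by a canonical shortest $G$-path $\pi_{uv}$ of length at most $k$, fixed once at the start (for instance by a depth-$k$ BFS, which costs $O(k)$ rounds up front). A single CONGEST round on $H$ is then executed in $O(k)$ rounds on $G$ by routing each virtual message along its $\pi_{uv}$. The RG procedure is engineered so that per round each node transmits only $O(1)$ short tokens (cluster identifier plus current level) to each $H$-neighbor, so the per-$G$-edge load created by this simulation is bounded and $O(k)$ real rounds suffice for one virtual round. Multiplying out gives the claimed $O(k \log^8 n)$ round complexity.

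Next I would read off the structural guarantees of Definition~\ref{def:nd} from the output of RG on $H = G^k$. The weak-diameter bound $O(\log^3 n)$ measured in $H$ becomes $k \cdot O(\log^3 n)$ in $G$ since each $H$-edge corresponds to at most $k$ $G$-edges, which is property (ii). The RG coloring is proper with respect to $H$-adjacency, which translates to property (iii): any two clusters connected by a $G$-path of length at most $k$ are $H$-adjacent and therefore receive distinct colors. Property (i) follows by constructing the $G$-tree $T_i$ of each cluster as the union of the realizing paths $\pi_{uv}$ along the $H$-tree edges; these $G$-paths are stitched together incrementally during the algorithm.

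The main obstacle, and the source of the congestion $\kappa = O(\log n)$, is controlling how many same-color $G$-trees can share a single $G$-edge. Within one color class, the RG construction guarantees that each $H$-edge appears in at most $O(\log n)$ Steiner trees, because only $O(\log n)$ ball-growing layers can contribute to a Steiner-tree edge during a color phase. Unrolling each such $H$-edge into its fixed shortest-path realization, a given $G$-edge $e$ is used by at most $O(1)$ same-color $H$-edges per layer (the realization is a deterministic function of the endpoints), so $e$ lies in at most $O(\log n)$ same-color $G$-trees, yielding property (iv). The delicate part of the argument is the joint amortization of the simulation cost with the congestion bookkeeping inside the ball-growing subroutine, which is precisely what the CONGEST analysis of \cite{RG19} provides once instantiated with $H = G^k$ together with the fixed shortest-path realization $\{\pi_{uv}\}$.
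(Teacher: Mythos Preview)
The paper does not prove this theorem at all: it is quoted verbatim as a black-box result of Rozho\v{n} and Ghaffari~\cite{RG19} and is used without further argument. So there is nothing in the paper to compare your attempt against, and from the paper's point of view no proof is expected here.

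That said, your sketch has two genuine gaps that would need to be addressed if one actually wanted to argue the statement. First, the claim that one \CONGEST round on $H=G^k$ can be simulated in $O(k)$ rounds on $G$ is not true in the generality you state it. Even if every node sends only $O(1)$ tokens to each $H$-neighbor, a node of $G^k$ may have $\Theta(\Delta^k)$ such neighbors, and an intermediate relay node on the canonical paths $\pi_{uv}$ may have to forward $\Theta(\Delta^{k-1})$ \emph{distinct} tokens across a single $G$-edge. The reason the algorithm of~\cite{RG19} runs in $O(k\log^8 n)$ rounds on $G$ is not a generic round-by-round simulation; it is that every primitive the algorithm uses (essentially BFS-style label propagation and min-aggregation of cluster identifiers) is of a broadcast/convergecast type in which identical messages can be merged, so an intermediate node forwards at most one label per incident edge rather than one per virtual $H$-edge. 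Your write-up hides exactly this point behind ``the per-$G$-edge load created by this simulation is bounded''.

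Second, the congestion argument for property~(iv) does not go through as written. You assert that ``a given $G$-edge $e$ is used by at most $O(1)$ same-color $H$-edges per layer'' because the realization $\pi_{uv}$ is a deterministic function of the endpoints, but determinism of the path map says nothing about how many endpoint pairs route through $e$; a single $G$-edge can lie on $\Theta(\Delta^{k-1})$ shortest $k$-paths. The actual $O(\log n)$ congestion bound in~\cite{RG19} again comes from the aggregated nature of the communication (each Steiner-tree edge carries the same cluster label), not from a counting argument over realized $H$-edges.
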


 \label{ssec:derandSplitting}

As long as degrees are at least $\polylog n$ the degrees of all nodes can be split roughly in half without communication by letting each vertex choose one of two subgraphs uniformly at random. We show that this algorithm can be derandomized with a network decomposition of $G^2$. 
Formally, we  solve the following more general version of the problem efficiently and deterministically.

\refinementSplitting*

Note that we only require a bound on vertices with a degree of at least $12\log n/\lambda^2$. Thus coloring each vertex red or blue with probability $1/2$ each solves the problem with high probability. More formally, introduce a random variable $F^i_v\in \{0,1\}$ for each node $v\in V$ and each $i=1,\ldots,p$, where $F^i_v=1$ if $\deg_i(v)\geq 12\log n/\lambda^2$ and  $v$  has more than $(1+\lambda) \deg_i(v)/2$ neighbors of one color in $V_i$, and $F^i_v=0$ otherwise. The \emph{flag} $F^i_v$ indicates whether the splitting failed locally for $v$ in set $V_i$. 
By a Chernoff  bound one can show that  $Pr(F^i_v=1)\leq 1/n^2$ for all vertices and all $i$. Let $F_v=\sum_{1\leq i\leq p}F^i_v$. By a union bound over the $p< n$ parts we obtain that $Pr(F_v=1)<1/n$  and thus we obtain 
\begin{align}
E[\sum_{v\in V}F_v]<1~.
\end{align}
This will later be sufficient to derandomize the algorithm. To make the derandomization efficient
 we next show that we can similarly bound the sum of the expectations of $F_v, v\in V$ if the random choices of the vertices are only $\Theta(\log n)$-wise independent. First we define the notion of limited independence and restate a Chernoff bound that holds with limited independence. 

\begin{definition}[\cite{Vadhan12}]
For $N,M,k\in\mathbb{N}$ such that $k\leq N$, a family of functions $\mathcal{H}=\{h:[N]\to [M]\}$ is $k$-wise independent if for all distinct $x_1,\dots,x_k\in[N]$, the random variables $h(x_1),\dots,h(x_k)$ are independent and uniformly distributed in $[M]$ when $h$ is chosen uniformly at random from $\mathcal{H}$.
\end{definition}
We use the following Chernoff bound that works with limited independence.
\begin{theorem}[Theorem 5 in \cite{chernoffLimited95}] \label{thm:kindependent}
Let $X$ be the sum of $k$-wise independent $[0,1]$-valued random variables with expectation $\mu=E(X)$ and let $\delta\leq 1$. Then we have
\[Pr(|X-\mu|\geq \delta\mu)\leq e^{-\lfloor \min\{k/2,\delta^2\mu/3\}\rfloor}~.\] 
\end{theorem}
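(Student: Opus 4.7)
The plan is to prove the bound via the moment method, using the fact that $k$-wise independence of the $X_i$ already determines every joint moment of total degree at most $k$. Concretely, set $Y_i := X_i - \mathbb{E}[X_i]$ and $Y := X - \mu = \sum_i Y_i$. For any integer $\ell$ with $2\ell \leq k$, the polynomial $Y^{2\ell}$ expands into monomials of total degree $2\ell$, each involving at most $k$ distinct indices; by $k$-wise independence, the expectation of every such monomial factors over those indices exactly as under full independence. Consequently $\mathbb{E}[Y^{2\ell}]$ equals the value it would take if the $X_i$ were fully independent.

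The argument then takes three short steps. First, Markov's inequality on the even moment gives
\[
    \Pr(|X - \mu| \geq \delta \mu) \;\leq\; \mathbb{E}[Y^{2\ell}]\,/\,(\delta \mu)^{2\ell}.
\]
Second, I would invoke a standard central-moment bound for sums of genuinely independent $[0,1]$ random variables, of the form $\mathbb{E}[Y^{2\ell}] \leq (\alpha \ell \mu)^\ell$ for a small absolute constant $\alpha$, valid in the regime $\ell \lesssim \mu$. This is available either by extracting the $2\ell$-th Taylor coefficient of the MGF bound $\mathbb{E}[e^{tY}] \leq \exp(\mu(e^t - 1 - t))$, or by a direct combinatorial expansion of $(\sum_i Y_i)^{2\ell}$ grouped according to the index-multiplicity pattern of each monomial. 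Third, choose $\ell := \lfloor\min\{k/2,\,\delta^2\mu/3\}\rfloor$ (the case $\ell = 0$ makes the claim trivial); for this choice the Markov bound collapses to $(\alpha \ell/(\delta^2\mu))^\ell \leq e^{-\ell}$ once the constants are tightened, which is exactly the advertised $e^{-\lfloor\min\{k/2,\,\delta^2\mu/3\}\rfloor}$.

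The two terms in the minimum record the two places the argument can saturate: $\delta^2\mu/3$ is the best exponent a classical full-independence Chernoff bound can give, while $k/2$ is the largest even moment order that $k$-wise independence alone is allowed to supply. The main obstacle I foresee is arranging the constants in the independent-case moment bound to recover precisely the $1/3$ in the exponent; this forces one to use a sharp form of the moment estimate (essentially matching the Poisson central moment $\mathbb{E}[(\mathrm{Pois}(\mu) - \mu)^{2\ell}] \sim (2\ell\mu/e)^\ell$ up to a small multiplicative factor) and to optimize the choice of $\ell$ carefully. The remainder of the argument --- Markov's inequality and the observation that $k$-wise independence is oblivious to polynomials of degree at most $k$ --- is essentially mechanical.
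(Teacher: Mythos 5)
The paper does not prove this statement at all---it is imported verbatim as Theorem~5 of the cited reference \cite{chernoffLimited95} (Schmidt--Siegel--Srinivasan), so there is no in-paper proof to compare against. Your moment-method argument is the standard proof of that result and essentially the one used in the cited source: bounding $\Pr(|Y|\ge\delta\mu)$ by $\E[Y^{2\ell}]/(\delta\mu)^{2\ell}$, observing that $k$-wise independence determines $\E[Y^{2\ell}]$ for $2\ell\le k$, and optimizing $\ell$ against a Poisson-type central-moment estimate; the outline is sound, the regime check $\ell\le\delta^2\mu/3\le\mu/3$ is satisfied, and your arithmetic $(\alpha\ell/(\delta^2\mu))^\ell\le e^{-\ell}$ correctly reduces the whole claim to establishing the moment bound with constant $\alpha\le 3/e\approx 1.1$. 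That constant is the one genuinely delicate point (a crude Rosenthal-type bound gives a larger $\alpha$, and extracting coefficients from the MGF bound needs the detour through $Y^{2\ell}\le (2\ell)!\,(e^{tY}+e^{-tY})/t^{2\ell}$ rather than literal term-by-term comparison), but you flag this explicitly, and it is exactly the source of the $e^{-1/3}$-type thresholds in the original reference.
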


The next lemma states $\polylog n$-wise independent random bits for the vertices are sufficient to obtain the aforementioned bound on the expected sum of the flags. 
\begin{lemma}
\label{lem:limIndProcess}
Let $\lambda>0$ be a parameter, let $G=(V,E)$ be a $n$-node graph, with vertex partition $V_1,\ldots,V_p$ and 
let each vertex of $G$ color itself red or blue with probability $1/2$ each where random choices of distinct nodes are $10\log n$-wise independent. 
Then we have $E[\sum_{v\in V}F_v]<1$. 
\end{lemma}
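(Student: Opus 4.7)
My plan is to bound each individual failure probability $\Pr[F_v^i=1]$ by (at most) $1/n^2$ using the limited-independence Chernoff inequality (Theorem~\ref{thm:kindependent}), and then to sum over $v\in V$ and $i\in\{1,\ldots,p\}$ by linearity of expectation. Since $|V|=n$ and $p\le n$, there are at most $n^2$ flags, so this yields $E[\sum_v F_v]=\sum_{v,i}\Pr[F_v^i=1]\le n^2\cdot n^{-2}\le 1$; the strict inequality comes from the slack in the Chernoff exponent (e.g.\ if $\log$ denotes logarithm base $2$, then $e^{-2\log n}=n^{-2/\ln 2}$, which is strictly smaller than $n^{-2}$). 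Note also that the case $\lambda\ge 1$ is vacuous: then $(1+\lambda)\deg_i(v)/2\ge \deg_i(v)$, so no color class can exceed the threshold and $F_v^i\equiv 0$. Hence I may restrict attention to $\lambda\in(0,1)$, which is exactly the regime the Chernoff inequality requires.

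For fixed $v$ and $i$, the first step is to reduce $F_v^i$ to a symmetric concentration event. Since $F_v^i\equiv 0$ whenever $\deg_i(v)<12\log n/\lambda^2$, I may assume $\deg_i(v)\ge 12\log n/\lambda^2$. Let $R_v^i$ denote the number of red neighbors of $v$ in $V_i$, with expectation $\mu=\deg_i(v)/2\ge 6\log n/\lambda^2$. Because the red and blue counts in $V_i\cap N_G(v)$ sum to $\deg_i(v)=2\mu$, the event ``some color class exceeds $(1+\lambda)\mu$'' coincides exactly with the symmetric tail event $\{|R_v^i-\mu|>\lambda\mu\}$, and this is what I need to bound.

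The last step is applying Theorem~\ref{thm:kindependent} to $R_v^i$. Here lies the only subtlety: I must verify that the summands of $R_v^i$ inherit the $10\log n$-wise independence of the underlying coins. This holds because $R_v^i=\sum_{u\in N_G(v)\cap V_i}\mathbf{1}[u\text{ is red}]$ is a sum of indicators, each a deterministic function of a single \emph{distinct} vertex's coin, so any $k$-subset of indicators is a function of an independent $k$-subset of coins and is therefore jointly independent. Setting $k=10\log n$ and $\delta=\lambda\in(0,1)$, and using $\mu\ge 6\log n/\lambda^2$, I obtain $\delta^2\mu/3\ge 2\log n$ and $k/2=5\log n$, whence $\min\{k/2,\delta^2\mu/3\}\ge 2\log n$. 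Theorem~\ref{thm:kindependent} then yields $\Pr[F_v^i=1]\le e^{-\lfloor 2\log n\rfloor}\le 1/n^2$, which completes the argument after the summation outlined above.
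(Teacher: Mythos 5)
Your proof is correct and follows essentially the same route as the paper's: bound each $\Pr[F_v^i=1]$ via the limited-independence Chernoff bound (Theorem~\ref{thm:kindependent}) with $\delta=\lambda$ and $\mu\ge 6\log n/\lambda^2$, then sum over the at most $n^2$ flags by linearity. Two refinements of yours are worth noting: the red/blue symmetry observation cleanly replaces the paper's union bound over the two colors (and your explicit dismissal of $\lambda\ge 1$ is needed since the theorem requires $\delta\le 1$), and your exponent $\min\{5\log n,\,2\log n\}=2\log n$ is the correct value (the paper's ``$12\log n/3$'' is an arithmetic slip stemming from using $\deg_i(v)$ in place of $\mu_i=\deg_i(v)/2$), which is precisely why, as you acknowledge, the final strict inequality genuinely relies on $\log$ being base $2$: the floor costs a factor of $e$, so with natural logarithms the chain $n^2\cdot e^{-\lfloor 2\ln n\rfloor}$ would only yield a bound of $e$ rather than something below $1$.
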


\begin{proof}
For $1\leq i\leq p$ and a vertex $v$ with  $deg_i(v)\geq 12\log n/\lambda^2$ let $X_i$ be the number of red neighbors in $V_i$. We obtain $\mu_i=E[X]=\deg_i(v)/2\geq 6\log n /\lambda^2$ and $X_i$ is the sum of $[0,1]$-valued $k$-wise independent random variables with $k=10\log n$. 
By \Cref{thm:kindependent} with $\delta=\lambda$ we obtain that the probability that $v$ has more or less than $(1\pm\lambda)\deg_i(v)/2$ red neighbors in $V_i$ is bounded by
\begin{align}
Pr(|X_i-\mu_i|\geq \lambda\mu_i)\leq e^{-\lfloor \min\{k/2,\lambda^2\mu_i/3\}\rfloor} \leq e^{-\lfloor \min\{5\log n,12\log n/3\}\rfloor}< e\cdot n^{-4}~.
\end{align}
The exact same analysis also holds for the number of blue neighbors in $V_i$ and with a union bound over both colors and all $1\leq i\leq p< n$ we obtain that $Pr(F_v=1)< 1/n$. 
Because $F_v$ is a $0$ or $1$ valued random variable this implies 
\begin{align*}
    E\left[\sum_{v\in V}F_v\right]<1~.  & & \qedhere
\end{align*}
\end{proof}

To derandomize the aforementioned algorithm we need to produce random coins for the vertices with limited independence from short random seeds as given by the next theorem.

\begin{theorem}[\cite{Vadhan12}]\label{thm:seed}
For every $a,c,k$, there is a family of $k$-wise independent hash functions $\mathcal{H}=\{h:\{0,1\}^a\to\{0,1\}^c\}$ such that choosing a random function
from $\mathcal{H}$ takes $k\cdot\max\{a,c\}$ random bits.
\end{theorem}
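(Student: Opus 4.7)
The plan is to use the classical polynomial-evaluation construction of Carter and Wegman. Set $m = \max\{a,c\}$ and work in the finite field $\mathbb{F} = \mathbb{F}_{2^m}$, whose elements correspond bijectively to binary strings of length $m$. Embed $\{0,1\}^a$ into $\mathbb{F}$ by zero-padding (when $a<m$), and view the projection $\pi_c \colon \mathbb{F} \to \{0,1\}^c$ onto the first $c$ output bits as a $\mathbb{F}_2$-linear surjection whose fibers all have size $2^{m-c}$ (when $c<m$; otherwise $\pi_c$ is the identity).

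First I would define the family. A seed is a tuple $\alpha=(\alpha_0,\ldots,\alpha_{k-1})\in\mathbb{F}^k$, chosen uniformly, which takes exactly $k\cdot m = k\cdot\max\{a,c\}$ random bits as required. The associated hash function is
$$ h_\alpha(x) \;=\; \pi_c\!\left(\sum_{i=0}^{k-1}\alpha_i\, x^i\right), $$
evaluated in $\mathbb{F}$ after identifying $x\in\{0,1\}^a$ with its embedded image.

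The main step is to establish $k$-wise independence. Fix any $k$ distinct inputs $x_1,\ldots,x_k$ (still distinct in $\mathbb{F}$ since the embedding is injective). For any target $(y_1,\ldots,y_k)\in\mathbb{F}^k$, the system $\sum_i \alpha_i x_j^i = y_j$, $j=1,\ldots,k$, has coefficient matrix the $k\times k$ Vandermonde matrix on $x_1,\ldots,x_k$; its determinant $\prod_{j<j'}(x_{j'}-x_j)$ is nonzero in $\mathbb{F}$, so the matrix is invertible. Hence the map $\alpha\mapsto\big(\sum_i \alpha_i x_1^i,\ldots,\sum_i \alpha_i x_k^i\big)$ is a bijection $\mathbb{F}^k\to\mathbb{F}^k$, and when $\alpha$ is uniform the joint distribution of these evaluations is uniform on $\mathbb{F}^k$. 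In particular the evaluations are mutually independent and uniform in $\mathbb{F}$.

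Finally, truncation preserves the property: applying $\pi_c$ coordinatewise to a vector uniform on $\mathbb{F}^k$ yields a vector uniform on $(\{0,1\}^c)^k$ (since $\pi_c$ pushes the uniform distribution on $\mathbb{F}$ to the uniform distribution on $\{0,1\}^c$), which is precisely the definition of $k$-wise independence for $\mathcal{H}$. The only subtlety worth noting is the existence and efficient representation of $\mathbb{F}_{2^m}$, which is standard via any fixed irreducible polynomial of degree $m$ over $\mathbb{F}_2$; the algebraic argument above is oblivious to this choice. No step is particularly hard, but the Vandermonde nondegeneracy is the one place where the construction crucially uses that the $x_j$ live in a field and are pairwise distinct.
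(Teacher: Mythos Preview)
The paper does not prove this theorem; it is quoted as a known result from \cite{Vadhan12} and used as a black box. Your proof is correct and is exactly the standard polynomial-evaluation construction that underlies the cited statement, so there is nothing to compare against in the paper itself.
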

In the proof of our main derandomization result (\Cref{lem:derandSplitting}) we will use one random seed for each cluster of a suitable network decomposition.  
In particular, we can use \Cref{thm:seed} to produce fair random coins for up to  $n$ vertices (of a cluster) that are $k$-wise independent for $k=\Theta(\log n)$ from a random seed of length $k\cdot \max\{a,c\}=O(\log^2 n)$ by setting $a=O(\log n)$ and $c=1$. 
Then we obtain a deterministic algorithm for computing a local refinement splitting by iterating through the color classes of the decomposition and  using  the method of conditional expectation with objective function $E[\sum_{v\in V}F_v]$ to find a \emph{good seed} for each cluster. Afterwards the 'randomized' algorithm is executed with the good seeds and we obtain $E[\sum_{v\in V}F_v]<1$, i.e., $E[\sum_{v\in V}F_v]=0$ and the flag $F_v$ of every vertex $v\in V$ equals zero, that is, we have computed a local refinement splitting. Formally, we show the following result.

\derandRefinementSplitting*

\begin{proof}[Proof of \Cref{lem:derandSplitting}]
Let $V_1,\ldots,V_p$ be the given partition of the vertices. 
We first use \Cref{thm:rozhon} to compute a $(O(\log n),O(\log^3 n))$-network decomposition of $G^2$ with congestion $O(\log n)$ in $O(\log^8 n)$ rounds; note that this network decomposition ignores the partition $V_1,\ldots,V_p$. Then, with a random seed of length $\Theta(\log^2 n)$ for each cluster (apply \Cref{thm:seed} with $k=\Theta(\log n)$, $a=O(\log n)$ and $c=1$) we can produce fair coins, one for each node of the cluster,  that are $10\log n$-wise independent within the cluster and completely independent for distinct clusters. Here, we use the same globally known procedure in each cluster to compute the coin of a vertex $v$ given its ID and the outcome of the random seed. 

Recall that $F_v$ denotes the random variable that indicates whether the splitting failed locally at a  vertex $v$ for any $V_i$, $1\leq i\leq p$.
By \Cref{lem:limIndProcess}, we have $E[\sum_{v\in V}F_v]<1$ over the randomness of all random seeds.  The objective is to find a collection of \emph{good seeds} 
such that the 'randomized' algorithm if deterministically executed with the good seeds satisfies $\sum_{v\in V}F_v<1$. 
To this end we iterate through the color classes of the network decomposition and in parallel fix the seeds of clusters of the same color class. These seeds can be fixed independently and in parallel as any two clusters with the same color are at least three hops apart and thus none of the random variables $F_v,v\in V$ is influenced by more than one cluster of the same color. We explain how to fix the random seed of one cluster $\mathcal{C}$. 

\paragraph{Deterministically finding a good random seed of cluster $\mathcal{C}$:}Let $N(\mathcal{C})$ be the set of vertices of $G$ that are contained in the cluster or have a neighbor in the cluster. Note that $N(\mathcal{C})$ has weak diameter $O(\log^3 n)+O(1)=O(\log^3 n)$.  We use the method of conditional expectation to iteratively determine the bits of the random seed $S^{\mathcal{C}}=S_1,\ldots,S_{l}$ of the cluster where $l=O(\log^2 n)$ is the seed length. 
Here for an already processed cluster $\mathcal{C}'$ we denote the already fixed seed by $s^{\mathcal{C}'}$.
To fix one bit $S_i$ of the seed $S^{\mathcal{C}}$ assume that the bits $S_j=s_j$, $j<i$ are already determined. We set bit $S_i\in\{0,1\}$ of the seed $S^{\mathcal{C}}$ as follows (we will later argue how a cluster leader can compute $s_i$) 
\begin{align}
s_i & =\argmin_{b\in \{0,1\}} \left\{E\left[\sum_{v\in V} F_v \mid S_i=b \wedge  \bigwedge_{j<i} S_j=s_j \wedge \bigwedge_{\text{processed cluster }\mathcal{C}'}S^{\mathcal{C}'}=s^{\mathcal{C}'} \right]\right\} \label{eqn:argmin}\\
& =\argmin_{b\in \{0,1\}} \left\{\sum_{v\in V}E\left[F_v \mid S_i=b \wedge  \bigwedge_{j<i} S_j=s_j \wedge \bigwedge_{\text{processed cluster }\mathcal{C}'}S^{\mathcal{C}'}=s^{\mathcal{C}'}  \right]\right\}~. \\
& \stackrel{(*)}{=}\argmin_{b\in \{0,1\}} \left\{\sum_{v\in N(\mathcal{C})}E\left[F_v \mid S_i=b \wedge  \bigwedge_{j<i} S_j=s_j \wedge \bigwedge_{\text{processed cluster }\mathcal{C}'}S^{\mathcal{C}'}=s^{\mathcal{C}'}  \right]\right\}
\end{align}

Equality $(*)$ follows as $F_v$ for $v\notin N(\mathcal{C})$ does not depend on the choice of $S_i$. 
First, note, that the law of total probability guarantees that for $S_i=s_i$ we have 
\begin{align} 
E\left[\sum_{v\in V} F_v \mid S_i=s_i \wedge  \bigwedge_{j<i} S_j=s_j \wedge \bigwedge_{\text{processed cluster }\mathcal{C}'}S^{\mathcal{C}'}=s^{\mathcal{C}'}  \right] \\
\leq E\left[\sum_{v\in V} F_v \mid  \bigwedge_{j<i} S_j=s_j \wedge \bigwedge_{\text{processed cluster }\mathcal{C}'}S^{\mathcal{C}'}=s^{\mathcal{C}'}  \right]
\end{align}
As we initially have $E[\sum_{v\in V}F_v]<1$, we obtain $\sum_{v\in V}F_v<1$ once all random seeds in all clusters are fixed. Thus the integer $\sum_{v\in V}F_v$ has to equal to $0$ if the algorithm is run deterministically with the fixed seeds of all clusters and in the process where each vertex uses its coin from its cluster's random seed to determine its color no vertex fails.

\paragraph{The value $s_i$ can be computed by a leader: }For $b\in \{0,1\}$ each vertex $v\in N(\mathcal{C})$ computes the two values 
$\alpha_b(v)=E[ F_v \mid S_i=b \wedge  \bigwedge_{j<i} S_j=s_j \wedge \bigwedge_{\text{processed cluster }\mathcal{C}'}S^{\mathcal{C}'}=s^{\mathcal{C}'} ]$~. To this end $v$ learns the IDs of its immediate neighbors (or the color of its immediate neighbor if some neighbors belong to a different cluster and 
already know their output color in the case that they are contained in a previously processed cluster) and the already fixed prefix of the random seed in their own cluster. 
We next formally prove that this information is sufficient to compute both values.

\medskip
\noindent\textbf{Claim:} With the IDs of its immediate neighbors, the colors of potentially already colored neighbors from previously processed clusters, the cluster ID of each of its adjacent vertices and the already fixed prefix in its own cluster $v$ can locally compute $\alpha_0(v)$ and $\alpha_1(v)$. 
\begin{proof}To determine the values $\alpha_0(v)$ and $\alpha_1(v)$ a node has to know the probability distribution of its neighbors being colored red or blue. This distribution is independent for neighbors in distinct clusters. Neighbors of already processed vertices are either colored red or blue and do not depend on any randomness. Neighbors of $v$ that are in the same cluster (this does not have to be the cluster of $v$) do not act independently but obtain their randomness from a shared random seed. As it is globally known how vertices use their ID to extract their decisions from the random seed and random seeds in all clusters are build identically, that is, each seed is long enough to produce coins for all IDs in the graph and it is globally known how to extract the value of the random coin related to a specific ID from a random seed, node $v$ can compute the desired two values with knowing the partition of its neighbors into different clusters.  For neighbors that are part of $v$'s cluster $\mathcal{C}$, $v$ additionally needs to know the prefix of the already fixed seed to determine the influence of the next bit of the seed on their probabilities.  
\end{proof}
After each vertex in $N(\mathcal{C})$ has computed its two values we aggregate two sums of these values in the cluster leader vertex. 
This aggregation can be done in $O(\log n\cdot \log^3 n)$ rounds over the tree corresponding to the cluster; here the additional $\log n$-term is due to the congestion in one color class. 
 The leader sets $s_i=0$ if $\sum_{v\in N(\mathcal{C})}\alpha_0(v)\leq \sum_{v\in N(\mathcal{C})}\alpha_1(v)$ and $s_i=1$ otherwise, then it distributes the choice to all vertices in $N(\mathcal{C})$. The random variables $F_v$ for all $v\notin N(\mathcal{C})$ do not depend on the choice of the seed and taking the better of the two aggregate values is identical to taking the desired $\argmin$ in \Cref{eqn:argmin}.

\paragraph{Runtime:}
Computing the network decomposition takes $O(\log^8 n)$ rounds. Afterwards we iterate through $O(\log n)$ color classes and the $O(\log^2 n)$ bits of the random seeds of each cluster of one color in parallel. Due to the congestion and the weak cluster diameter we need $O(\log^4)$ rounds to aggregate the values to decide on a single bit, that is, using the network decomposition has a complexity of $O(\log n\log^2 n\cdot \log^4 n)=O(\log^7 n)$. Note that a vertex can learn the necessary information (except for the seed prefix that it already knows by construction) to compute the values $\alpha_0(v)$ and $\alpha_1(v)$ in constant time.
\end{proof}


\section{Deterministic: $O(\Delta^2+\logstar n)$ rounds, $\Delta^2+1$ Colors (Theorem \ref{thm:d2ColoringDelta})}
\label{sec:g2-coloring}

We give a deterministic algorithm $O(\Delta^2+\logstar n)$-round algorithm for d2-coloring with $\Delta^2+1$ colors. In fact, the algorithm uses at most $\Delta(G^2)$ colors. Before presenting the algorithm, we describe how well-known pre- and post-processing steps can be implemented efficiently.

\subsection{Linial \& Color Reduction for d2-Coloring}
\paragraph{Refined Analysis of the Linial's Algorithm for $G^2$}

Linial's famous  $O(\logstar n)$ algorithm to compute a $O(\Delta^2)$-coloring of the communication graph $G$ can be implemented in the {\congest} model. In the d2-setting, the algorithm would imply a $O(\Delta^4)$-coloring. A naive implementation in {\congest} (with queuing messages) would implement one round of Linial in $G^2$ in $\Delta$ rounds on $G$ and would yield $O(\Delta\cdot \logstar n)$ rounds. 

We now show that the algorithm can be modified to run in $O(\Delta^2+\logstar n)$ rounds in {\congest}.
Without going into the details of the algorithm, the key point is that in round $i$ the nodes compute a valid coloring using $O(\log^{(i)} n)$ colors, the $i$-th repeated logarithm. 
\begin{theorem}[Linial's algorithm]
\label{thm:Liniald2}
There is a deterministic {\congest} algorithm that d2-colors a graph with $O(\Delta^4)$ colors in $O(\Delta + \log^* n)$  rounds.
\end{theorem}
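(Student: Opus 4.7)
The plan is to run Linial's classical color-reduction procedure directly on $G^2$ for at most $O(\log^*n)$ iterations, simulate each iteration in \CONGEST on $G$ by pipelined forwarding, and then charge the total communication cost in a way that concentrates most of the work in the very first iteration.

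First I would recall the standard Linial recurrence: one iteration on a graph of maximum degree $D$ starting from a $K$-coloring produces a valid coloring with $O(D^2\log K)$ colors. Applied to $G^2$, whose maximum degree is at most $\Delta^2$, this yields $K_{i+1}=O(\Delta^4\log K_i)$. Starting from $K_0=n$ (the IDs), one gets $\log K_i = O(\log\Delta)+O(\log^{(i)} n)$ and the palette stabilises at $O(\Delta^4)$ after $O(\log^*n)$ iterations. The algorithm stops as soon as $K_i=O(\Delta^4)$ is reached.

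Next I would describe the \CONGEST simulation of one iteration on $G$. To execute iteration $i$, node $v$ needs the set of current colors appearing on $N_{G^2}(v)$; it asks each $w\in N_G(v)$ to forward the colors of $w$'s other immediate neighbours. Every edge of $G$ then carries at most $\Delta$ colors of $\lceil\log K_i\rceil$ bits each, which can be pipelined in $O(\Delta\log K_i/\log n+1)$ \CONGEST rounds. A small amount of bookkeeping is needed to make sure that at most $\Delta$ messages are in transit along any edge in a single iteration, and this should be routine.

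The crucial step is the accounting. The first iteration, with $\log K_0=\log n$, costs $O(\Delta\log n/\log n)=O(\Delta)$ rounds, and this is the only iteration whose simulation cost is inherently $\Theta(\Delta)$. For the remaining iterations, $\log K_i$ decays at iterated-logarithmic speed, so the telescoping bound $\sum_{i\ge 1}\log K_i=O(\log n)$ holds (whenever $\Delta$ is small enough that more than one iteration is needed, we have $\log\Delta\cdot\log^*n=O(\log n)$, and otherwise the algorithm already stops after iteration $1$). Consequently the cumulative bit load on any edge is $O(\Delta\log n)$, giving $O(\Delta)$ additional rounds, and the $+1$ overhead per iteration contributes $O(\log^*n)$. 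The total runtime is $O(\Delta+\log^*n)$ and the output palette has $O(\Delta^4)$ colors.

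The main obstacle is getting this accounting tight: a careless summation of $O(\Delta\log K_i/\log n+1)$ over $\log^*n$ iterations yields only $O(\Delta\log^*n)$, which is the naive bound. To improve it one must explicitly use the telescoping of the iterated logarithm together with the early-termination rule $K_i=O(\Delta^4)$, and exploit the fact that after the first iteration every color message is already much shorter than $\log n$ bits, so that the number of bits per edge rather than the number of iterations is what limits the running time.
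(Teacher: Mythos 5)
Your proposal is correct and matches the paper's proof in its essentials: both run Linial's color reduction on $G^2$, simulate each iteration by pipelining neighborhood color lists over the edges of $G$, and observe that only the first iteration(s) cost $\Theta(\Delta)$ rounds because the color representations shrink so fast that later iterations are essentially free. The paper packages this as a clean case split ($\Delta^4 \ge \log\log n$, where two pipelined iterations already finish in $2\Delta$ rounds, versus $\Delta^4 < \log\log n$, where all $\Delta$ neighborhood colors fit into a single $O(\log n)$-bit message so each remaining iteration takes one round), whereas you reach the same bound by an amortized telescoping sum of per-edge bit loads; both accountings are valid.
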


\begin{proof}
We simulate the first two iterations of Linial's method in $2\Delta$ round, by pipelining.
This reduces the number of colors to $O(\max(\Delta^4, \log\log n))$. 
If $\Delta^4 \ge \log\log n$, then we are done.
So, we continue with the case $\Delta^4 \le \log\log n$.
In each iteration, nodes need to distribute to their neighbors the colors in their neighborhood.
These are the colors of their $\Delta$ neighbors, each an integer in the range $[1,\ldots, O(\log\log n)]$. This can be encoded in a bitstring of size $\Delta \cdot O(\log\log\log n) = o(\log n)$. Hence, this can be encoded in a single {\congest} message, when $n$ is large enough.
Hence, the remaining $\log^* n$ iterations of Linial require only $\log^* n$ rounds, for a total round complexity of $2\Delta + \log^* n$.
\end{proof}

\paragraph{Iterative Color Reduction  for $G^2$}
Given a $(c+k)$-coloring of $G$ with $c\geq \Delta+1$ we can compute a $c$-coloring of $G$ in $O(k)$ by iteratively recoloring the vertices of the largest color class with a color in $[c]$ that is not used by its already colored neighbors, that is, we use $O(1)$ rounds to remove one color class. 

\begin{theorem}
\label{thm:delta2_simple_color}
Given a $(c+k)$-coloring of $G^2$ with $c\geq \Delta(G^2)+1$ and assume that each color can be represented with $O(\log n)$ bits, then we can compute a $c$-coloring of $G^2$ in $O(\Delta+k)$ rounds in {\congest} on $G$. 

In particular, we can compute a $(\Delta(G^2)+1)$-coloring given a $O(\Delta^2)$-coloring in $O(\Delta^2)$ rounds.
\end{theorem}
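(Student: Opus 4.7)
The algorithm has two phases: a setup phase, followed by $k$ recoloring iterations. In the setup phase, each vertex $v$ gathers the colors of all its d2-neighbors. Every immediate neighbor $u$ of $v$ holds at most $\Delta$ such colors (the colors of $u$'s immediate neighbors) and pipelines them across the single edge $(u,v)$ in $O(\Delta)$ rounds. Running this in parallel on all edges of $G$ takes $O(\Delta)$ rounds overall, after which every vertex knows an up-to-date snapshot of the colors currently used in its d2-neighborhood.

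We then eliminate the colors $c+1, c+2, \ldots, c+k$ one at a time. In the iteration devoted to color $i$, every vertex $v$ currently colored $i$ locally picks a color $c_v \in [c]$ that is unused among its d2-neighbors, commits to $c_v$, and sends the pair $(\ID(v), c_v)$ to each of its immediate neighbors in one round. In the next round each immediate neighbor $u$ forwards this pair along every other incident edge so that all d2-neighbors of $v$ can refresh their snapshots. Such a $c_v$ always exists because $c \geq \Delta(G^2)+1$ and $v$ has at most $\Delta(G^2)$ d2-neighbors, and the pair $(\ID(v), c_v)$ fits in a single $O(\log n)$-bit message.

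The observation that keeps each iteration cheap is that, since we always maintain a valid d2-coloring, the set of vertices currently colored $i$ forms an independent set in $G^2$. Consequently, for any vertex $u$, at most one of $u$'s immediate neighbors (and possibly $u$ itself) belongs to this class, so $u$ receives at most one announcement and forwards at most one $O(\log n)$-bit message along each incident edge during the iteration. The two communication rounds therefore proceed without congestion. Summing the setup cost and the $k$ iterations gives round complexity $O(\Delta + k)$, and the ``in particular'' claim follows by substituting $k = O(\Delta^2)$.

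The main obstacle --- and the reason the stated bound is $O(\Delta + k)$ rather than the naive $O(\Delta \cdot k)$ --- is exactly the d2-independence argument: a priori a node $u$ could have to relay color changes from up to $\Delta$ of its immediate neighbors per iteration, which would blow each iteration up to $\Omega(\Delta)$ rounds. The argument above rules this out by using the inductive invariant that the current coloring remains a proper d2-coloring after each parallel batch of recolorings. This invariant holds because every recoloring vertex $v$ chooses $c_v$ to avoid all current colors of its d2-neighbors, and simultaneously recoloring vertices are pairwise at distance $> 2$ in $G$, so their choices do not interfere.
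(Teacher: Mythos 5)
Your proof is correct and follows essentially the same route as the paper: an $O(\Delta)$-round setup in which each node learns the colors in its d2-neighborhood, followed by $k$ congestion-free recoloring phases justified by the observation that the set of simultaneously recoloring vertices is independent in $G^2$. The only cosmetic difference is that you retire one fixed color class per iteration while the paper recolors all vertices whose color exceeds $c$ and is a local maximum in their 2-hop neighborhood; both choices yield an independent set in $G^2$ and the same $O(\Delta+k)$ bound.
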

\begin{proof}
First, in $\Delta$ rounds, each vertex learns all colors in its $d2$-neighborhood. During the algorithm we maintain the invariant that each vertex always knows all used colors in its $d2$-neighborhood. 
Using this invariant, in one phase, any vertex $v$ that has a color $\alpha>c$ that is larger than any other color in its two hop neighborhood picks a color that is not used by any of its $d2$-neighbors (there is always a free color in the color space of size $\Delta(G^2)+1$ as at most $\Delta(G^2)$ colors can be used by neighbors) and informs its neighbors of its new color pick by broadcasting its new color pick for two hops. All nodes satisfying the requirement perform this recoloring step in parallel. 
There is no congestion during this broadcasting because after the first hop each node can receive a message from at most one of its neighbors as no vertices of distance two can both take part in the recoloring step in the same phase. The algorithm is iterated for $k$ phases.
\end{proof}

\subsection{Locally Iterative d2-Coloring}

In the following we design a {\congest} algorithm to find a proper $\Delta^2+1$-coloring of $G^2$ in $O(\Delta^2+\logstar n)$ rounds where, as in all previous sections, the communication network is the graph $G$ itself.  Our algorithm is based on the algorithm in \cite{BEG18} and we obtain the following theorem.

\medskip
\noindent \textbf{\Cref{thm:d2ColoringDelta}.} 
There exists a deterministic CONGEST algorithm that d2-colors a graph with $\Delta^2+1$ colors in $O(\Delta^2+\logstar n)$ rounds.

\medskip

\paragraph{Locally Iterative Coloring Algorithm:} The algorithm consists of three steps. First we compute a $10\Delta^2$-coloring of $G^2$ using the adaption of Linial's algorithm from \Cref{thm:Liniald2}. 
Then, using its input color each node $v$ computes a sequence $p_v(0),\ldots, p_v(q-1)$ of colors. The sequence is of length $q$ for some $q=O(\Delta^2)$ and  each color in the sequence is an element in the set $\{0,\ldots,q-1\}$. Then, in $q$ phases the vertices try to get colored. In phase $i$ each uncolored vertex $v$ tries to get permanently colored with color $p_v(i)$. Assume that vertex $v$ tries to get permanently colored with color $c$ in some phase; $v$'s try is successful if and only if none of its $d2$-neighbors is already colored with color $c$ or also tries to get color $c$ in phase $i$. This test can be performed in two communication rounds as follows (similar to the color try in \Cref{sec:randAlg}): Each vertex $v$ sends its \emph{candidate color} (or its permanent color if it is colored) for the current phase to its neighbors. Thus each neighbor $w$ of $v$ receives the candidate colors (and permanent colors) of all of its direct neighbors. Node $w$ reports back to $v$ whether $v$'s candidate color is in conflict with any of the candidates (or permanent colors) of $w$'s neighbors (including $w$ itself). Furthermore $w$ can perform this check for all of its neighbors in the same round without any congestion. After the $q$ phases
we reduce the number of colors to $\Delta(G^2)+1$ in $O(\Delta^2)$ additional rounds by using \Cref{thm:delta2_simple_color}.

\medskip 
The rest of the section is devoted to showing how vertices choose their sequence and why every vertex is colored with a color in $[q]$ after the $q$ phases. 
We first explain how vertex $v$ defines its sequence assuming that a $10\Delta^4$-coloring $\psi$ of $G^2$ is already given to the nodes. Recall that we compute such a coloring in $O(\Delta+\logstar n)$ rounds with \Cref{thm:Liniald2}. Afterwards we prove that every node is colored at the end of the process.
To determine their sequences vertices pick a common prime number $q$ with 
    $4\Delta^2<q<8\Delta^2$~.
Such a prime always exists due to Bertrand's postulate and nodes can locally obtain the prime if they know $\Delta$.\footnote{As the algorithm can also deal with already colored neighbors one can remove the assumption that nodes need to know $\Delta$ with the standard exponential doubling technique. } Then, associate each color in $\psi$ with a distinct polynomial $p: \mathbb{F}_q\rightarrow \mathbb{F}_q$ with coefficients in $ \mathbb{F}_q$ of degree at most $1$ and let $p_v$ be the polynomial associated with $\psi(v)$. Note, that such an assignment is possible as there are $q^2\geq 16\Delta^4\geq 10\Delta^4$ polynomials over $\mathbb{F}_q$ of degree at most $1$.\footnote{One can obtain an explicit mapping, e.g., by setting $p_v(x)=a_v+b_v\cdot x$ with $a_v=\lfloor\psi(v)/q\rfloor$ and $b_v=\psi(v)\mod q$. } 
The sequence of $v$ is defined by evaluating $p_v$ at the points of $\mathbb{F}_q$, that is, $v$'s sequence is $p_v(0),\ldots, p_v(q-1)$.
\medskip 

We call a phase \emph{blocked} for vertex $v$ if any of its $d2$-neighbors tries the same color as $v$ in this phase or some $d2$-neighbor is already colored with the color that $v$ tries in the phase. Thus $v$ is colored with a color in $[q]$ as soon as one phase is not blocked for $v$. We now upper bound the number of blocked phases.
\begin{lemma}
\label{lem:blockedPhases}
 Any vertex $v$ has at most $2\Delta^2$ blocked phases. 
\end{lemma}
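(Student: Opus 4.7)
The plan is to classify the blocked phases of $v$ into two types and then to bound each type by $\Delta^2$ using the polynomial structure of the sequences.

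First, I would fix a d2-neighbor $u$ of $v$ and call phase $i$ \emph{type-A blocked by $u$} if $u$ is still uncolored at the beginning of phase $i$ and $p_u(i) = p_v(i)$, and \emph{type-B blocked by $u$} if $u$ has already been permanently colored at the beginning of phase $i$ with the color $p_v(i)$. Every blocked phase of $v$ is of one of these two types for some $u\in N_{G^2}(v)$.

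For the type-A count, the key observation is that in the coloring $\psi$ of $G^2$, distinct d2-neighbors receive distinct colors, hence the polynomials $p_u$ and $p_v$ are distinct elements of $\mathbb{F}_q[x]$ of degree at most $1$. Therefore $p_u - p_v$ is a nonzero polynomial of degree at most $1$ and has at most one root in $\mathbb{F}_q$, so $u$ type-A blocks $v$ in at most one phase. Summing over at most $\Delta^2$ d2-neighbors gives at most $\Delta^2$ type-A blocked phases.

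For the type-B count I would split on whether $p_v$ has degree exactly $1$ or $0$. If $\deg p_v = 1$, then $p_v$ is a bijection on $\mathbb{F}_q$, so for each color $c$ used by a permanently colored d2-neighbor there is at most one phase $i$ with $p_v(i)=c$; since at most $\Delta^2$ d2-neighbors can contribute distinct colors, we get at most $\Delta^2$ type-B blocked phases. If $p_v$ is the constant polynomial $c_0$, the saving comes from a symmetry argument: suppose some d2-neighbor $u$ were permanently colored $c_0$, having succeeded in some phase $j$. Then in phase $j$, $v$ was uncolored (else we are not counting blockings of $v$) and also tried $p_v(j)=c_0$, which by the rules of the coloring try would have prevented $u$ from succeeding; this contradiction shows no d2-neighbor is ever permanently colored $c_0$, so there are zero type-B blocked phases. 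Either way the type-B count is at most $\Delta^2$.

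Adding the two bounds yields at most $2\Delta^2$ blocked phases, which is the claim. The only delicate point is the constant-polynomial case for type-B, which I expect to be the main obstacle since it prevents a direct ``$p_v$ is injective'' argument; the symmetric try-check rule is what rescues it.
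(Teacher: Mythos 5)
Your proof is correct and follows essentially the same route as the paper: the split into blocking by still-live versus already-colored d2-neighbors, the ``two distinct polynomials of degree at most one agree in at most one point'' argument, and the observation that the symmetric try rule rules out a neighbor adopting a constant $p_v$'s value all appear in the paper's proof. The only cosmetic difference is that for the colored-neighbor case you count per color via bijectivity of a degree-one $p_v$, while the paper counts per neighbor by treating each fixed color $c_u$ as a degree-zero polynomial distinct from $p_v$; both give the same $\Delta^2$ bound.
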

\begin{proof}
Let $u$ and $v$ be $d2$-neighbors. As $u$ and $v$ have different colors in $\psi$ they choose distinct polynomials $p_u$ and $p_v$. As $p_u$ and $p_v$ are non-equal polynomials of degree $1$ over a prime field there is at most 1 $i\in \mathbb{F}_q$ with $p_v(i)=p_u(i)$. Thus any $d2$-neighbor $u$ of $v$ can cause at most one blocked phase for $v$ while $u$ is still trying to get colored. Once $u$ is permanently colored with some color $c_u$, we again have $p_v(i)=c_u$ for at most one $i$, as $p_v$ and $c_u$ are both non-equal polynomials of degree at most one; note that $p_v\neq c_u$ because if $p_v$ is a polynomial of degree $1$ it is different from the constant polynomial $c_u$, if $p_v$ is a polynomial of degree $0$, i.e., a constant $c_v$ it is different from $c_u$ as $u$ can only choose $c_u$ if $c_u\neq p_v(i)=c_v$ for some phase $i$.  Thus any $d2$-neighbor $u$ can cause at most one blocked phase after it is permanently colored. 

In total any $d2$-neighbor $u$ can cause at most two blocked phases for $v$ and as there are at most $\Delta^2$ distinct $d2$-neighbors there are at most $2\Delta^2$ blocked phases for $v$.
\end{proof}
We can now show that the presented algorithm reduces an $O(\Delta^4)$ coloring to a $O(\Delta^2)$-coloring in $O(\Delta^2)$ rounds. 

\begin{theorem}
\label{thm:d2locIt}
There is a deterministic \congest algorithm that generates a $O(\Delta^2)$-coloring of $G^2$ in time $O(\Delta^2)$ given an $O(\Delta^4)$ coloring of $G^2$.
\end{theorem}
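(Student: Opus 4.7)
The plan is to run the locally iterative algorithm exactly as described before the theorem statement, seeded with the given $O(\Delta^4)$-coloring $\psi$, and verify two things: (a) every vertex acquires a color in $[q]$ within the $q = O(\Delta^2)$ phases, and (b) each phase is implementable in $O(1)$ CONGEST rounds.

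For the setup, each vertex $v$ reads its input color $\psi(v)$ and locally computes the associated degree-$\le 1$ polynomial $p_v$ over $\mathbb{F}_q$, where $q$ is the common prime with $4\Delta^2 < q < 8\Delta^2$ whose existence follows from Bertrand's postulate (and which every node can compute from the known upper bound $\Delta$). Since the number of such polynomials is $q^2 \ge 16\Delta^4$, which exceeds the size of the input palette, one can fix a globally known injection from input colors to polynomials, ensuring distinct d2-neighbors receive distinct polynomials. This entire polynomial setup is purely local and requires no communication.

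For correctness, fix a vertex $v$. By Lemma \ref{lem:blockedPhases}, $v$ has at most $2\Delta^2$ blocked phases across the whole execution. Since the total number of phases is $q > 4\Delta^2 > 2\Delta^2$, at least one phase $i$ is unblocked for $v$, and in that phase $v$ permanently acquires color $p_v(i) \in [q]$. By the definition of ``unblocked,'' no d2-neighbor simultaneously proposes or permanently holds $p_v(i)$, so the resulting coloring is proper on $G^2$ and uses at most $q = O(\Delta^2)$ colors.

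For the round complexity, each of the $q$ phases takes $2$ rounds: every vertex sends its current candidate (or permanent) color to its immediate neighbors, and each immediate neighbor $w$ then compares, for each of its neighbors, the proposed color against the colors it sees in $N_G(w)$ and reports conflicts back. Since each edge carries only $O(\log n)$ bits per direction per round, there is no congestion. The total cost is therefore $O(q) = O(\Delta^2)$ rounds. The main conceptual obstacle --- bounding the number of blocked phases by $2\Delta^2$ --- is already handled by Lemma \ref{lem:blockedPhases} via the fact that two distinct degree-$\le 1$ polynomials over a prime field agree on at most one point, so the remaining work is essentially bookkeeping of the CONGEST implementation.
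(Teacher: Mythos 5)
Your proposal is correct and follows essentially the same route as the paper: run the locally iterative algorithm seeded by the input coloring, invoke Lemma~\ref{lem:blockedPhases} to bound the number of blocked phases by $2\Delta^2 < q$, and observe that each of the $q = O(\Delta^2)$ phases is a two-round congestion-free color trial. The only cosmetic difference is that the paper fixes the input palette size as $10\Delta^4$ so that $q^2 \ge 16\Delta^4$ suffices for the injection into degree-$\le 1$ polynomials, whereas you phrase this for a generic $O(\Delta^4)$ palette; for a larger constant one simply takes a correspondingly larger prime $q = O(\Delta^2)$, which changes nothing.
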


\begin{proof}[Proof of \Cref{thm:d2ColoringDelta}]
We first prove that every vertex is colored after the last phase, that the coloring is proper, and that it uses at most $O(\Delta^2)$ colors. 
A vertex $v$ obtains a final color in the first phase in which $v$ is not blocked. Due to \Cref{lem:blockedPhases} there are at most $2\Delta^2$ blocked phases for a fixed vertex $v$. The total number of phases is $>2\Delta^2$, that is, each vertex has a phase in which it is not blocked and obtains a final color. The coloring is proper as $d2$-neighbors cannot get colored with the same color in the same phase as the color trial is negative in such a round and if one of them got colored before the other the latter one has to choose a different color. The coloring uses at most $q=O(\Delta^2)$ colors as the polynomials are evaluated over $\mathbb{F}_q$. 

The $q$ phases of the locally iterative algorithm need $O(\Delta^2)$ rounds as every color trial can be implemented in $O(1)$ rounds.
\end{proof}

\begin{proof}[Proof of \Cref{thm:d2ColoringDelta}]
The result follows by executing the algorithms of \Cref{thm:Liniald2}, \Cref{thm:d2locIt} and \Cref{thm:delta2_simple_color} in this order. 
\end{proof}


\section{Concentration Bounds}
We state here the key concentration bounds that we use.

%
Before we continue with the details of our algorithm we state the following standard Chernoff bound that we utilize frequently in our proofs. 

\begin{proposition}[Chernoff]
Let $X_1, X_2, \ldots, X_n$ be independent Bernoulli trials,
$X = \sum_{i=1}^n X_i$, and $\mu = E[X]$. Then, for any $0 < \delta \le 1$,
\begin{align}
\label{eq:chernoff-upper}
\Pr[X \ge (1+\delta)\mu] & \le e^{-\mu \delta^2/3}     \\
\label{eq:chernoff-lower}
\Pr[X \le (1-\delta)\mu] & \le e^{-\mu \delta^2/2}\ .
\end{align}
Also,
\begin{equation}
    X = O(\mu + \log n), \text{w.h.p.}
    \label{eq:concentration}
\end{equation}
\label{P:chernoff}
\end{proposition}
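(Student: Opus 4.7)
The plan is to prove the Chernoff bounds via the standard moment generating function (MGF) argument combined with Markov's inequality, and then derive the w.h.p.\ statement as a corollary by tuning the deviation parameter.

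First I would handle the upper tail. For any $t > 0$, Markov's inequality applied to $e^{tX}$ gives $\Pr[X \ge (1+\delta)\mu] \le E[e^{tX}]/e^{t(1+\delta)\mu}$. Using independence of the $X_i$ and the elementary bound $E[e^{tX_i}] = 1 + p_i(e^t-1) \le \exp(p_i(e^t-1))$ for $p_i = E[X_i]$, I obtain $E[e^{tX}] \le e^{\mu(e^t - 1)}$. Optimizing by choosing $t = \ln(1+\delta)$ yields the classical inequality
\begin{equation*}
\Pr[X \ge (1+\delta)\mu] \le \left(\frac{e^{\delta}}{(1+\delta)^{1+\delta}}\right)^{\mu}.
\end{equation*}
It then remains to check that $(1+\delta)\ln(1+\delta) - \delta \ge \delta^2/3$ for $0 < \delta \le 1$, which is a routine calculus exercise (e.g., by comparing Taylor expansions of both sides at $\delta = 0$ and noting the derivative inequality). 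This gives \eqref{eq:chernoff-upper}. The lower tail \eqref{eq:chernoff-lower} is proved symmetrically, using $t < 0$ and the inequality $(1-\delta)\ln(1-\delta) + \delta \ge \delta^2/2$ for $0 < \delta \le 1$.

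For the w.h.p.\ bound \eqref{eq:concentration}, I would split into two regimes based on the size of $\mu$. If $\mu \ge c \log n$ for a sufficiently large constant $c$, then \eqref{eq:chernoff-upper} with $\delta = 1$ immediately gives $\Pr[X \ge 2\mu] \le e^{-\mu/3} \le n^{-c/3}$, so $X = O(\mu)$ w.h.p. If $\mu < c \log n$, I would instead use the non-optimized form $\Pr[X \ge (1+\delta)\mu] \le (e/(1+\delta))^{(1+\delta)\mu}$ with $(1+\delta)\mu$ chosen as a sufficiently large multiple $K\log n$ of $\log n$; for $K$ large enough this is $\le n^{-\Omega(1)}$, giving $X = O(\log n)$ w.h.p. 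Combining the two cases gives $X = O(\mu + \log n)$ w.h.p., with the constant depending on the desired polynomial exponent.

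The only mildly delicate step is verifying the elementary inequality $(1+\delta)\ln(1+\delta) - \delta \ge \delta^2/3$ on $(0,1]$; everything else is bookkeeping. The w.h.p.\ statement is really just a matter of picking the right $\delta$ in each regime, so no substantial new idea is needed beyond the MGF derivation.
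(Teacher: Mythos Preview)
Your proof is correct and is the standard textbook derivation of the Chernoff bounds via the moment generating function and Markov's inequality, together with the usual case split on the size of $\mu$ for the w.h.p.\ statement. Note, however, that the paper does not actually prove this proposition: it is simply quoted as a standard concentration bound (``we state the following standard Chernoff bound that we utilize frequently in our proofs'') with no accompanying argument. So there is nothing to compare your approach against; you have supplied a full proof where the paper relies on the reader's familiarity with the result.
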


We also use the following inequality:
\begin{equation}
(1-1/x)^{x-1} \ge 1/e, \text{ for any } x > 1.
\label{eq:inv-e}
\end{equation}

\section{Deferred Proofs of Section \ref{sec:randAlg}}

\subsection{Deferred Proofs of Section \ref{ssec:similarity}}
\label{app:similarity}

\textsc{Theorem \ref{T:similarity}:} 
\emph{Let $k \in \{3,6\}$.
Let $u,v$ be d2-neighbors. 
If $(u,v) \in H_{1-1/k}$ (i.e., if $|S_v \cap S_u| \ge (1-1/(2k)) c_{10} \log n$), then they share at least $(1-1/k) \Delta^2$ common d2-neighbors, w.h.p.,
while if $(u,v) \not\in H$, then they share fewer than $(1-1/(4k)) \Delta^2$ common d2-neighbors, w.h.p.}
\begin{proof}
We give the proof only for $k=3$, for readability.
We indicated above how the case $\Delta^2 \le c_{10}\log n$ is treated; we assume from now that $\Delta^2 \ge c_{10}\log n$.

Let $I_{uv} = G^2[u]\cap G^2[v]$ be the intersection of the d2-neighborhoods of $u$ and $v$. 
For each $w \in I_{uv}$, let $X_w$ be the indicator r.v.\ that $w$ is
selected into the random sample $S$ and let $X = \sum_{w \in I_{uv}} X_w$. Note that $\mu = E[X] = c_{10} (\log n)/\Delta^2 \cdot |I_{uv}|$.

First, suppose $|I_{uv}| \le \nicefrac{2}{3}\, \Delta^2$.
Then $\mu \le \nicefrac{2}{3}\, (c_{10}\log n)$, and
by (\ref{eq:chernoff-upper}), the probability that $u$ and $v$ are neighbors in $H$ is bounded by
  \[ \Pr[uv \in H] = \Pr[X \ge \nicefrac{5}{6}\, c_{10} \log n]
\le \Pr[X \ge \nicefrac{5}{4}\,\mu] \le e^{-\nicefrac{2c_{10}}{3\cdot 48}\, \log n} \le n^{-c_{10}/72}\ . \]
Thus, setting $c_{10}$ large enough implies that the first half of the claim holds.

Now, suppose $|I_{uv}| \ge \nicefrac{11}{12}\, \Delta^2$.
Then, $\mu \ge \nicefrac{11}{12}\, (c_{10}\log n)$.
By (\ref{eq:chernoff-lower}), the probability that $u$ and $v$ are non-neighbors in $H$ is bounded above by
  \[ \Pr[uv \not\in H] = \Pr[X < \nicefrac{5}{6}\, c_{10} \log n]
\le \Pr[X < \nicefrac{10}{11}\,\mu] \le e^{-\mu/(2\cdot 11^2)} \le e^{-c_{10}/(2\cdot 11 \cdot 12)\, \log n} = n^{-\Omega(c_{10})}\ . \]
Thus, setting $c_{10}$ large enough implies that the second half of the claim holds.
\end{proof}

\textsc{Lemma \ref{L:rand-nbor}:}
\emph{A multiset $R_u$ of independent uniformly random $H$-neighbors of node $u$ can be generated in $O(|R_u|+\log n)$ rounds.}
\begin{proof}
Each d2-neighbor's string $r_w$ gets forwarded (by $u'$) with probability $2^{c_{11}}\log n / \Delta^2$. Thus, the number  $y_{u}$ of strings that get forwarded to $u$ is expected $E[y_u] = 2^{c_{11}} \log n$. Setting $c_{11} \ge \max(4, 1+\log c_3)$.
By Chernoff (\ref{eq:chernoff-lower}), $u$ receives at least $2^{c_{11}-1} \log n \ge c_3 \log n$ strings with probability $1-1/n^2$.
Thus, with probability at least $1-1/n$, all nodes receive at least $c_3 \log n$ strings.
Hence, the node $u$ will correctly identify the $H$-neighbor whose bitstrings have the smallest XOR with its random string, which gives a uniformly random sampling.
The probability that some pair of nodes receive the same bitstring is at most $2^{-4\log n} = n^{-4}$. Thus, with probability at least $1-1/n^2$, there is always a unique node with the smallest XORed string.

Independence follows because a collection $\{r_w\}_w$ of uniformly random bitstrings that is XORed with a particular (not necessarily random) bitstring $b_u$ forming collection $\{ b_u \oplus r_w\}_w$ stays uniformly random.
The same holds then for the collection of strings $\{b_u\}_u$
that is XORed with a string $r_w$ of a fixed node.
\end{proof}

\subsection{Deferred Proofs of Section \ref{ssec:dense}}
\label{app:struct}
\textsc{Observation \ref{O:sparse}:} \emph{
Every live node is solid after Step 1 of \alg{d2-Color}, w.h.p.}
\begin{proof}
Let $v$ be a node of leeway $\phi \ge c_1 \Delta^2$ at the end of Step 2.
Then, in each iteration of the step, the color tried by $v$ has probability $\phi/\Delta^2 \ge c_1$ of being previously unused by d2-neighbors of $v$. Furthermore, the probability that no other d2-neighbor tries the same color in the same round is at least $(1-1/(\Delta^2+1))^{\Delta^2} \ge 1/e$, applying (\ref{eq:inv-e}). Thus,
with probability at least $\phi/(e \Delta^2) \ge c_1/e$, $v$ becomes colored in that round. Hence, the probability that it is not colored in all $c_0 \log n$ rounds is at most $(1-c_1/e)^{c_0\log n} \le e^{-c_0 c_1/e \log n} \le n^{-c_0 c_1/e} \le n^{-3}$, applying the bound $c_0 \ge 3e/c_1$.
So, with probability at least $1-1/n^2$, all nodes have leeway at most $c_1 \Delta^2$ after Step 2.

From the contrapositive of Prop.~\ref{P:sparsity} we obtain that the sparsity of $v$ is at most $\zeta \le 4e^3 \phi$, w.h.p.
\end{proof}

\textsc{Lemma \ref{L:h-degree}:} \emph{Let $v$ be a node of sparsity $\zeta$.
Then,
\begin{enumerate}
    \item $v$ has at least $\Delta^2 - 8\zeta/k -4/k$ neighbors in $H_{1-k}$, and
    \item The number of nodes that are within distance 2 of $v$ in $H$ but are not d2-neighbors of $v$ is
$|N_{H^2}(v) \setminus N_{G^2(v)}| \le 6\zeta$.
\end{enumerate}
}
 \begin{proof}
\textbf{1}. 
Let $k' = 1 - k/4$.
A d2-neighbor of $v$ that is not a $H_{1-k}$-neighbor can share at most $k' \Delta^2$ common d2-neighbors with $v$, by Thm.~\ref{T:similarity}, while $H_{1-k}$-neighbors can share up to $\Delta^2-1$ d2-neighbors with $v$.
In other words, the d2-neighbors of $v$ can have degree at most 
$\Delta^2-1$ ($k'\Delta^2$) in $G^2[v]$ if they are $H$-neighbors (non-$H$-neighbors), respectively.
The number of edges in $G^2[v]$ is then at most
\[ \frac{1}{2}\left( |N_{H_{1-k}}(v)| \Delta^2 + (\Delta^2 - |N_{H_{1-k}}(v)|) k' \Delta^2\right) = 
\frac{\Delta^2}{2} \left(k' \Delta^2 + |N_{H_{1-k}}(v)| \frac{k}{4}\right) \ . \]
By the definition of sparsity, the number of edges in $G^2[v]$ equals $\Delta^2((\Delta^2-1)/2 - \zeta)$.
Combining the two bounds, 
\[ |N_{H_{1-k}}(v)| \frac{k}{4} \ge \Delta^2 - 1 - 2\zeta - k'\Delta^2 
  = \Delta^2 \frac{k}{4}  - 1 - 2\zeta \]
Namely, the number of $H_{1-k}$-neighbors of $v$ is bounded below by $\Delta^2 - 8\zeta/k -4/k$.

\textbf{2}. By sparsity, there are $\nicefrac{1}{2}\, \Delta^2(\Delta^2 - 2\zeta)$ edges within $G^2[v]$. Thus, there are at most $2\zeta \Delta^2$ edges of $H$ that have exactly one endpoint in $N_{G^2}(v)$. Nodes in $N_{H^2}(v)$ share at least $\Delta^2 - \Delta^2/3 - \Delta^2/3 = \Delta^2/3$ d2-neighbors with $v$. Thus, there are at most $6\zeta$ nodes in $H^2[v]$ that are not in $G^2[v]$.
\end{proof}

\textsc{Lemma \ref{L:H-neighbors}:} \emph{Let $v$ be a solid node.
Then,
\begin{enumerate} 
  \item $v$ has at least $\Delta^2/2$ $H'$-neighbors.
  \item Every $\hat{H}$-neighbor of $v$ has at least $\Delta^2/3$ $H$-neighbors.
  \item The degree sum in $N_{H'}(v)$ is bounded below by
      \[ \sum_{u \in N_{H'}(v)} deg_H(u) \ge |N_{H'}(v)| (\Delta^2 - c_8\phi), \]
     for constant $c_8 \le 4000$.
\end{enumerate}}
\begin{proof}
\textbf{1}. Let $\zeta$ be the sparsity of $v$ and $\phi$ be its leeway, which satisfy $\phi \le c_1 \Delta^2$ and $\zeta \le 4e^3\phi$, since $v$ is solid.
By Lemma \ref{L:h-degree}(1), $v$ has at least $\Delta^2 - 48\zeta - 24 \ge \Delta^2 - 50\zeta$ neighbors in $\hat{H}$, where $\zeta$ is the sparsity of $v$. At most $\phi$ of those nodes have more than one 2-path to $v$, since $v$'s slack is at most $\phi$. Hence, it has at least $\Delta^2 - 50\zeta - \phi \ge \Delta^2(1 - 201 e^3 c_1) \ge \Delta^2/2$ $\hat{H}$-neighbors with a single 2-path to $v$, using that $c_1 \le 1/(402 e^3)$.

\textbf{2}. 
Let $v$ be a solid node and $u$ a node in $\hat{H}[v]$. Let $X$ be the set of nodes in $G^2[v]$ that share at least $2\Delta^2/3$ d2-neighbors of $G^2[v]$ with $u$, and let $Y$ be the set of d2-neighbors of $u$ in $G^2[v]$.
We want to show that $|X \cap Y| \ge \Delta^2/3$.

Since a node in $\hat{H}[v]$ shares at least $5\Delta^2/6$ d2-neighbors with $v$, any pair of nodes in $\hat{H}[v]$ share at least $\Delta^2 - 2\Delta^2/6 = 2\Delta^2/3$ d2-neighbors in $G^2[v]$. Namely, $|X| \ge |N_{\hat{H}}(v)|$, and by Lemma \ref{L:H-neighbors}(1), $|N_{\hat{H}}(v)| \ge \Delta^2/2$. 
Since $u$ is in $\hat{H}[v]$, $|Y| \ge 5\Delta^2/6$.
Thus, $|X \cap Y| \ge |X| - |N_{G^2}(v)\setminus Y| \ge \Delta^2/2 - (1-5/6)\Delta^2 = \Delta^2/3$.

\textbf{3}. 
Since $v$ is solid, it has sparsity $\zeta \le 4e^3 \phi$. Thus, $G^2[v]$ contains $\binom{\Delta^2}{2} - \zeta \Delta^2$ edges. 
By Lemma \ref{L:h-degree}(1), $v$ has degree $\Delta^2 - 48\zeta - 24$ in $\hat{H}$. The at most 
$48\zeta+24$ nodes in $N_{G^2}(v) \setminus N_{\hat{H}}(v)$ have degree sum at most $(48\zeta+24) \Delta^2$. Thus, the number of edges in $\hat{H}[v]$ is at least  
$\binom{\Delta^2}{2} - (49\zeta +24)\Delta^2 \ge \binom{\Delta^2}{2} - (196e^3\phi+24)\Delta^2$.
Recall that at most $\phi$ nodes in $N_{\hat{H}}(v)$ can have multiple paths to $v$. 
Then, $H'[v]$ has at most $\phi \Delta^2$ fewer edges than $\hat{H}[v]$, which is still at least $\binom{\Delta^2}{2} - c_8\phi\Delta^2$, for $c_8 = 196e^3+2 \le 4000$. A pair of nodes in $H'[v]$ have at least $2\Delta^2/3$ d2-neighbors of $v$ in common, since they each have at least $5\Delta^2/6$ d2-neighbors in common with $v$. Thus each edge in $H'[v]$ connects $H$-neighbors. In other words, nodes in $H'[v]$ have $\Delta^2 - c_8 \cdot \phi$ $H$-neighbors, on average.
\end{proof}

\subsection{Deferred Proof of Section \ref{ssec:correctness}}
\label{app:correctness}

 \textsc{Lemma \ref{L:survival}}: \emph{Let $v$ be an active live node. 
Any given query sent from $v$ towards a node $w$ via a node $u\in H'\subseteq \hat{H}[v]$ survives with constant probability at least $c_6 \ge 1/7$, independent of the path that the query takes.}

\begin{proof}
We fix a particular query $Q$ that has been generated and we use the following notation for nodes on its way (in the case the query does not get dropped) $v-u'-u-w'-w$.
For the intermediate nodes $u'$ and $w'$, queries are only dropped if they are to continue towards the same next destination (i.e., to the same $u$ or $w$). We will account for their dismissal there. Namely, we have $v$  assign each query a random priority, and $u$ retains the received query with the highest priority. A drop at $u'$ is therefore also a drop at $u$.

At the remaining nodes, $u$ and $w$, we bound the expected number of queries arriving -- other than $Q$ -- from above by some constant $c$. By Markov inequality, the number of other queries received is then at most $2c$, with probability at least $1/2$. The probability of surviving the culling with $2c$ other competitors is $1/(2c+1)$. 
Thus, with probability at least $1/(2(2c+1))$, the query $Q$ survives this stage.


\textit{Being dropped at $u$:} After the second round, the node $u$ has at most $(24e^3+1)\phi$ live $H$-neighbors, since $v$ has at most $\phi$ live d2-neighbors and by Lemma \ref{L:h-degree}(2) and Obs.~\ref{O:sparse}, 
there are at most $24e^3\phi$ live nodes that are $H^2$-neighbors of $v$ but not $G^2$-neighbors of $v$. $u$ receives a query from each with probability $1/(6000\phi)$, for an expected at most $(24e^3+1)\phi/(6000\phi) \le 1/12$ queries.

A query can also be dropped in Step 2 if it arrives at a node in $\hat{H}[v] \setminus H'$, but the lemma is conditioned on queries sent towards nodes in $H'$, i.e., $u\in H'$. 


\textit{Being dropped at $w$:} Finally, we consider a node $w$ at the end of round 2 of Step 4. $w$ has at most $\Delta^2$ $H$-neighbors. Each of its $H$-neighbors with a live $\hat{H}$-neighbor has $H$-degree at least $\Delta^2/3$ by Lemma \ref{L:H-neighbors}(2), 
and has expected at most $1/12$ query to send to a random $H$-neighbor. 
Hence, the expected number of other queries that $w$ receives is at most $3/12=1/4$. 

\textit{Combined probability of being dropped:} The probability of survival is at least $c_6 \ge 1/(2(1/6+1)) \cdot 1/(2(1/2+1)) \cdot = 3/21 = 1/7$.
\end{proof}

 \textsc{Lemma \ref{L:proposal-expected}}: \emph{ 
 An active live node receives at most one proposal in expectation. This holds even in the setting where no queries are dropped. }
 \begin{proof}
  Let $P_v$ be the set of nodes that are $H$-neighbors of $H'$-neighbors of $v$ but are not d2-neighbors of $v$. These are the only nodes that generate proposals for $v$ in Step 5. Each $H'$-neighbor $u$ of $v$ receives a query from $v$ with probability $1/(6000\phi)$ and sends it to a random $H$-neighbor. $u$ has at least $\Delta^2/3$ $H$-neighbors by Lemma \ref{L:H-neighbors}(2). Thus, the probability that a given node in $P_v$ receives a query involving $v$ in Step 5 is at most $1/(6000\phi) \cdot \Delta^2 \cdot 3/\Delta^2 = 3/(6000\phi)$. By Lemma \ref{L:h-degree}(2) and Obs.~\ref{O:sparse}, $P_v$ contains at most $24 e^3 \phi$ nodes. Hence, the expected number of proposals $v$ receives from Step 5 is $24 e^3\phi \cdot 3/(6000\phi) = 72e^3/6000 \le 1/4$.

We next bound the expected number of proposals due to Step 3.
The probability $p_u$ that a given $H'$-neighbor $u$ of $v$ picks a color that is not used among its $H$-neighbors, over the random color choices, is 
\[p_u = (\Delta^2 - deg_{H}(u))/deg_{H}(u) \le 3(1 - deg_H(u)/\Delta^2)\ , \] 
where we used Lemma \ref{L:H-neighbors}(2) in the inequality.
The probability that a random query from $v$ leads to a color proposal is then 
  \[ \frac{\sum_{u \in H'[v]} p_u}{|N_{H'}(v)|} \le \frac{1}{|N_{H'}(v)|\Delta^2} \sum_{u \in H'[v]} 3 \left(\Delta^2 - deg_H(u)\right) \le  \frac{c_8 \phi}{|N_{H'}(v)|}\ ,  \]  
applying Lemma \ref{L:H-neighbors}(3). The expected number of queries sent from $v$ to $H'$-neighbors to $N_{H'}$ is $|N_{H'}(v)|/(6000\phi)$, so the expected number of proposals that $v$ receives is $c_8/6000 \le 2/3$.
\end{proof}

\textsc{Lemma \ref{L:proposal-tried}.} \emph{Let $v$ be an active live node.
  Conditioned on the event that a particular color is proposed to $v$, the probability that $v$ \emph{tries} the color is at least $c_9 \ge 1/6$.}
\begin{proof}
By Lemma \ref{L:proposal-expected}, $v$ receives at most one proposal in expectation.
By Markov's inequality, $v$ receives at most two proposals, with probability at least $1/2$. Let $m$ be a specific proposal. 
Conditioned on $m$ being proposed, $v$ receives at most three proposals, with probability at least $1/2$. Hence, with probability at least $1/6$, $m$ will be the proposal selected to be tried.
\end{proof}

\subsection{Deferred Proof of Section \ref{ssec:improved}}
\label{app:improved}

\textsc{Theorem \ref{T:learnpalette}:} 
\emph{The time complexity of \alg{LearnPalette}($\varphi$) with $\varphi=O(\log n)$ is $O(\log n)$, when $\Delta = \Omega(\log n)$.}
\begin{proof}
Each live node has $\Omega(\Delta^2)$ neighbors and selects $Z$ of them uniformly to become handling nodes. Thus, each node has probability $O(Z/\Delta^2)$ of becoming a handling node for a given live d2-neighbor, and since it has $O(\varphi)$ live d2-neighbors (by the precondition), it becomes a handling node for an expected $O(\varphi \cdot Z/\Delta^2)$ live nodes. 
\begin{itemize}
\item The flooding in Step 2 takes as many rounds as there are live nodes in a immediate neighborhood, which is $O(\varphi)$.
\item In Step 3 involves sending a single message to each handling node (of each live node), which is easily done in $O(1)$ expected time, or $O(\log n)$ time, w.h.p.
\item In Step 4, each node forwards expected $P/\Delta$ messages from each handling immediate neighbor, and it has $O(Z\varphi/\Delta)$ such immediate neighbors. 
Thus, it sends out $O(P Z\varphi/\Delta^2)$ messages to random neighbors, or $O(P Z \varphi/\Delta^3)$ message per outgoing edge, w.h.p. This takes time $O(PZ \varphi/\Delta^3)$. 
\item In Step 5, a colored node needs to forward its color to the handling node $z_v^i$ of each of its $O(\varphi)$ live d2-neighbors. Since it is sent along $\Delta^2/P \log n$ paths, and due to the conductance,  w.h.p., the color reaches a node in $Z_v^i$ (the set of nodes informed of $z_v^i$). 

The path from a colored node $u$ to a handler $z_v^i$ for a live node has two parts: the path $p_{u,w}$ from $u$ to an node $w$ that knows the path to $z_v^i$, and path $q_{w,z_v^i}$ from $w$ to $z_v^i$. 
A given node $a$ has probability $1/\Delta^2$ of being an endpoint of a given path $p_{u,w}$ (from a d2-neighbor); there are $O(\varphi)$ live d2-neighbors (by the precondition, weaker form), $\Delta^2$ colored d2-neighbors, and $\Delta^2/P \log n$ copies of messages sent about each. Thus, a given node has expected 
\[ O(\varphi \cdot \Delta^2 \cdot \Delta^2/P \log n \cdot 1/\Delta^2) = O(\varphi/P \cdot \Delta^2 \log n) \] 
random paths going to it.
Similar argument holds for a node being an intermediate node on a path $p_{u,w}$: the number of immediate neighbors goes to $\Delta$, while the probability of being a middle point on the path goes to $1/\Delta$, resulting in the same bound.
Thus, the load on each edge is $O(\Delta \varphi/P \log n)$. 

For the $q$-paths, the main congestion is going into the handler. Observe that there are only $O(\log n)$ $p$-paths that reach an informed node $w$. Hence, the number of paths going into a given handler is the product of the size of the block, times $\log n$: $\Delta^2/Z \cdot \log n)$. So, the load on an incoming edge into a handler is $O(\Delta/Z \log n)$, w.h.p. (when $Z = O(\Delta)$).

\item The pipelining of Steps 6-7 takes $O(|T_v|)$ rounds, and by Lemma \ref{l:last}, $|T_v| = O(\log n)$, w.h.p. 

\end{itemize}

To summarize, the dominant terms of the time complexity are $O(\log n)$ (Steps 2, 6-7), $O(PZ\varphi/\Delta^3) = O(PZ (\log n)/\Delta^3)$ (Step 4), $O(\Delta\varphi/P \log n) = O(\Delta(\log n)^2/P)$ (first half of Step 5), and $O(\Delta/Z \log n)$ (second half of Step 5).

Optimizing, we set $Z = \Delta$ and $P = \Delta \sqrt{\Delta\log n}$, 
for time complexity of $O(\log n (1 + \sqrt{(\log n)/\Delta}))$, which is $O(\log n)$ when $\Delta = \Omega(\log n)$. 
\end{proof}

\end{document}